\newtheorem{theorem}{Theorem}
\title{Two predators one prey model that integrates the effect of supplementary food resources due to one predator's kleptoparasitism under the possibility of retribution by the other predator}
\author{Debasish Bhattacharjee$^1$,Dipam Das$^2$, Santanu Acharjee$^3$,Tarini Kumar Dutta$^4$\\
$^{1,2,3}$Department of mathematics,Gauhati University,Assam,India\\
$^4$Department of Mathematics, Assam Don Bosco University, Assam, India\\
e-mails:$^{1}$debabh2@gmail.com,
$^{2}$pontu.dd@gmail.com,
$^{3}$sacharjee326@gmail.com
$^{4}$tkdutta2001@yahoo.co.in }
\date{}
\begin{document}

\maketitle
\begin{abstract}

In ecology, foraging requires animals to expend energy in order to obtain resources. The cost of foraging can be reduced through kleptoparasitism, the theft of a resource that another individual has expended effort to acquire. Thus, kleptoparasitism is one of the most significant feeding techniques in ecology. In this study, we investigate a two predator one prey paradigm in which one predator acts as a kleptoparasite and the other as a host. This research considers the post-kleptoparasitism scenario, which has received little attention in the literature. Parametric requirements for the existence as well as local and global stability of biologically viable equilibria have been proposed. The occurrences of various one parametric bifurcations, such as saddle-node bifurcation, transcritical bifurcation, and Hopf bifurcation, as well as two parametric bifurcations, such as Bautin bifurcation, are explored in depth.  Relatively low growth rate of first predator induces a subcritical Hopf bifurcation although a supercritical Hopf bifurcation occurs at relatively high growth rate of first predator making coexistence of all three species possible. Some numerical simulations have been provided for the purpose of verifying our theoretical conclusions.

\end{abstract}

\textbf{2020 AMS Classifications:} 92D40, 92D25, 34D20, 34C23\\

\textbf{Keywords:} Predator–prey; kleptoparasitism; inter-specific competition; stability; bifurcation

\section{Introduction}
\pagenumbering{arabic}
 Predator-prey relationships are a pervasive and diverse natural phenomenon that exerts a significant influence on population dynamics. As such, ecologists have devoted considerable time and resources to their study, with mathematical modelling being a valuable tool for understanding the underlying rationale and strategies that govern these interactions. The study of the dynamics of predator-prey systems dates back to the works of Lotka \cite{lotka} and Volterra \cite{Vol}. The Lotka-Volterra model, which describes the interaction between two species using a set of non-linear coupled first-order ordinary differential equations, is the fundamental model. 
 In order to establish the validity of modelling expressions in comparison to reality, numerous modifications have been offered since then to account for a range of ecological phenomena. Despite the fact that numerous ecological models have been developed and have significantly aided our understanding of how prey and predator interactions operate within a system, numerous ecological systems are emerging that do not fall within the collection of mathematical transformations of prey-predator ecological systems. Starting with a simple one-prey, one-predator interaction environment,  mathematical models are evolving to incorporate various ecological complexities such as the allee effect \cite{Moro, SenD, SahaS, Ali}, fear effect \cite{magh, Jyo, Yang}, memory effect \cite{Ghos, Ali, Yang} and others, which exist even in ostensibly simple ecosystems and are a recent area of interest.

 \par The addition of additional species or population divisions to the simple prey-predator system framework and the quantification of the qualitative complexity of the ecosystem constitutes a second avenue for the advancement of mathematical modelling that is consistent with reality. Numerous academicians have devised mathematical models depicting interactions between two predators and a single prey\cite{Long,Gao,Dji,Ducr,Xu,muk,kar,Dia,Har,Sav,Jau}, two prey and a single predator\cite{magh,Saho}, two predators and two prey species\cite{Abhi,Farh}, etc. Such expansions might offer insight on the intricacy of species relationships outside of the usual prey-predator conflict, in addition to outlining other ecological complications. In this endeavor, we incorporate a model with two predators and one prey,  enriched with complex interactions of prey-predator and predator-predator.\\
 \par Prey-predator interaction and predator-predator interaction are the principal kinds of direct interactions that transpire between populations in ecological systems when two predator species coexist with the same prey species, and the heart of the dynamics of the ecosystem is the prey-predator interaction or the functional response. 
 The term "functional response of the predator" refers to the rate at which a predator consumes its prey. It is one of the important prey-predator interactions which is responsible for death of both prey and sometimes predator populations. This rate is measured by the average number of prey that is killed by each predator per unit of time. Different types of functional responses have been documented within the literature of theoretical ecology. Holling \cite{holl} described three distinct functional responses,  Holling Type-I, Holling Type-II, and Holling Type-III, to explain the effects of various forms of predation.There are numerous other types of functional responses, each with its own distinctive properties\cite{Hale,Mark,Free,Har,Hol,holl}. In this paper, we employ the Holling type-I functional response, which predicts a linear increase, i.e., the time required for a predator to handle a food item is negligible or that eating doesn't get in the way of foraging for more food. Prey-predator models with a variety of functional responses have been investigated by several researchers \cite{Dia,SahaS,Didi}.\\

  \par When two predator species subsist off of the same prey, it is impossible to disregard their competition for resources, as competition for resources is common in nature and society \cite{Jau}. Competition constitutes one of the most important predator-prey interactions. In conformance with this concept, numerous researchers \cite{Dia,muk,dub,Sav} have considered a three-species predator-prey system in which the two predators compete for the same prey. Our model accounts for the interspecific animosity between the two predator species.\\
  
 \par  This research investigates another type of predator-predator interaction found in nature: 
 "kleptoparasitism" . Although kleptoparasitism is one of the most prevalent foraging strategies in nature for procuring scarce resources and many animals rely heavily on it for subsistence, it has received little consideration in the construction and analysis of predator-prey population models. Although kleptoparasitism has been implicitly incorporated into a small number of mathematical models of ecosystems. For instance, many ecologists \cite{Ruxo,Ross,Spen,Hadj} discussed a game-theoretical model of kleptoparasitism. Broom et al.\cite{Mark,Yate} discussed a stochastic model of kleptoparasitism. But in these models, ecologists have focused only in the kleptoparasitism scenario, but the whole population dynamics in long term is not shown. Focardi et al \cite{Stef} investigated the role of scavenging in predator and multiprey dynamics in anorthern Apennine system in Italy. Materassi et al. \cite{Mass} are the first to explicitly introduce kleptoparasitism in trophic web models capable of explaining long-term population dynamics. Other than these two works of literature, the authors are not aware of any other predator-prey models that mathematically explore the kleptoparasitism effect in long-term population dynamics.\\

 \par Kleptoparasitism is the feeding strategy where one animal takes resources from another by means of stealing. Rothschild and Clay \cite{ROT} coined the word "clepto-parasitism" or "kleptoparasitism" to refer the act of one species' members stealing food from the other's members of a different species. The same behaviour is also referred to as ‘piracy’ by many authors \cite{MEI,ash} as well as ‘food parasitism’ \cite{hop}, and ‘pilfering’  \cite{ran}. This feeding strategy becomes more important when food is scarce in nature or it takes too much energy to hunt.  Kleptoparasites obtain food without actually expending time and energy searching for and capturing prey. Kleptoparasitism allows the shift of investment from time and energy from foraging to other activities, such as searching for mates or avoidance of predators \cite{Iyen}. Thus, kleptoparasitism boosts the growth rate of that species. Many examples of kleptoparasitism are found in nature. For example, kleptoparasitism in ant-eating spiders of the genus Zodarion. In the field study, it is found that adult females consistently hunted actively, while adult males ceased active prey capture and instead engaged in kleptoparasitism heavily to maximize mating opportunities\cite{Mart}.Kleptoparasitism is widespread among birds and is an important feeding technique of some families of seabirds, notably Fregatidae (frigatebirds), Chionididae (sheathbills), Stercoraridae (skuas), and Laridae (gulls and terns). For instance Skuas and frigatebirds, obtain a large proportion of their food by kleptoparasitism of heterospecifics\cite{Hock}.  
Spotted hyenas (Crocuta crocuta) and lions (Panthera leo) are potentially serious competitors for food, as they show a significant overlap in their diet \cite{Kru,Mil}. Both species are subject to kleptoparasitism from each other \cite{HON,hon,per}. We address the effects of kleptoparasitism on two predators and one prey in our model, which further incorporates the post-kleptoparasitism repercussions in terms of energy loss from the predator's counter-attack, despite the fact that this aggression is not fatal. Transformation of the immensely natural post-kleptoparasitism effect into a mathematical form and incorporation into a mathematical model to demonstrate its long-term influences in population dynamics is coined for the first time so far as the author's knowledge.   \\

\par Risks are associated with kleptoparasitism. The reaction of the food owner towards a kleptoparasite differs depending on its dominance status. In many cases, it can be seen that the food owner engages in a counter-attack on the Kleptoparasite\cite{Hans}. We can find evidence of counter-attack during kleptoparasitism in nature. Black-headed gulls, Larus ridibundus, rely on surprise and agility to steal prey from a more dominant species. It is observed that black-headed gulls( Larus ridibundus) successfully grab in a flight  worm from the bill of a curlew(Numenius arquata), while withdrawing in time to escape the counter-attack of the startled victim\cite{Ens}.  Experimentally, Rooks(Corvus frugilegus) are found to be especially susceptible to kleptoparasitism. Although, sometimes Rooks make counter-attacks on the kleptoparasite\cite{Hans}. The kleptoparasite can mitigate the impact of these counter-attacks by engaging more individuals in kleptoparasitism because the effects of the reprisal can be distributed among the kleptoparasites.  \\

\par In endorsement of the aforementioned natural phenomena, we delve into a mathematical model of two predators vying for the same prey while one predator species engages in kleptoparasitism with the possibility of a counter-attack. In Section 2, the mathematical modelling is presented along with an explanation of its components. Section 3 covers areas such as the solution's positivity and boundedness, local stability as well as global stability  analysis.Existence of different types of bifurcations, direction and stability of Hopf bifurcations are demonstrated in section 4. In section 5, we validate the results through various numerical simulations, and in section 6, we derive the necessary conclusions.

\section{\textit{Model construction}} Within this section, a mathematical model is developed based on our observations of the phenomenon of kleptoparsitism among predator species. Prior to formulating the model, the following assumptions are established:
\begin{enumerate}
\item We consider a scenario wherein two predator species vie for the same prey. At any given time $t$, $s$ symbolizes the population size of the prey species, while $p_1$ and $p_2$ represent the population sizes of the first and second predator species, respectively.\\
\item We assume that prey growth is logistic in nature in the absence of the predator, owing to limited resources in nature. The carrying capacity of the ecosystem is considered to be $k$.\\
\item We make the
 assumption that the first predator(of size $p_1$) eats up the prey(of size $s$) in accordance with the Holling Type-I functional response. The second predator utilises the same Holling Type-I functional response to consume its prey as the first. We hypothesise that the prey is very small in comparison to both predator species, as a result, handling time is minimal, and it therefore qualifies for the Holling type I functional response.
 
\item Competition arises when the presence of other individuals reduces one's ability to obtain resources. Predator competition can arise when members of the same or different species compete for the same prey. 
We assume that these two predators are engaged in interspecific competition with each other.

    
    \item We assume a natural death in either of the two predator species.  The aforementioned assumptions lead to the following model equations:
   \begin{equation} \label{a}
       \begin{split}
           &\frac{ds}{dt}=r_1s(1-\frac{s}{k})-sp_1r_2-sp_2r_3 \\
           &\frac{dp_1}{dt}=r_4r_2sp_1-k_1p_1p_2-d_2p_1 \\
           &\frac{dp_2}{dt}=r_5sp_2r_3-k_2p_1p_2-d_3p_2
       \end{split}
   \end{equation}

This mathematical model is studied by Hsu et al. \cite{Hsu}. The authors studied the relationship between the  coefficient of interference and competition outcome.

 \item Addition to (\ref{a}), we consider that second predator involves in kleptoparsitism and steals food or resources from the first predator and apart from that second predator also hunts on its prey(of size $s$). So, in this study, the second predator is regarded as the kleptoparasite and the first predator as the host. In kleptoparasitism behaviour, the predator that is the victim of kleptoparasitism(host) faces a heavy loss in their growth rate, while the predator that performs kleptoparasitism(kleptoparasite) enjoys a positive effect on their growth.
 Let  $\psi=\frac{1}{1+ap_2}$, where 
    $\psi$ denotes the fraction of s killed by per capita of first predator and really eaten by them,  i.e.converted into their biomass. So, $\psi \in$ (0,1) and (1-$\psi$) will denote the fraction of s killed by per capita of first predator species but not eaten by them. Kleptoparasitism by the second predator costs per capita of first predator a fraction $(1-\psi)$ of their food. Here \emph{a} denotes the intensity of kleptoparasitism. Evidently, the following ecological hypotheses are satisfied by the function $\psi$:\\
    (\romannum{1}) $ \lim_{a \to 0} \psi =1$ ; i.e., if kleptoparsitism intensity of the second predator is negligible then the first predator eats all their kills.    \\
(\romannum{2}) $\lim_{a \to \infty} \psi =0$ ; i.e., every kill of the first predator will be plundered if the second predator's rate of kleptoparasitism rises significantly, leaving nothing for the first predator to imbibe.
 \\
Total food for the first predator by hunting is = $r_2 s p_1$\\
Incorporating kleptoparasitisim by the second predator, the amount of food for the first predator will be reduced to $\frac{r_2 s p_1}{1+ap_2}$, which will be converted to their biomass with conversion rate $r_4$.
\item  The amount of food available for the second predator to consume through kleptoparasitism is $\frac{s p_1 p_2 r_2 a}{1+ap_2}$. Let the conversion rate of the food acquired through kleptoparasitism by the second predator species be $r_7$. The second predator is considered to be capable of kleptoparasitism either individually or collectively, whereas the first predator pursues independently. We assume that $r_7$ is dependent on the population size of the second predator. This is a consequence of the post-kleptoparasitism effect, which is the first predator's counter-attack or apprehension of one. When the second predator engages in kleptoparasitism over the first predator, there is a possibility of a counter-attack by them due to rage. It is obvious that the second predator can provide a better group defence for their stolen food from the first predator as its size increases. Hence , the remaining amount of  stolen food after counter-attack is dependent directly to the population size of the second predator. So we may assume that the conversion rate of the stolen food is dependent on the population size of the second predator.\\

 There is a possibility that as the second predator's group increases, the concern of a first predator's reprisal diminishes, which could lead to an increase in the conversion rate. According to the aforementioned situation, it is reasonable to infer that the population size of the second predator affects how much of the stolen food is converted.
We assume $r_7 =r_6^{'} (1+ap_2)$, where $r_6^{'}$ is proportionality constant and 
$ \lim_{p_2\to 0} r_7 =r_6^{'}$.
 Therefore, the amount $\frac{s p_1 p_2 r_2 a}{1+ap_2}$ will get converted into the biomass of the second predator, and the term will convert into
    $\frac{r_7 s p_1 p_2 r_2 a}{1+ap_2}= r_6^{'} a r_2 s p_1 p_2=r_6sp_1p_2$
  , where $r_6$=$r_6^{'}ar_2$.\\
 \item If the first predator is successful in driving the second predator away from the food by mounting a counter-attack after being kleptoparasitized by the second predator, the stolen food which is recovered from the second predator is assumed not to contribute to its rate of development.

     \end{enumerate}

     Under these new circumstances, the system (\ref{a}) changes into the following modified version:\\
    \begin{equation} \label{b}
        \begin{split}
            &\frac{ds}{dt}=r_1s(1-\frac{s}{k})-sp_1r_2-sp_2r_3 \\
            &\frac{dp_1}{dt}=\frac{r_4r_2sp_1}{(1+ap_2)}-k_1p_1p_2-d_2p_1\\
            &\frac{dp_2}{dt}=r_5sp_2r_3-k_2p_1p_2-d_3p_2 +r_6p_1p_2s
        \end{split}
\end{equation}
with initial conditions: $s(0)= s^0>0 ,p_1(0) = p_1^0 >0$ and $p_2(0)=p_2^0>0$\\

Here, '$r_1$' represents the intrinsic growth rate of the prey species, '$r_2$' and '$r_3$' represent the first and second predator's predation rates, respectively. '$r_4$' and '$r_5$' represent the conversion efficiencies of the first and second predators, respectively, in the event of direct predation. '$k_1$' and '$k_2$' represent interspecific competition rates for the first predator and second predator respectively, '$d_2$' and '$d_3$' represents death rates of the first and second predator, respectively. '$k$' is the environmental carrying capacity of prey, and '$a$' is the intensity of kleptoparasitism. \\

For further simplification of the model, we contemplate the following transformations
 $$ s=Sk , p_1= \frac{P_1r_1}{r_2} , p_2=\frac{P_2r_1}{r_3},  T=r_1 t $$  and the  biosystem (\ref{b}) reduces to the following system (using t instead of T for notational convenience)
\begin{equation} \label{c}
    \begin{split}
       & \frac{dS}{dt}=S(1-S)-SP_1-SP_2\\
        & \frac{dP_1}{dt}=\frac{A_4SP_1}{(A_1+P_2)}-A_2P_1P_2-A_3P_1\\
        & \frac{dP_2}{dt}=A_5SP_2-P_1P_2A_6-A_7P_2+A_8P_1P_2S \\
    \end{split}
\end{equation}
With initial conditions $S(0)=S^0, P_1(0)=P_1^0$ and $P_2(0)=P_2^0$ .\\
Here $ A_1= \frac{r_3}{ar_1}$ , $A_2=\frac{k_1}{r_3}$ , $A_3=\frac{d_2}{r_1}$, $A_4=\frac{r_2r_4kr_3}{ar_1^2}$, 
$A_5=\frac{r_5kr_3}{r_1}$, $A_6=\frac{k_2}{r_2}$, $A_7=\frac{d_3}{r_1}$, and $A_8=\frac{r_6k}{r_2}$.
\\


\section{\textit{Ecological feasibility and sustainability of the model}}
The survival of species is an integral component of an ecosystem and is dependent on it directly or indirectly. Consequently, the species cannot expand indefinitely with finite resources. As a result, a model must satisfy both the positivity and boundedness of a species in order to accurately represent a particular ecosystem. In this section, we evaluate positivity and boundedness to determine ecological feasibility, followed by the equilibrium points of the proposed system (3) and their stability to ensure its long-term viability.

\subsection{\textit{Positivity and Boundedness }} 
Since the solution of the system (\ref{c}) indicates populations, thus it is crucial to demonstrate that it is both positive and bounded. To begin proving that our current system (\ref{c}) has only positive solutions, we may express it as follows:
\begin{equation} \label{d}
\begin{split}
    &\frac{dS}{dt}=S\phi_1(S,P_1,P_2) \\
    &\frac{dP_1}{dt}=P_1\phi_2(S,P_1,P_2) \\
    &\frac{dP_2}{dt}=P_2\phi_3(S,P_1,P_2) \\
\end{split}
    \end{equation} 
    where,  $\phi_1(S,P_1,P_2)=(1-S)-P_1-P_2$ ,  $\phi_2(S,P_1,P_2)=\frac{A_4S}{(A_1+P_2)}-A_2P_2-A_3$ , $\phi_3(S,P_1,P_2)=A_5S-P_1A_6-A_7+A_8P_1S$.
\\

   We now apply the following theorem to demonstrate that all the solutions of system system (\ref{c}) are positive.
   
\begin{theorem}\label{t1}
\cite{kar} Any solution of the differential equation $\frac{dX}{dt}=X \xi(X,Y)$ is always positive.
\end{theorem}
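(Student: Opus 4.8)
The plan is to treat the scalar equation $\frac{dX}{dt}=X\,\xi(X,Y)$ as a \emph{linear} non-autonomous equation in the unknown $X$ once a trajectory has been fixed, and then exploit the integrating-factor (exponential) representation of its solution, which is manifestly sign-preserving. Throughout, the tacit hypothesis is that the initial datum is positive; the conclusion is that positivity is then propagated along the solution.

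First I would fix a solution $\bigl(X(t),Y(t)\bigr)$ on its maximal interval of existence $[0,\tau_{\max})$ with $X(0)=X^{0}>0$, and define $g(t):=\xi\bigl(X(t),Y(t)\bigr)$. Since $\xi$ is continuous and the components of the solution are continuous in $t$, the map $g$ is a continuous function of $t$ alone on $[0,\tau_{\max})$, hence locally integrable there. This reduces the problem to the linear equation $\dot X = g(t)\,X$.

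Next I would multiply by the integrating factor $\exp\!\left(-\int_{0}^{t} g(s)\,ds\right)$ and integrate, obtaining the closed form
\[
X(t)=X^{0}\,\exp\!\left(\int_{0}^{t} g(s)\,ds\right),\qquad t\in[0,\tau_{\max}).
\]
Because the exponential factor is strictly positive and finite for every finite $t<\tau_{\max}$ and $X^{0}>0$, it follows that $X(t)>0$ on the whole interval of existence. The very same identity shows that $X^{0}=0$ forces $X\equiv 0$ and $X^{0}<0$ forces $X(t)<0$, so the hyperplane $\{X=0\}$ is invariant and the positive half-space is forward invariant; applied to $X\in\{S,P_{1},P_{2}\}$ with $\xi\in\{\phi_{1},\phi_{2},\phi_{3}\}$ from system~(\ref{d}), this gives positivity of all three populations.

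I do not anticipate a genuine obstacle here: the argument is a routine separation-of-variables/integrating-factor computation. The only point deserving care is that the exponential representation is valid only as long as the solution exists, so strictly speaking one concludes positivity on the maximal interval of existence; combined with the a priori boundedness estimates established in the next subsection (which preclude finite-time blow-up), this upgrades to positivity for all $t\ge 0$.
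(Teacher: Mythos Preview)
Your argument is correct: freezing the trajectory, writing $g(t)=\xi(X(t),Y(t))$, and using the integrating-factor representation $X(t)=X^{0}\exp\!\bigl(\int_{0}^{t}g(s)\,ds\bigr)$ is the standard way to show sign-preservation for equations of the form $\dot X = X\,\xi(X,Y)$ with positive initial data.

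As for comparison: the paper does not actually supply a proof of this statement. Theorem~\ref{t1} is quoted verbatim from reference~\cite{kar} and invoked as a black box in the subsequent positivity theorem; no argument is given in the paper itself. Your integrating-factor computation is precisely the classical justification behind such cited results, so there is nothing to contrast at the level of technique. Your additional remarks---that $\{X=0\}$ is invariant, that the representation is valid on the maximal interval, and that boundedness then extends positivity globally in time---are more careful than what the paper records and would be welcome context if you were writing this out in full.
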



\begin{theorem}
Solutions to system (\ref{c}) are always positive.
\end{theorem}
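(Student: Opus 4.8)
The plan is to exploit the multiplicative structure already exhibited in system~(\ref{d}): each component of the vector field is the corresponding state variable times a function $\phi_i$ that is smooth on the region $\{P_2 > -A_1\}$, and in particular on the closed positive octant, since $A_1 = r_3/(ar_1) > 0$. First I would record the well-posedness prerequisite: the right-hand side of~(\ref{c}) is continuously differentiable in a neighbourhood of the positive octant (the only possible singularity, at $P_2 = -A_1$, lies strictly outside it), so through the initial point $(S^0,P_1^0,P_2^0)$ with all coordinates positive there passes a unique maximal solution, and it is this solution whose positivity must be verified on its whole interval of existence.

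Next I would apply Theorem~\ref{t1} separately to each of the three scalar equations in~(\ref{d}). Fix a solution $(S(t),P_1(t),P_2(t))$. The first equation reads $\dot S = S\,\xi(t)$ with $\xi(t) := \phi_1(S(t),P_1(t),P_2(t))$ a continuous function of $t$ along the orbit; hence, by Theorem~\ref{t1} (equivalently, by the integrating-factor identity $S(t) = S^0\exp\!\int_0^t \xi(\tau)\,d\tau$), $S(t)$ keeps the sign of $S^0$, i.e.\ $S(t) > 0$ for all $t$ in the interval of existence. The identical argument applied to the $P_1$- and $P_2$-equations, with $\phi_2$ and $\phi_3$ playing the role of the scalar coefficient and using $P_1^0 > 0$ and $P_2^0 > 0$, gives $P_1(t) > 0$ and $P_2(t) > 0$. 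Consequently every trajectory issuing from the positive octant remains in the open positive octant, which is the assertion.

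As for the ``main obstacle'': there is essentially none, since the statement is a direct corollary of Theorem~\ref{t1} together with the already-established form~(\ref{d}). The only point deserving a line of comment is that the factor $1/(A_1 + P_2)$ in $\phi_2$ does not spoil the regularity needed to run the local existence/uniqueness argument and to make $\xi(t)$ continuous along the orbit; this is immediate because $A_1 > 0$ and, a posteriori, $P_2(t) > 0$, so the denominator stays bounded away from zero on any compact subinterval. I would therefore keep the proof short, devoting at most a sentence to this regularity remark and the bulk of it to the three applications of Theorem~\ref{t1}.
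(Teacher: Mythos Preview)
Your proposal is correct and follows exactly the paper's approach: the paper's own proof is the single sentence ``Theorem~\ref{t1} gives proof for the theorem since system~(\ref{c}) is stated as system~(\ref{d}),'' and you have simply unpacked this by applying Theorem~\ref{t1} componentwise with an extra remark on the regularity of $\phi_2$. Your added comment on the denominator $A_1+P_2$ is a welcome clarification the paper omits, but the underlying argument is identical.
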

\begin{proof}
 Theorem \ref{t1} gives proof for the theorem since system (\ref{c}) is stated as  system (\ref{d}).
\end{proof}
\begin{theorem}
   All the solutions of the system (\ref{c}) are bounded. 
\end{theorem}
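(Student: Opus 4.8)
The plan is to combine an a priori upper bound on the prey component $S$ with a Lyapunov-type estimate for a positive linear combination $W=c_{1}S+c_{2}P_{1}+c_{3}P_{2}$ of the three populations, where the weights $c_{1},c_{2},c_{3}>0$ will be chosen so that every indefinite cross term in $\dot W$ is dominated by a nonpositive term. Throughout we use that all solutions are positive (by the preceding theorem), so $S,P_{1},P_{2}>0$ on the maximal interval of existence $[0,T_{\max})$. First I would bound $S$: since $P_{1},P_{2}>0$, the first equation of (\ref{c}) gives $\dot S\le S(1-S)$, so comparison with the logistic equation $\dot\phi=\phi(1-\phi)$, $\phi(0)=S^{0}$, yields $S(t)\le\phi(t)\le M_{S}:=\max\{1,S^{0}\}$ for all $t\in[0,T_{\max})$. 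This bound is exactly what makes the rest of the argument work, because it tames both the kleptoparasitism term $A_{8}SP_{1}P_{2}$ and the fractional term $A_{4}SP_{1}/(A_{1}+P_{2})$ (for the latter one also uses $A_{1}+P_{2}\ge A_{1}>0$, so that term is at most $(A_{4}/A_{1})SP_{1}$).

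Next I would differentiate $W=c_{1}S+c_{2}P_{1}+c_{3}P_{2}$ along solutions of (\ref{c}) and collect the cross terms $SP_{1}$, $SP_{2}$ and $P_{1}P_{2}$. Choosing the constants in the order $c_{3}=1$, then $c_{2}$ large enough that $c_{2}A_{2}\ge c_{3}(A_{8}M_{S}-A_{6})$, and then $c_{1}\ge\max\{c_{3}A_{5},\,c_{2}A_{4}/A_{1}\}$, one arranges that the aggregate $SP_{1}$-coefficient $-c_{1}+c_{2}A_{4}/A_{1}$, the aggregate $SP_{2}$-coefficient $-c_{1}+c_{3}A_{5}$, and the aggregate $P_{1}P_{2}$-coefficient $-c_{2}A_{2}-c_{3}A_{6}+c_{3}A_{8}S$ (bounded using $S\le M_{S}$) are all $\le 0$. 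After these cancellations only the dissipative terms survive, giving $\dot W\le c_{1}S(1-S)-c_{2}A_{3}P_{1}-c_{3}A_{7}P_{2}$. Setting $\mu:=\min\{A_{3},A_{7}\}>0$, one then obtains $\dot W+\mu W\le c_{1}S(1+\mu-S)+c_{2}(\mu-A_{3})P_{1}+c_{3}(\mu-A_{7})P_{2}\le c_{1}(1+\mu)^{2}/4=:K$.

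Finally, the differential inequality $\dot W\le K-\mu W$ integrates (Gr\"onwall) to $W(t)\le K/\mu+(W(0)-K/\mu)e^{-\mu t}\le\max\{W(0),K/\mu\}$, so $W$ is bounded on $[0,T_{\max})$; since $W$ is a positive linear combination of the positive components, each of $S$, $P_{1}$, $P_{2}$ is bounded, and boundedness on the maximal interval forces $T_{\max}=\infty$ (indeed $\limsup_{t\to\infty}W(t)\le K/\mu$, giving a uniform ultimate bound). The only genuine obstacle is the bookkeeping forced by the positive cubic term $A_{8}SP_{1}P_{2}$: it is the reason the $S$-bound must be established first and the reason the weight $c_{2}$ on $P_{1}$ (whose equation carries the interspecific loss $-A_{2}P_{1}P_{2}$) must be taken sufficiently large relative to $c_{3}$; the remaining inequalities are routine once the weights are fixed.
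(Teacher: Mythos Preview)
Your proof is correct and follows the same overall template as the paper: first obtain an a priori bound on $S$ from the logistic comparison, then show that a positive linear combination $W=c_{1}S+c_{2}P_{1}+c_{3}P_{2}$ satisfies a dissipative inequality $\dot W+\mu W\le K$, and conclude via Gr\"onwall.

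The one substantive difference is in how the weights are selected. The paper fixes $c_{1}=1$, $c_{2}=A_{1}/A_{4}$, $c_{3}=1/A_{5}$ so that the $SP_{1}$ and $SP_{2}$ cross terms cancel exactly; this leaves the $P_{1}P_{2}$ coefficient equal to $-\bigl(\tfrac{A_{2}A_{1}}{A_{4}}+\tfrac{A_{6}-A_{8}}{A_{5}}\bigr)$ after using $S\le 1$, and the paper then drops that term as if it were nonpositive. That step tacitly assumes $A_{1}A_{2}A_{5}\ge A_{4}(A_{8}-A_{6})$, a condition that is neither stated nor satisfied by the parameter set in the paper's own Table~2. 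Your argument sidesteps this issue by first fixing $c_{3}$, then taking $c_{2}$ large enough that $c_{2}A_{2}\ge c_{3}(A_{8}M_{S}-A_{6})$, and finally choosing $c_{1}$ to absorb the remaining cross terms; this works for \emph{all} positive parameter values. So while the route is the same, your weight selection is more robust and in fact closes a gap present in the paper's version.
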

\begin{proof}
     The first equation of the system that represents the prey equation is bounded through $\frac{dS}{dt}\le S(1-S) $ . By solving the differential
inequality we get,  $\limsup\limits_{t \rightarrow \infty} S(t) \le 1$ or $S(t) \le  1$.\\
Now, we define a function, $D(t)=S+\frac{P_1}{A_9} +\frac{P_2}{A_5}$ , and then by taking the derivative along
the solution of system (2),we get
\small$$\frac{dD}{dt}\le S(1-S)-P_1P_2(\frac{A_2}{A_9} +\frac{A_{10}}{A_5})-\frac{A_3}{A_9}P_1-\frac{A_7}{A_5}P_2$$\normalsize
Now, we choose $\phi$ in such a way that $ \phi < min ({A_3, A_7}) $, so that the above inequality becomes
\small$$\frac{dD}{dt}+ \phi D \le S(1-S+ \phi )-P_1P_2(\frac{A_2}{A_9} +\frac{A_{10}}{A_5})-(\frac{A_3}{A_9} -\frac{\phi}{A_9} ) P_1-(\frac{A_7}{A_5}- \frac{\phi}{A_5} )  P_2$$
which reduces to
$$\frac{dD}{dt}+ \phi D \le S(1-S+ \phi ) \le \frac{(1+ \phi)^2}{4} = \eta $$
 Here, $A_9=\frac{A_4}{A_1}$ and $A_{10}=(A_6-A_8)$ . Now, by using theory of differential inequality, we get 
 $ 0 < D(t)\le \frac{\eta}{\phi} (1-e^{-\phi t}) +D(0)e^{-\phi t}$\normalsize  
  and for large
value of t, i.e., t → $\infty$  , we have $0 < D(t) \le$  $\frac{\eta}{\phi}$ that is independent of initial conditions and 
hence, the system (\ref{c}) is
bounded. Thus all the solutions of the system (\ref{c}) are confined in the region $\Phi =  \left((S,P_1,P_2): 0 \le S+\frac{P_1}{A_9} +\frac{P_2}{A_5} \le \frac{\eta}{\phi} + \theta, \forall \theta > 0\right)$.

\end{proof}

\subsection{\textit{Existence of  equilibrium points}}
This section discusses the equilibrium points of the system (\ref{c}) and their existence requirements. There are six distinct equilibrium points that are mathematically defined, but only five of them are biologically significant. They are as follows:\\
\begin{enumerate}[label=(\alph*)]
    \item Extinct equilibrium $E_0$ $\equiv$ (0,0,0) always exists without any parametric conditions. It depicts an ecosystem devoid of any species habitats under investigation. 
 \\
    
    \item  Predator free equilibrium  $E_1 \equiv$ (1,0,0) always exists and it represents the situation of only prey's survival up to the full carrying capacity.\\
    
    \item $E_2 \equiv
    (B_1,C_1,0)$ represents one of the boundary equilibrium points which is free from the second predator. Here, $B_1=\frac{A_1A_3}{A_4}$ , $C_1= \frac{-A_1A_3+A_4}{A_4}$ and it exists under the condition $A_4 >A_1A_3$. It is the hypothetical situation that explains the eradication of the second predator from the ecosystem.\\
    
    \item  $E_3 \equiv (B_2,0,D_1)$ represents one of the boundary equilibrium points which is free from the first predator. Here, $B_2=\frac{A_7}{A_5}$ , $D_1=\frac{A_5-A_7}{A_5}$ and it exists under the condition $A_5>A_7$. It is the hypothetical situation that explains the eradication of the first predator from the ecosystem.\\
    
    \item The equilibrium point $E_4 \equiv (0,C_2,D_2)$ portrays the prey free scenario in the ecosystem. Here, $C_2= -\frac{A_7}{A_6}$, $D_2=- \frac{A_3}{A_2}$ and obviously it does not exist. 
   
     \item  $E^* \equiv (S^{*},P_1^*,P_2^*)$ represents interior equilibrium point or coexisting equilibrium point in which all the three species coexist.
       
\end{enumerate}

$E^*$ is a positive real solution of the system of equations
 \begin{equation} \label{j}
    \begin{split}
       & (1-S)-P_1-P_2=0\\
       & \frac{A_4S}{(A_1+P_2)}-A_2P_2-A_3=0\\
       & A_5S-P_1A_6-A_7+A_8P_1S =0\\
    \end{split}
\end{equation}

Solving prey nullcline and first predator nullcline for $P_1$ and $P_2$ and substituting these values in the second nullcline,it is found that S satisfies a polynomial equation of degree 4 which is given by \\
\begin{equation} \label{e}
\begin{split}
   \gamma_1 S^{*4}+ \gamma_2 S^{*3}+\gamma_3 S^{*2}+\gamma_4 S^{*}+\gamma_5=0
\end{split}
\end{equation}

Let,
$S^{*}$ be a positive root of this above polynomial, then \small$P_1^*= \frac{1}{2} \left(-\frac{\sqrt{A_1^2 A_2^2-2 A_1 A_2 A_3+4 A_2 A_4 S^{*}+A_3^2}}{A_2}+A_1+\frac{A_3}{A_2}-2 S^{*}+2\right) $ ,  provided  $A_4<\frac{A_2+A_1A_2+A_3+A_1A_3-2A_2S^{*}-A_1A_2S^{*}-A_3S^{*}+A_2S^{*2}}{S^{*}}$    and $P_2^*= \frac{\sqrt{A_1^2 A_2^2-2 A_1 A_2 A_3+4 A_2 A_4 S^{*}+A_3^2}-A_1 A_2-A_3}{2 A_2}$, provided $A_4>\frac{A_1A_3}{S^{*}}.$ \normalsize\\

Here, $\gamma_1=4 A_2^2A_8^2$ , 
 $ \gamma_2=-4 A_2 A_8 (2 A_2 (A_5 + A_6) + ((2 + A_1) A_2 + A_3 + A_4) A_8)$ , $\gamma_3= 4 A_2 (A_2 (A_5 + A_6)^2 + (A_3 A_5 + 
      2 (A_3 + A_4) A_6 + (2 + A_1) A_2 (A_5 + 2 A_6) + 2 A_2 A_7) A_8 + (1 + 
      A_1) (A_2 + A_3) A_8^2)$ , $\gamma_4= -4 A_2 (A_4 A_6^2 + A_3 A_6 (A_5 + A_6) + A_2 (A_5 + A_6) ((2 + A_1) A_6 + 2 A_7) +
    A_3 (2 (1 + A_1) A_6 + A_7) A_8 + A_2 (2 (1 + A_1) A_6 + (2 + A_1) A_7) A_8)$, and  $\gamma_5=4 A_2 (A_6 + A_1 A_6 + A_7) (A_3 A_6 + A_2 (A_6 + A_7)) $. \\

\textbf{Remark 3.2.1.} \emph{In section 5 , rigorous numerical analysis is used to demonstrate the existence of coexisting equilibrium point.}

\subsection{\textit{Local stability analysis }} The linear approximation of the system (\ref{c}) is considered around each equilibrium state in order to analyse the local behaviour of the system (\ref{c}). The variational matrix of the linearized system at the point $(S,P_1,P_2)$ is denoted by

$$J(S,P_1,P_2)= \left[\begin{smallmatrix}
1-2S-P_1-P_2 & -S & -S \\
\frac{A_4P_1}{A_1+P_2} & \frac{A_4S}{A_1+P_2} -A_2P_2-A_3 & - \frac{A_4SP_1}{(A_1+P_2)^2} -A_2P_1 \\
A_5P_2 +A_8P_1P_2 & -P_2A_6+A_8P_2S & A_5S -P_1A_6-A_7+A_8P_1S 
\end{smallmatrix}\right] 
$$ \\

Let us assume,the characteristic equation of $J(S,P_1,P_2)$ at $E^*$ is given by

$$\lambda^3 + G_1 \lambda^2 + G_2\lambda + G_3 = 0$$

\begin{theorem}\label{t3}
    The nature of the equilibrium points of the system (\ref{c}) are as follows:\\
(\Romannum{1})\textit{$E_0$  is always unstable.}\\
(\Romannum{2})\textit{ $E_1$ is locally asymptotically stable if the following criteria hold:\\
(\romannum{1}) $A_1A_3 > A_4$ and (\romannum{2}) $A_7>A_5$ hold.\\
(\Romannum{3})  The predator-free equilibrium point $E_2$ is locally asymptotically stable if certain requirements are satisfied: \\
\small (\romannum{1}) $A_1A_3 < A_4 $, (\romannum{2}) $A_5\le \frac{A_4 A_6-A_1 A_3 A_6}{A_1 A_3}$ and (\romannum{3}) $A_8< \frac{A_1 A_3 A_4 A_5+A_1A_3 A_4 A_6-A_4^2 A_6-A_4^2 A_7}{A_1^2 A_3^2 -A_1 A_3 A_4}$\normalsize
 hold.} \\
(\Romannum{4}) \textit{$E_3$ is locally asymptotically stable if conditions \\
(\romannum{1}) $ A_5> A_7$ and (\romannum{2}) $A_4\le \frac{A_2 A_5^2-2A_2A_5A_7+A_2A_7^2+A_3A_5^2-A_3A_5A_7}{A_5 A_7}$
hold.} \\
(\Romannum{5})\textit{Local asymptotic stability of the coexistence equilibrium state $E^*$ holds if and only if $ G_1, G_2, G_3$ and $G_1G_2-G_3$ are positive, where $G_1,G_2$ and $G_3$ have usual meanings.}
\end{theorem}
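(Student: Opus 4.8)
\emph{Proof strategy.} The plan is to handle all six equilibria by the same two-step recipe: evaluate the variational matrix $J(S,P_{1},P_{2})$ written above at the point, then determine the signs of the real parts of its eigenvalues. At the four boundary equilibria the linearization is block-triangular --- since one of $P_{1},P_{2}$ vanishes there, a whole row of $J$ reduces to a single diagonal entry --- so the spectrum splits into that scalar eigenvalue together with a $2\times 2$ block, and the block is Hurwitz exactly when its trace is negative and its determinant positive. At the interior point $E^{*}$ one instead applies the Routh-Hurwitz criterion directly to the cubic $\lambda^{3}+G_{1}\lambda^{2}+G_{2}\lambda+G_{3}=0$.

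Carrying this out: $J(E_{0})=\mathrm{diag}(1,-A_{3},-A_{7})$ has the eigenvalue $1>0$, so $E_{0}$ is always a saddle, which is (\Romannum{1}); $J(E_{1})$ is upper triangular with diagonal $-1,\ \frac{A_{4}}{A_{1}}-A_{3},\ A_{5}-A_{7}$, so all of its eigenvalues are negative precisely when $A_{1}A_{3}>A_{4}$ and $A_{7}>A_{5}$, which is (\Romannum{2}). At $E_{2}=(B_{1},C_{1},0)$ the identities $1-B_{1}-C_{1}=0$ and $\frac{A_{4}B_{1}}{A_{1}}=A_{3}$ (both valid at this equilibrium) simplify $J(E_{2})$ so that its bottom row is $(0,0,\lambda_{3})$ with $\lambda_{3}=A_{5}B_{1}-A_{6}C_{1}-A_{7}+A_{8}B_{1}C_{1}$, while the leading $2\times 2$ block has characteristic factor $\lambda^{2}+B_{1}\lambda+\frac{A_{4}B_{1}C_{1}}{A_{1}}$; since $B_{1}>0$ automatically, this quadratic is Hurwitz iff $C_{1}>0$, i.e. $A_{4}>A_{1}A_{3}$ (the first listed condition), and substituting $B_{1}=\frac{A_{1}A_{3}}{A_{4}}$, $C_{1}=\frac{A_{4}-A_{1}A_{3}}{A_{4}}$ into $\lambda_{3}<0$ and clearing denominators produces the stated bound on $A_{8}$, the accompanying bound on $A_{5}$ being imposed to keep that right-hand side positive; this is (\Romannum{3}). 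The equilibrium $E_{3}=(B_{2},0,D_{1})$ is entirely analogous: since $P_{1}^{*}=0$, the $P_{1}$-row of $J(E_{3})$ is $(0,\lambda_{2},0)$ with $\lambda_{2}=\frac{A_{4}B_{2}}{A_{1}+D_{1}}-A_{2}D_{1}-A_{3}$, the complementary $2\times 2$ block is $\lambda^{2}+B_{2}\lambda+A_{5}B_{2}D_{1}$, which is Hurwitz iff $D_{1}>0$, i.e. $A_{5}>A_{7}$, and after substituting $B_{2}=\frac{A_{7}}{A_{5}}$, $D_{1}=\frac{A_{5}-A_{7}}{A_{5}}$ the requirement $\lambda_{2}<0$ collapses to the displayed (sufficient) bound on $A_{4}$, giving (\Romannum{4}); the point $E_{4}$ needs no discussion, having already been shown infeasible. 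Finally, for $E^{*}$ the Routh-Hurwitz theorem for a cubic states that every root of $\lambda^{3}+G_{1}\lambda^{2}+G_{2}\lambda+G_{3}=0$ lies in the open left half-plane if and only if $G_{1}>0$, $G_{3}>0$ and $G_{1}G_{2}-G_{3}>0$; since these three inequalities also force $G_{2}>0$, this is exactly the statement that $G_{1},G_{2},G_{3}$ and $G_{1}G_{2}-G_{3}$ are all positive, proving (\Romannum{5}).

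The routine parts are $E_{0}$, $E_{1}$ and the bare appeal to Routh-Hurwitz at $E^{*}$. The step that needs attention is the reduction, for $E_{2}$ and $E_{3}$, of the single scalar inequality ($\lambda_{3}<0$, respectively $\lambda_{2}<0$) to the precise parametric inequalities in the statement: one substitutes the equilibrium coordinates, clears denominators, and must track the sign of each multiplier --- in particular $A_{1}^{2}A_{3}^{2}-A_{1}A_{3}A_{4}<0$ under the first condition, so dividing through by it reverses the inequality --- and it should be noted that, since only the ``if'' direction is claimed at the boundary equilibria, it suffices to verify that the listed conditions push all eigenvalues into the open left half-plane, whereas $E^{*}$ genuinely needs the two-sided Routh-Hurwitz equivalence.
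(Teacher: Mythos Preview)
Your proposal is correct. For $E_{0}$, $E_{1}$ and $E^{*}$ you proceed exactly as the paper does: read off the diagonal (or upper-triangular) eigenvalues at the first two, and invoke the cubic Routh--Hurwitz criterion at the interior point.

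At $E_{2}$ and $E_{3}$, however, you take a cleaner route than the paper. The paper writes out the full cubic characteristic polynomial $\kappa^{3}+F_{1}\kappa^{2}+F_{2}\kappa+F_{3}=0$ (respectively $\Lambda^{3}+H_{1}\Lambda^{2}+H_{2}\Lambda+H_{3}=0$) and then checks the three Routh--Hurwitz inequalities $F_{i}>0$ and $F_{1}F_{2}>F_{3}$ under the stated parametric hypotheses. You instead observe that, because one of $P_{1},P_{2}$ vanishes, the Jacobian is block-triangular, so its characteristic polynomial factors explicitly as a linear term times a quadratic; you then check the sign of the scalar eigenvalue and the trace/determinant of the $2\times2$ block separately. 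This decomposition makes it transparent which inequality governs which eigenvalue --- e.g.\ at $E_{2}$ the existence condition $A_{4}>A_{1}A_{3}$ alone already makes the $2\times2$ block Hurwitz, and the $A_{8}$ bound is exactly $\lambda_{3}<0$ --- and it bypasses the verification of $F_{1}F_{2}>F_{3}$ altogether. The paper's approach is more uniform (same Routh--Hurwitz template at every nontrivial equilibrium), while yours is more economical and better explains the role of each hypothesis; both reach the same sufficient conditions.
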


\begin{proof}
    \romannum{1}) The variational matrix in the neighbourhood of the trivial equilibrium point $E_0$ is\\
$$J(E_0)= \begin{bmatrix}
1 & 0 & 0 \\
0 & -A_3 & 0 \\
0 & 0 & -A_7
\end{bmatrix}
$$ \\
The variational matrix $J(E_0)$ has eigenvalues 1, -$A_3$ , and $-A_7$. One eigenvalue is positive, while the other two are negative. Thus, the predator-prey system (\ref{c}) is unstable near trivial equilibrium point $E_0$.\\

\romannum{2}) The Jacobian matrix around $E_1$ is\\
$$J(E_1)= \begin{bmatrix}
-1 & -1 & -1 \\
0 & \frac{A_4 -A_1A_3}{A_1} & 0 \\
0 & 0 & A_5-A_7
\end{bmatrix}
$$ \\
Now, J($E_1$) has eigenvalues  -1, $ \frac{A_4 -A_1A_3}{A_1}$ , and $A_5-A_7$. 
Now, all the three eigenvalues become negative  if $A_1A_3 > A_4$, and $A_7>A_5$ hold. As a result, $E_1$ is locally asymptotically stable if the aforementioned two conditions are met simultaneously.  $E_1$ losses its stability if either $A_1A_3 < A_4$ or $A_7 < A_5$ or both hold.\\

\romannum{3})The Jacobian matrix around $E_2$ is \\
$$ J(E_2)= \begin{bmatrix}
a_{11} & a_{12} & a_{13} \\
a_{21} & a_{22}& a_{23} \\
a_{31} & a_{32} & a_{33}
\end{bmatrix}
$$ 
where, $a_{11}= 1-2B_1-c_1$ ,  $a_{12}= -B_1$ , $a_{13}=-B_1,$ 
$a_{21}= \frac{A_4C_1}{A_1}$, $a_{22}= 0$, $a_{23}= -\frac{A_4B_1C_1}{A_1^2} -A_2C_1$, and 
$a_{31}=0$, $a_{32}=0$, $a_{33}=A_5B_1-C_1A_6-A_7+A_8B_1C_1$. \\
The characteristic equation of  J($E_2$) is 
$$\kappa^3 + F_1 \kappa^2 + F_2 \kappa + F_3 = 0,$$\\
where , $F_1=-(a_{11}+a_{33})$ , $F_2=a_{11} a_{33}-a_{11}a_{21}$ and $F_3=a_{12}a_{21}a_{33}$. \\

Taking,
$A_1A_3 < A_4 $ , 
$A_5> \frac{A_4 A_6-A_1 A_3 A_6}{A_1 A_3}$ , 
$A_8< \frac{A_1 A_3 A_4 A_5+A_1A_3 A_4 A_6-A_4^2 A_6-A_4^2 A_7}{A_1^2 A_3^2 -A_1 A_3 A_4}$ , and $A_7 > \frac{(A_1 A_3 A_5 + A_1 A_3 A_6 - A_4 A_6)}{A_4}$ , $F_i > 0$ where i = 1, 2, 3 and $F_1F_2 > F_3$. Hence,by Routh-Hurwitz criterion the system (\ref{c}) is asymptotically stable in the neighbourhood of $E_2$. \\

\romannum{4})The Jacobian matrix around $E_3$ is given by \\
$$ J(E_3)= \begin{bmatrix}
b_{11} & b_{12} & b_{13} \\
b_{21} & b_{22}& b_{23} \\
b_{31} & b_{32} & b_{33}
\end{bmatrix}
$$ \\
where, $b_{11}= 1-2B_2-D_1$ ,  $b_{12}= -B_2$ , $b_{13}=-B_2,$ 
$b_{21}= 0$, $b_{22}=\frac{A_4B_2}{A_1+D_1}-A_2D_1-A_3$, $a_{23}= 0,$
$b_{31}=A_5D_1$, $b_{32}=A_8D_1B_2 -A_6D_1$, and $b_{33}=0$. \\
The characteristic equation of  J($E_3$) is\\
$$\Lambda^3 + H_1 \Lambda^2 + H_2\Lambda + H_3 = 0,$$
where, $H_1=-(b_{11}+b_{22})$ , $H_2=b_{11} b_{22}-b_{13}b_{31}$, and $H_3=b_{13}b_{31}b_{22}$. \\

Taking $
A_5> A_7 $
and
\small$A_4\le \frac{A_2 A_5^2-2A_2A_5A_7+A_2A_7^2+A_3A_5^2-A_3A_5A_7}{A_5 A_7}$\normalsize 
, we get $H_i > 0$ where i = 1, 2, 3 and $H_1H_2 > H_3$. Hence, by Routh-Hurwitz criteria, the system (\ref{c}) is asymptotically stable in the neighbourhood of $E_3$. \\

\romannum{5})The Jacobian matrix near $E^*$ is given by \\
$$ J(E^*)= \begin{bmatrix}
c_{11} & c_{12} & c_{13} \\
c_{21} & c_{22}& c_{23} \\
c_{31} & c_{32} & c_{33}
\end{bmatrix}
$$ \\
where, \small$c_{11}= 1-2S^{*}-P_1^*-P_2^* ,  c_{12}= -S^{*} , c_{13}=-S^{*}$\normalsize , \small$c_{21}= \frac{A_4P_1^*}{A_1+P_2^*} , c_{22}= \frac{A_4S^{*}}{A_1+P_2^*}-A_2P_2^*-A_3 , 
c_{23}= -\frac{A_4S^{*}P_1^*}{(A_1+P_2^*)^2}-A_2P_1^*,$ \normalsize \\ \small$c_{31}=A_5P_2^*+A_8P_1^*P_2^* , c_{32}=-A_6P_2^*+A_8P_2^*S^{*}$ , and $c_{33}=A_5S^{*}-P_1^*A_6-A_7+A_8P_1^*S^{*}$\normalsize \\

The characteristic equation $J(E^*)$ is 
\begin{equation}\label{f}
    \lambda^3 + G_1 \lambda^2 + G_2\lambda + G_3 = 0
\end{equation} 
where, $G_1=-(c_{11}+c_{22}+c_{33})$ , $G_2=(c_{11} c_{33}+c_{11}c_{22}+c_{22}c_{33}-c_{12}c_{21}-c_{13}c_{31}-c_{23}c_{32})$ , and $G_3=-(c_{11}(c_{22}c_{33}-c_{23}c_{32})+c_{12}(c_{23}c_{31}-c_{21}c_{33})+c_{13}(c_{21}c_{32}-c_{22}c_{31}))$. \\

By  Routh-Hurwitz
criterion, $E^*$ is said to be locally asymptotically stable if and only if the following conditions are met: 
$G_1 >0 $, $G_3>0$, and $G_1G_2-G_3>0$ \\

\end{proof}

\begin{center}
\begin{table}[H]
\begin{tabular}{ | m{3cm} | m{5cm}| m{9cm} | } 
 \hline
Equilibrium State & Stability Type & Stability condition  \\ [1ex] 
 \hline\hline
$E_0$ = (0, 0, 0) & Unstable Saddle point & -  \\ 
 \hline
 $E_1$ = (1,0,0) & locally asymptotically stable &  $A_1A_3 > A_4$ and $A_7>A_5$   \\
 \hline
 $E_2$ = ($B_1$,$C_1$,0) & locally asymptotically stable & $A_1A_3 < A_4$,
$A_5\le\frac{A_4 A_6-A_1 A_3 A_6}{A_1 A_3}$, and
$A_8<\frac{A_1 A_3 A_4 A_5+A_1A_3 A_4 A_6-A_4^2 A_6-A_4^2 A_7}{A_1^2 A_3^2 -A_1 A_3 A_4}$\\
 \hline
 $E_3$= ($B_2$,0,$D_1$) & locally asymptotically stable & $A_5>A_7$ 
and
$A_4\le \frac{A_2 A_5^2-2A_2A_5A_7+A_2A_7^2+A_3A_5^2-A_3A_5A_7}{A_5 A_7}$\\
 \hline
 $E^*$=($S^{*},P_1^*,P_2^*$) & locally asymptotically stable  & $ G_1>0,        G_3>0$, and $G_1G_2-G_3>0$  \\ [1ex] 
\hline
\end{tabular}
\caption{\label{tab2}\textit{Types of stability and stability conditions of all the ecologically feasible equilibrium states}}
\end{table}
\end{center}
\textbf{Remark 3.3.1.} \emph{The existence of local stability around $E_1$ eliminates the feasibilities
of $E_2$ as well as $E_3$.} \\


\subsection{\textit{Global Stability}}This section is dedicated to examining the global stability of both the axial and interior equilibria. Theorems related to this are provided as follows:
\begin{theorem}
    The axial equilibrium point $E_1$ is globally asymptotically stable under the sufficient
conditions $A_4 L_2 \varrho_2 \rho_2 < A_1 \Gamma_1 (A_2 L_2 \varrho_1 + A_7 L_3)$ and
 $L_1 (\varrho_2 + \Gamma_2) + L_3 (A_5 + A_8 \varrho_2) \Gamma_2 \rho_2 < A_3 L_2 \varrho_1$.
\end{theorem}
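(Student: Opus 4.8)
The plan is to establish global asymptotic stability of the axial equilibrium $E_1=(1,0,0)$ by Lyapunov's direct method, with a Volterra-type function tailored to an extinction equilibrium. I would take
$$V(S,P_1,P_2)=L_1\bigl(S-1-\ln S\bigr)+L_2P_1+L_3P_2,$$
with positive weights $L_1,L_2,L_3$ to be chosen later. Since $S-1-\ln S\ge 0$ with equality only at $S=1$ and $P_1,P_2\ge 0$, the function $V$ is positive definite on $\{S>0,\,P_1\ge 0,\,P_2\ge 0\}$ with its unique minimum value $0$ at $E_1$; together with the boundedness theorem (every orbit eventually enters the compact absorbing set $\Phi$), this makes $V$ an admissible candidate.

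Next I would differentiate $V$ along solutions of (\ref{c}). Using $\dot S=S[(1-S)-P_1-P_2]$, the logarithmic term contributes $L_1\bigl(1-\tfrac{1}{S}\bigr)\dot S=-L_1(1-S)^2+L_1(1-S)(P_1+P_2)$, while $L_2\dot P_1+L_3\dot P_2$ supplies the predator contributions — the growth terms $\tfrac{L_2A_4SP_1}{A_1+P_2}$, $L_3A_5SP_2$, $L_3A_8SP_1P_2$ and the loss terms $-L_2A_3P_1$, $-L_3A_7P_2$, $-L_2A_2P_1P_2$, $-L_3A_6P_1P_2$ — all linear or bilinear in $P_1,P_2$. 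Reorganising this as
$$\dot V=-L_1(1-S)^2+P_1\,\Theta_1(S,P_2)+P_2\,\Theta_2(S)+P_1P_2\,\Theta_3(S),$$
I would estimate the three coefficient functions by means of the a priori ultimate bounds supplied by the boundedness theorem — the constants $\varrho_1,\varrho_2,\rho_2,\Gamma_1,\Gamma_2$ in the statement are precisely such bounds on $S$, on $P_1$, and on $A_1+P_2$ — together with the elementary inequalities $A_1+P_2\ge A_1$ and $0\le S\le\varrho_2$. The bilinear coefficient $\Theta_3(S)=L_3(A_8S-A_6)-L_2A_2$ is $\le 0$ once $A_8\varrho_2\le A_6$ (essentially the sign condition $A_{10}=A_6-A_8>0$ already used for boundedness), so that term only helps and may be discarded.

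It then remains to choose $L_1,L_2,L_3$ so that $\Theta_1$ and $\Theta_2$ are negative throughout $\Phi$. Over-estimating the prey-dependent growth contributions and the coupling term $L_1(1-S)(P_1+P_2)$ by the ultimate bounds — with $A_1+P_2\ge A_1$ applied to the rational term in $\Theta_1$, which is where the factor $A_1\Gamma_1$ originates — reduces the requirements $\sup_\Phi\Theta_1\le 0$ and $\sup_\Phi\Theta_2\le 0$ to a pair of linear inequalities among the weights, which one recognises as the two displayed hypotheses $A_4L_2\varrho_2\rho_2<A_1\Gamma_1(A_2L_2\varrho_1+A_7L_3)$ and $L_1(\varrho_2+\Gamma_2)+L_3(A_5+A_8\varrho_2)\Gamma_2\rho_2<A_3L_2\varrho_1$. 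Under these, $\dot V\le-L_1(1-S)^2-(\text{positive})P_1-(\text{positive})P_2<0$ at every point of $\Phi$ other than $E_1$, where $\dot V=0$; hence, by Lyapunov's theorem (or by LaSalle's invariance principle, since $\{E_1\}$ is the only invariant subset of $\{\dot V=0\}\cap\Phi$), $E_1$ is globally asymptotically stable.

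I expect the main obstacle to be the estimation, not the structural idea: one must bound the indefinite cross term $L_1(1-S)(P_1+P_2)$ and the rational term $\tfrac{A_4SP_1}{A_1+P_2}$ sharply enough, in terms of the ultimate bounds, to land precisely on the stated inequalities rather than on cruder sufficient conditions, and one must check that the weights $L_1,L_2,L_3$ together with the bounds $\varrho_i,\rho_i,\Gamma_i$ can be chosen consistently — in particular without circularity between ``$P_1,P_2$ eventually small'' and ``$S$ eventually bounded away from $0$''. This is precisely where the boundedness theorem and its absorbing region $\Phi$ must be invoked with care.
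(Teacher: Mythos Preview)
Your Lyapunov function and the overall strategy---differentiate, replace the rational term via $A_1+P_2\ge A_1$, and estimate using a priori bounds on the trajectories---are exactly what the paper does. Where you diverge is in the bookkeeping, and that divergence means your outline does not actually produce the two inequalities in the statement.

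You organise $\dot V$ as $-L_1(1-S)^2+P_1\Theta_1+P_2\Theta_2+P_1P_2\Theta_3$ and then claim that requiring $\Theta_1\le0$ and $\Theta_2\le0$ reduces to the two displayed hypotheses. It does not: for instance your $\Theta_2(S)=L_1(1-S)+L_3A_5S-L_3A_7$ involves only $L_1,L_3,A_5,A_7$, whereas the paper's second hypothesis mixes $L_1,L_2,L_3$ and the parameters $A_3,A_5,A_8$. The paper's route is cruder and less structured: it first \emph{discards} the good term $-L_1(S-1)^2$ altogether, writes the remainder as a single sum $\Upsilon_1$, and then splits $\Upsilon_1$ bluntly into its positive part $\Omega_1$ and its negative part $\Omega_2$. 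Each of these is bounded termwise using the box $\rho_1<S<\rho_2$, $\varrho_1<P_1<\varrho_2$, $\Gamma_1<P_2<\Gamma_2$ (so $\rho$ bounds $S$, $\varrho$ bounds $P_1$, $\Gamma$ bounds $P_2$---not $A_1+P_2$ as you guessed). The single resulting constant inequality $\Omega_1+\Omega_2<0$ is then split, somewhat arbitrarily, into the two pieces that appear in the statement: the first pairs $\frac{A_4L_2\varrho_2\rho_2}{A_1}$ against $A_7L_3\Gamma_1+A_2L_2\varrho_1\Gamma_1$, the second pairs the remaining positive terms against $A_3L_2\varrho_1$ (with several surplus negative terms simply thrown away).

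So the gap is not conceptual but organisational: to land on the stated hypotheses you must abandon the $\Theta_i$ grouping and follow the positive/negative split. Your decomposition would yield \emph{different} sufficient conditions---perfectly legitimate ones, but not the theorem as written.
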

\begin{proof}
     To study the globally asymptotically stability of the biosystem around $E_1$, the following positive definite Lyapunov
function is considered:
$$V(S,P_1,P_2)=L_1(S-S^{*}-S^{*} ln\frac{S}{S^{*}})+L_2 P_1+L_3 P_2$$
Now taking the time derivative of $V(S,P_1,P_2)$ along the solutions of system (\ref{c}), ˙$V(S,P_1,P_2)$ is
given by,

$$\frac{dV(S,P_1,P_2)}{dt}=L_1\frac{(S-S^{*})}{S}\frac{dS}{dt}+L_2 \frac{dP_1}{dt}+L_3 \frac{dP_2}{dt}$$

Let, $$C_1=L_1\frac{(S-S^{*})}{S}\frac{dS}{dt},C_2=L_2\frac{dP_1}{dt},C_3=L_3\frac{dP_2}{dt}$$\\
Now,$$C_1\le -L_1 (P_1 + P_2) (S-1) $$
Let us consider $\rho_1 < S < \rho_2$,$\varrho_1<P_1<\varrho_2$ and $\Omega_1<P_2<\Omega_2$ and then rearranging the terms\\
 $$C_1+C_2+C_3 \le \Upsilon_1 = \Omega_1 + \Omega_2$$ 
where,
\begin{equation*}
\begin{split}
\Upsilon_1 = & L_1 P_1 - A_3 L_2 P_1 + L_1 P_2 - A_7 L_3 P_2 - A_2 L_2 P_1 P_2 - A_6 L_3 P_1 P_2 - 
 L_1 P_1 S - L_1 P_2 S + A_5 L_3 P_2 S + A_8 L_3 P_1 P_2 S + \frac{A_4 L_2 P_1 S}{
 A_1 + P_2},\\
 \Omega_1 = & L_1 P_1 + L_1 P_2+ A_5 L_3 P_2 S + A_8 L_3 P_1 P_2 S + \frac{A_4 L_2 P_1 S}{
 A_1 + P_2}\\
   \leq &  L_1 P_1 + L_1 P_2+ A_5 L_3 P_2 S + A_8 L_3 P_1 P_2 S + \frac{A_4 L_2 P_1 S}{
 A_1}\\
    \leq & L_1 \varrho_2 + L_1 \Gamma_2+ A_5 L_3 \Gamma_2 \rho_2 + A_8 L_3 \varrho_2 \Gamma_2 \rho_2 + \frac{A_4 L_2 \varrho_2 \rho_2}{
 A_1},\\
 \Omega_2= & - A_3 L_2 P_1- A_7 L_3 P_2 - A_2 L_2 P_1 P_2 - A_6 L_3 P_1 P_2 - 
 L_1 P_1 S - L_1 P_2 S\\
\leq & - A_3 L_2 \varrho_1- A_7 L_3 \Gamma_1 - A_2 L_2 \varrho_1 \Gamma_1 - A_6 L_3 \varrho_1 \Gamma_1 - 
 L_1 \varrho_1 \rho_1 - L_1 \Gamma_1 \rho_1 
 \end{split}
\end{equation*}
So, $C_1+C_2+C_3 \le L_1 \varrho_2 + L_1 \Gamma_2+ A_5 L_3 \Gamma_2 \rho_2 + A_8 L_3 \varrho_2 \Gamma_2 \rho_2 + \frac{A_4 L_2 \varrho_2 \rho_2}{
 A_1}- A_3 L_2 \varrho_1- A_7 L_3 \Gamma_1 - A_2 L_2 \varrho_1 \Gamma_1 - A_6 L_3 \varrho_1 \Gamma_1 - 
 L_1 \varrho_1 \rho_1 - L_1 \Gamma_1 \rho_1 <0;$ provided,
 $A_4 L_2 \varrho_2 \rho_2 < A_1 \Gamma_1 (A_2 L_2 \varrho_1 + A_7 L_3)$ and $L_1 (\varrho_2 + \Gamma_2) + L_3 (A_5 + A_8 \varrho_2) \Gamma_2 \rho_2 < A_3 L_2 \varrho_1$.\\
 
 Thus, $\frac{dV}{dt}<0$ under the conditions:
$A_4 L_2 \varrho_2 \rho_2 < A_1 \Gamma_1 (A_2 L_2 \varrho_1 + A_7 L_3)$ and $L_1 (\varrho_2 + \Gamma_2) + L_3 (A_5 + A_8 \varrho_2) \Gamma_2 \rho_2 < A_3 L_2 \varrho_1.$ 
\end{proof}

\begin{theorem}
    The coexisting equilibrium point $E^*$ is globally asymptotically stable under the sufficient
conditions $P_{2}^{*} \le \Theta_2$ ,
$L_2 > \Upsilon_2$, $A_5 \vartheta_1 (P_{1}^{*} P_{2}^{*} (A_6 - A_8 S^{*}) + A_5 (\vartheta_1 \theta_1 - S^{*} \vartheta_2)) > 0$, $A_4 <\Upsilon_3$, and
 $\vartheta_1 > \Upsilon_4$.
\end{theorem}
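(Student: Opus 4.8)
The plan is to mimic the argument for $E_1$ but with a full Goh--Volterra (logarithmic) Lyapunov function centred at the interior point, namely
$$V(S,P_1,P_2)=L_1\Big(S-S^{*}-S^{*}\ln\tfrac{S}{S^{*}}\Big)+L_2\Big(P_1-P_1^{*}-P_1^{*}\ln\tfrac{P_1}{P_1^{*}}\Big)+L_3\Big(P_2-P_2^{*}-P_2^{*}\ln\tfrac{P_2}{P_2^{*}}\Big),$$
with positive weights $L_1,L_2,L_3$ left free. This $V$ is positive definite on the open positive octant with a strict minimum at $E^{*}$, so it suffices to make $\dot V$ negative along solutions. Differentiating and inserting the right-hand sides of (3) gives $\dot V=L_1\frac{S-S^{*}}{S}\dot S+L_2\frac{P_1-P_1^{*}}{P_1}\dot P_1+L_3\frac{P_2-P_2^{*}}{P_2}\dot P_2$; using the three equilibrium identities of (5), each of the factors $\dot S/S$, $\dot P_1/P_1$, $\dot P_2/P_2$ becomes a linear combination of the deviations $S-S^{*}$, $P_1-P_1^{*}$, $P_2-P_2^{*}$. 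The only non-polynomial piece is handled with the identity $\frac{A_4S}{A_1+P_2}-\frac{A_4S^{*}}{A_1+P_2^{*}}=\frac{A_4(S-S^{*})}{A_1+P_2}-\frac{A_4S^{*}(P_2-P_2^{*})}{(A_1+P_2)(A_1+P_2^{*})}$, and the retribution term via $P_1S-P_1^{*}S^{*}=S(P_1-P_1^{*})+P_1^{*}(S-S^{*})$.

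After these substitutions $\dot V$ is the manifestly negative term $-L_1(S-S^{*})^{2}$ plus mixed products of the three deviations whose coefficients still depend on the state through $S$, $P_2$, $\frac{1}{A_1+P_2}$ and $A_8S-A_6$. The next step is to invoke boundedness (Theorem 3): every trajectory eventually lies in the compact absorbing set $\Phi$, so there are constants with $\vartheta_1\le S\le\vartheta_2$, $\theta_1\le P_1\le\theta_2$, $\Theta_1\le P_2\le\Theta_2$; substituting for every state-dependent coefficient its worst extreme value (an upper bound on the positive contributions, a ``less negative'' bound on the negative ones, exactly as in the $E_1$ proof) converts the estimate into $\dot V\le\Upsilon$ for a constant $\Upsilon$ depending only on the $A_i$'s, the $L_i$'s and these bounds. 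One then verifies that the five hypotheses are exactly what is needed to force $\Upsilon<0$: $A_4<\Upsilon_3$ tames the kleptoparasitism cross term, the sign condition $A_5\vartheta_1\big(P_1^{*}P_2^{*}(A_6-A_8S^{*})+A_5(\vartheta_1\theta_1-S^{*}\vartheta_2)\big)>0$ controls the $A_8$/counter-attack interaction, $P_2^{*}\le\Theta_2$ keeps the fraction estimates in the right direction, and $L_2>\Upsilon_2$ together with $\vartheta_1>\Upsilon_4$ balance the remaining prey--predator terms. With $\dot V<0$ on $\Phi\setminus\{E^{*}\}$, Lyapunov's direct method (with LaSalle's principle to exclude the boundary) gives global asymptotic stability of $E^{*}$.

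I expect the main obstacle to be the absence of negative ``diagonal'' squares $-(P_1-P_1^{*})^{2}$ and $-(P_2-P_2^{*})^{2}$ in $\dot V$: because the kleptoparasitism functional response $\frac{A_4S}{A_1+P_2}$ and the term $A_8P_1P_2S$ spoil the Lotka--Volterra structure, the cross products do not telescope as in the classical competitive case, so no choice of the $L_i$ alone renders $\dot V$ negative definite. The whole argument therefore hinges on the coarse device of freezing the coefficients at their extremal values on $\Phi$ and killing the resulting constant with the weights and the structural inequalities; extracting clean, checkable conditions from that bookkeeping --- and, strictly speaking, guaranteeing the lower bounds $\vartheta_1,\theta_1,\Theta_1>0$ (uniform persistence, so that the $\ln$ terms stay finite along orbits) --- is the delicate part. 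Concretely I would first record the three linearised brackets, then group $\dot V$ as (negative square) $+$ (prey--$P_1$) $+$ (prey--$P_2$) $+$ ($P_1$--$P_2$) and bound each block on its own before assembling $\Upsilon$.
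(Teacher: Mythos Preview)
Your proposal is essentially the paper's own argument: the same Goh--Volterra Lyapunov function, the same trick of replacing state-dependent coefficients by their extremal values on an a priori box, and then reading off the five hypotheses as the conditions that make the resulting constant upper bound negative. Two small points you should align with the paper: (i) the box in the paper is labelled $\theta_1<S<\theta_2$, $\vartheta_1<P_1<\vartheta_2$, $\Theta_1<P_2<\Theta_2$, i.e.\ your $\theta$ and $\vartheta$ are interchanged, which matters for reading the stated conditions correctly; and (ii) before splitting into positive and negative pieces the paper fixes the weight relation $L_1=A_5L_3$, which cancels the $(S-S^{*})(P_2-P_2^{*})$ contributions coming from $D_1$ and $D_3$ and is what makes the subsequent bookkeeping close --- you left all three $L_i$ free, and without that tie the cross term does not disappear. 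Your caveat about needing genuine positive lower bounds $\theta_1,\vartheta_1,\Theta_1>0$ (uniform persistence) is well taken; the paper simply assumes such a box.
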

\begin{proof}
    To study the globally asymptotically stability of the biosystem around $E^*$, the following positive definite Lyapunov
function is considered:\\
$$W(S,P_1,P_2)=L_1(S-S^{*}-S^{*} ln\frac{S}{S^{*}})+L_2(P_1-P_{1}^*-P_{1}^* ln\frac{P_1}{P_{1}^*})+L_3(P_2-P_{2}^*-P_{2}^* ln\frac{P_2}{p_{2}^*})$$
Now taking the time derivative of $W(S,P_1,P_2)$ along the solutions of system (\ref{c}), we get \\
$$\frac{dW(S,P_1,P_2)}{dt}=L_1\frac{(S-S^{*})}{S}\frac{dS}{dt}+L_2\frac{(P_1-P_{1}^*)}{P_1}\frac{dP_1}{dt}+L_3\frac{(P_2-P_{2}^*)}{P_2}\frac{dP_2}{dt}$$
Let, $D_1=L_1\frac{(S-S^{*})}{S}\frac{dS}{dt}$ , $D_2=L_2\frac{(P_1-P_{1}^*)}{P_1}\frac{dP_1}{dt}$ , $D_3=L_3\frac{(P_2-P_{2}^*)}{P_2}\frac{dP_2}{dt}$\\
Now, $D_1\le -L_1 (P_1 - P_{1}^* + P_2 - P_{2}^*) (S - S^{*}),$
$D_2=L_2 (P_1 -P_{1}^* ) (A_2 (-P_2 +P_{2}^*) + \frac{(
   A_4 (P_{2}^* S + A_1 (S - S^{*}) - P_2 S^{*}))}{((A_1 + P_2) (A_1 +P_{2}^*))}),$
and  $ D_3=-L_3 (P_2 - P_{2}^*) (A_6 (P_1 - P_{1}^* ) - A_8 P_1 S + A_8  P_{1}^* S^{*} + A_5 (-s + S^{*}))$.\\
Taking $L_1=A_5 L_3$,
\begin{equation*}
\begin{split}
D_1+D_2+D_3 \le &  -A_2 L_2 P_1 P_2 - A_6 L_3 P_1 P_2 + A_2 L_2 P_{1}^{*} P_2 + A_6 L_3 P_{1}^{*} P_2 + 
 A_2 L_2 P_1 P_{2}^{*} + A_6 L_3 P_1 P_{2}^{*} - A_2 L_2 P_{1}^{*} P_{2}^{*}\\
 & - A_6 L_3 P_{1}^{*} P_{2}^{*} - 
 A_5 L_3 P_1 S + A_5 L_3 P_{1}^{*} S + A_8 L_3 P_1 P_2 S - A_8 L_3 P_1 P_{2}^{*} S + A_5 L_3 P_1 S^{*}+  + \upsilon_1  + \upsilon_2\\
 & - A_5 L_3 P_{1}^{*} S^{*} - A_8 L_3 P_{1}^{*} P_2 S^{*} + A_8 L_3 P_{1}^{*} P_{2}^{*} S^{*} \\
   \le & -A_2 L_2 P_1 P_2 - A_6 L_3 P_1 P_2 + A_2 L_2 P_{1}^{*} P_2 + A_6 L_3 P_{1}^{*} P_2 + 
 A_2 L_2 P_1 P_{2}^{*} + A_6 L_3 P_1 P_{2}^{*} - A_2 L_2 P_{1}^{*} P_{2}^{*} \\
 & - A_6 L_3 P_{1}^{*} P_{2}^{*} - 
 A_5 L_3 P_1 S + A_5 L_3 P_{1}^{*} S + A_8 L_3 P_1 P_2 S - A_8 L_3 P_1 P_{2}^{*} S  + A_5 L_3 P_1 S^{*} +\upsilon_3 +\upsilon_4\\
 & - A_5 L_3 P_{1}^{*} S^{*} - A_8 L_3 P_{1}^{*} P_2 S^{*} + A_8 L_3 P_{1}^{*} P_{2}^{*} S^{*} 
 \end{split}
\end{equation*}

 Now,Let us consider $\theta_1 < S < \theta_2,  \vartheta_1<P_1<\vartheta_2,$ and $\Theta_1<P_2<\Theta_2$, then rearranging the terms\\
 $$D_1+D_2+D_3 \le \delta_1 + \delta_2$$ 
 where,
\begin{equation*}
\begin{split}
\delta_1 = & A_2 L_2 P_{1}^{*} P_2 + A_6 L_3 P_{1}^{*} P_2 + A_2 L_2 P_1 P_{2}^{*} + A_6 L_3 P_1 P_{2}^{*} + 
 A_5 L_3 P_{1}^{*} S + A_8 L_3 P_1 P_2 S +\upsilon_5  + A_5 L_3 P_1 S^{*} + 
 A_8 L_3 P_{1}^{*} P_{2}^{*} S^{*}\\
 \leq & A_2 L_2 P_{1}^{*} \Theta_2 + A_6 L_3 P_{1}^{*} \Theta_2 + A_2 L_2 \vartheta_2 P_{2}^{*} + A_6 L_3 \vartheta_2 P_{2}^{*} + 
 A_5 L_3 P_{1}^{*} \theta_2 + A_8 L_3 \vartheta_2 \Theta_2 \theta_2 +\upsilon_6 + A_5 L_3 \vartheta_2 S^{*} + 
 A_8 L_3 P_{1}^{*} P_{2}^{*} S^{*}\\
 = & \delta_{1}^{'},\\
 \delta_2= &-A_2 L_2 P_1 P_2 - A_6 L_3 P_1 P_2 - A_2 L_2 P_{1}^{*} P_{2}^{*} - A_6 L_3 P_{1}^{*} P_{2}^{*} - 
 A_5 L_3 P_1 S - A_8 L_3 P_1 P_{2}^{*} S +\upsilon_7 - A_5 L_3 P_{1}^{*} S^{*} - 
 A_8 L_3 P_{1}^{*} P_2 S^{*}\\
 \leq &-A_2 L_2 \vartheta_1 \Theta_1 - A_6 L_3 \vartheta_1 \Theta_1 - A_2 L_2 P_{1}^{*} P_{2}^{*} - A_6 L_3 P_{1}^{*} P_{2}^{*} - 
 A_5 L_3 \vartheta_1 \theta_1 - A_8 L_3 \vartheta_1 P_{2}^{*} \theta_1 +\upsilon_8- A_5 L_3 P_{1}^{*} S^{*} - A_8 L_3 P_{1}^{*} \Theta_1 S^{*}\\
 = & \delta_{2}^{'}
 \end{split}
\end{equation*}
and, $\upsilon_1=\frac{(
 A_1 A_4 L_2 P_1 S)}{(A_1 + P_2) (A_1 + P_{2}^{*})}-\frac{(
 A_1 A_4 L_2 P_{1}^{*} S)}{(A_1 + P_2) (A_1 + P_{2}^{*})} + \frac{(
 A_4 L_2 P_1 P_{2}^{*} S)}{(A_1 + P_2) (A_1 + P_{2}^{*})} - \frac{(
 A_4 L_2 P_{1}^{*} P_{2}^{*} S)}{(A_1 + P_2) (A_1 + P_{2}^{*})}$,
 $\upsilon_2= - \frac{(
 A_1 A_4 L_2 P_1 S^{*})}{(A_1 + P_2) (A_1 + P_{2}^{*})} + \frac{(
 A_1 A_4 L_2 P_{1}^{*} S^{*})}{(A_1 + P_2) (A_1 + P_{2}^{*})} - \frac{(
 A_4 L_2 P_1 P_2 S^{*})}{(A_1 + P_2) (A_1 + P_{2}^{*})} + \frac{(
 A_4 L_2 P_{1}^{*} P_2 S^{*})}{(A_1 + P_2) (A_1 + P_{2}^{*})}$ , $\upsilon_3= \frac{(
 A_1 A_4 L_2 P_1 S)}{(A_1) (A_1 + P_{2}^{*})}-\frac{(
 A_1 A_4 L_2 P_{1}^{*} S)}{(A_1 + P_2) (A_1 + P_{2}^{*})} + \frac{(
 A_4 L_2 P_1 P_{2}^{*} S)}{(A_1) (A_1 + P_{2}^{*})} - \frac{(
 A_4 L_2 P_{1}^{*} P_{2}^{*} S)}{(A_1 + P_2) (A_1 + P_{2}^{*})}$  , $\upsilon_4= - \frac{(
 A_1 A_4 L_2 P_1 S^{*})}{(A_1 + P_2) (A_1 + P_{2}^{*})} + \frac{(
 A_1 A_4 L_2 P_{1}^{*} S^{*})}{(A_1) (A_1 + P_{2}^{*})} - \frac{(
 A_4 L_2 P_1 P_2 S^{*})}{(A_1 + P_2) (A_1 + P_{2}^{*})} + \frac{(
 A_4 L_2 P_{1}^{*} P_2 S^{*})}{(P_2) (A_1 + P_{2}^{*})}$ , $\upsilon_5=\frac{(A_1 A_4 L_2 P_1 S)}{(A_1) (A_1 + P_{2}^{*})} +\frac{(
 A_4 L_2 P_1 P_{2}^{*} S)}{(A_1) (A_1 + P_{2}^{*})}+ \frac{(A_1 A_4 L_2 P_{1}^{*} S^{*})}{(A_1) (A_1 + P_{2}^{*})} + \frac{(
 A_4 L_2 P_{1}^{*} P_2 S^{*})}{(P_2) (A_1 + P_{2}^{*})} $ , $\upsilon_6=\frac{(
 A_1 A_4 L_2 \vartheta_2 \theta_2)}{(A_1) (A_1 + P_{2}^{*})} + \frac{(
 A_4 L_2 \vartheta_2 P_{2}^{*} \theta_2)}{(A_1) (A_1 + P_{2}^{*})} + \frac{(A_1 A_4 L_2 P_{1}^{*} S^{*})}{(A_1) (A_1 + P_{2}^{*})} + \frac{(
 A_4 L_2 P_{1}^{*} \Theta_2 S^{*})}{(\Theta_2) (A_1 + P_{2}^{*})} $ , $\upsilon_7= - \frac{(
 A_1 A_4 L_2 P_{1}^{*} S)}{(A_1 + P_2) (A_1 + P_{2}^{*})} - \frac{(
 A_4 L_2 P_{1}^{*} P_{2}^{*} S)}{(A_1 + P_2) (A_1 + P_{2}^{*})} - \frac{(A_1 A_4 L_2 P_1 S^{*})}{(A_1 + P_2) (A_1 + P_{2}^{*})} - \frac{(
 A_4 L_2 P_1 P_2 S^{*})}{(A_1 + P_2) (A_1 + P_{2}^{*})}$ , and $\upsilon_8= - \frac{(
 A_1 A_4 L_2 P_{1}^{*} \theta_1)}{(A_1 + \Theta_1) (A_1 + P_{2}^{*})} - \frac{(
 A_4 L_2 P_{1}^{*} P_{2}^{*} \theta_1)}{(A_1 + \Theta_1) (A_1 + P_{2}^{*})} - \frac{(A_1 A_4 L_2 \vartheta_1 S^{*})}{(A_1 + \Theta_1) (A_1 + P_{2}^{*})} - \frac{(A_4 L_2 \vartheta_1 \Theta_1 S^{*})}{(A_1 + \Theta_1) (A_1 + P_{2}^{*})}$.\\

Now , $D_1+D_2+D_3 \le \delta_{1}^{'}+\delta_{2}^{'} < 0$ , provided \\

$L_2 > \frac{(L_3 (A_6 P_{2}^{*} \vartheta_2 + A_6 P_{1}^{*} \Theta_2 + A_5 P_{1}^{*} \theta_2 + A_8 \vartheta_2 \Theta_2 \theta_2))}{
 A_2 (\vartheta_1 \Theta_1 - P_{2}^{*} \vartheta_2 + P_{1}^{*} (P_{2}^{*} - \Theta_2))}=\Upsilon_2$,
  $A_5 \vartheta_1 (P_{1}^{*} P_{2}^{*} (A_6 - A_8 S^{*}) + A_5 (\vartheta_1 \theta_1 - S^{*} \vartheta_2)) > 0$,\\
 
 $A_4 < \frac{(A_1 A_6 \vartheta_1 \Theta_1 L_3 (A_1 + P_{2}^{*}))}{
 L_2 (2 A_1 P_{1}^{*} S^{*} + A_1 \vartheta_2 \theta_2 + P_{2}^{*} \vartheta_2 \theta_2)}=\Upsilon_3$,
 $\vartheta_1 > \frac{(P_{2}^{*} \vartheta_2 + P_{1}^{*} (-P_{2}^{*} + \Theta_2))}{\Theta_1}=\Upsilon_4$ , and  
 $P_{2}^{*} \le \Theta_2$\\

Thus, $\frac{dW}{dt} < 0$ under the conditions: $L_2 > \Upsilon_2$, 
 $A_5 \vartheta_1 (P_{1}^{*} P_{2}^{*} (A_6 - A_8 S^{*}) + A_5 (\vartheta_1 \theta_1 - S^{*} \vartheta_2)) > 0$,
  $A_4 <\Upsilon_3$, 
 $\vartheta_1 > \Upsilon_4$, and 
 $ P_{2}^{*} \le \Theta_2$.

\end{proof}

\section{\textit{Qualitative changes in the ecological scenario}}
Stability or direction of stability may change in an ecosystem once the influencing parameters are varied. Bifurcations within an ecological system hold significance due to their potential to instigate the emergence of novel behaviours within said system.We theoretically discuss some of the bifurcations of codimension one in this section.
\subsection{Transcritical bifurcation} The phenomenon of transcritical bifurcation holds considerable significance within the realm of dynamical systems and nonlinear dynamics. It is a phenomenon that arises when a parameter traverses a critical value, leading to a change in the stability of an equilibrium point. Theorems relevant to this are provided. 
\begin{theorem}
    (\romannum{1}) \textit{The system (\ref{c}) experiences a transcritical bifurcation  as the equilibrium point $E_1$ crosses the critical value $A_5$=$A_7$ provided $A_5 \ne 0$}.\\
    
(\romannum{2}) \textit{The system (\ref{c}) undergoes a transcritical bifurcation  around the  equilibrium point $E_2$ along the parametric surface $A_1 A_3 A_4(A_5 + A_6 + A_8)-A_4^2(A_6 + A_7)-A_1^2 A_3^2 A_8 =0$}. \\

(\romannum{3}) \textit{The system (\ref{c}) exhibits a transcritical bifurcation along the parametric surface $-A_2 A_5^2 - A_1 A_2 A_5^2 - A_3 A_5^2 - A_1 A_3 A_5^2 + 2 A_2 A_5 A_7 + 
 A_1 A_2 A_5 A_7 + A_3 A_5 A_7 + A_4 A_5 A_7 - A_2 A_7^2=0$  around the  equilibrium point $E_3$.}
\end{theorem}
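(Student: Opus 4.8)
The plan is to apply Sotomayor's theorem for a transcritical bifurcation at each of the three boundary equilibria $E_1,E_2,E_3$. Recall the criterion: if at a parameter value $\mu=\mu_c$ the Jacobian $J$ of system (\ref{c}) at an equilibrium $E$ has a simple eigenvalue $0$ with right null vector $v$ and left null vector $w$, and if
$$w^{T}f_{\mu}(E,\mu_c)=0,\qquad w^{T}\big[Df_{\mu}(E,\mu_c)\,v\big]\neq 0,\qquad w^{T}\big[D^{2}f(E,\mu_c)(v,v)\big]\neq 0,$$
where $f=(f_1,f_2,f_3)$ is the right-hand side of (\ref{c}), then a transcritical bifurcation occurs at $(E,\mu_c)$. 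For each part I would (1) identify the stated threshold or parametric surface as the locus where $\det J(E)=0$ with the remaining two eigenvalues nonzero; (2) solve $J(E)v=0$ and $J(E)^{T}w=0$ explicitly; and (3) verify the three Sotomayor conditions.

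For (\romannum{1}), at $E_1=(1,0,0)$ the matrix $J(E_1)$ is triangular with eigenvalues $-1$, $(A_4-A_1A_3)/A_1$, $A_5-A_7$, so $A_5=A_7$ makes exactly one eigenvalue vanish (provided $A_4\neq A_1A_3$). One computes $v=(-1,0,1)^{T}$ and $w=(0,0,1)^{T}$. Since $f_{A_5}=(0,0,SP_2)^{T}$ vanishes at $E_1$ (as $P_2=0$), the first condition holds automatically; $Df_{A_5}(E_1)\,v=(0,0,1)^{T}$ gives $w^{T}Df_{A_5}v=1\neq 0$; and because only the second derivatives of $f_3$ survive contraction with $w$, one gets $w^{T}D^{2}f(v,v)=2\,\partial^{2}_{SP_2}f_3\big|_{E_1}v_1v_3=-2(A_5+A_8P_1)\big|_{E_1}=-2A_5$, which is nonzero exactly when $A_5\neq 0$ — precisely the stated hypothesis.

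For (\romannum{2}) and (\romannum{3}) the scheme is identical but the algebra is heavier because $E_2=(B_1,C_1,0)$ and $E_3=(B_2,0,D_1)$ carry parameters in their coordinates. The key simplification is that $B_1,C_1$ do not depend on $A_5,A_6,A_7,A_8$ and $B_2,D_1$ do not depend on $A_4$; hence these points remain equilibria for all values of the chosen bifurcation parameter, so $f(E,\mu)\equiv 0$ and the first Sotomayor condition is free. Taking $\mu=A_8$ at $E_2$: from $\kappa^{3}+F_1\kappa^{2}+F_2\kappa+F_3=0$ with $F_3=a_{12}a_{21}a_{33}$, a zero eigenvalue appears precisely when $a_{33}=A_5B_1-A_6C_1-A_7+A_8B_1C_1=0$, and clearing denominators with $B_1=A_1A_3/A_4$, $C_1=(A_4-A_1A_3)/A_4$ reduces this to the surface $A_1A_3A_4(A_5+A_6+A_8)-A_4^{2}(A_6+A_7)-A_1^{2}A_3^{2}A_8=0$. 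On that surface the third row of $J(E_2)$ vanishes, so $w=(0,0,1)^{T}$, while $v$ follows by back-substitution through rows 2 and 1; $Df_{A_8}(E_2)$ has its only nonzero entry $\partial_{P_2}(P_1P_2S)=B_1C_1$ in position $(3,3)$, giving $w^{T}Df_{A_8}v=B_1C_1>0$, and $w^{T}D^{2}f(v,v)=2\big[(A_5+A_8C_1)v_1+(A_8B_1-A_6)v_2\big]$, which one records as the (generically satisfied) nondegeneracy requirement. For (\romannum{3}) take $\mu=A_4$ at $E_3$: the zero root of $\Lambda^{3}+H_1\Lambda^{2}+H_2\Lambda+H_3=0$ with $H_3=b_{13}b_{31}b_{22}$ occurs when $b_{22}=A_4B_2/(A_1+D_1)-A_2D_1-A_3=0$, and substituting $B_2=A_7/A_5$, $D_1=(A_5-A_7)/A_5$ and clearing denominators yields exactly the displayed surface; there the second row of $J(E_3)$ vanishes, so $w=(0,1,0)^{T}$ and $v$ comes from rows 1 and 3, $Df_{A_4}(E_3)$ has its only nonzero entry $\partial_{P_1}\!\big(SP_1/(A_1+P_2)\big)=B_2/(A_1+D_1)$ in position $(2,2)$, so $w^{T}Df_{A_4}v=B_2/(A_1+D_1)>0$, and contracting the Hessian of $f_2$ with $v$ supplies the last nondegeneracy term. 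The main obstacle I anticipate is precisely this third condition for $E_2$ and $E_3$: unlike transversality, $w^{T}D^{2}f(v,v)\neq 0$ does not collapse to anything as clean as ``$A_5\neq0$'' and in general imposes an extra parametric inequality that the statement leaves implicit, so I would either add it as an explicit hypothesis or note that it holds on an open dense subset of the bifurcation surface and check it for the parameter values used in Section 5.
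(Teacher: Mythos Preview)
Your approach is exactly that of the paper: Sotomayor's conditions applied with the same null vectors $v=(-1,0,1)^{T}$, $w=(0,0,1)^{T}$ for part (\romannum{1}), and the paper then dismisses (\romannum{2}) and (\romannum{3}) with ``can be derived in a similar manner'' whereas you actually carry out the identification of the parametric surfaces with $a_{33}=0$ and $b_{22}=0$ and the transversality checks. Your caveat about the third condition $w^{T}D^{2}f(v,v)\neq 0$ for $E_2$ and $E_3$ is well taken and is indeed an implicit genericity hypothesis that neither the statement nor the paper's proof makes explicit.
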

\begin{proof}
     (\romannum{1}) At $A_5$=$A_7$=$A_5^{tb}$, The system (\ref{c}) experiences a transcritical bifurcation around the equilibrium point $E_1$ as $J(E_1)$ has one eigenvalue zero. The Jacobian matrix at the equilibrium point $E_1$ at $A_5=A_5^{tb}$ is

$$J(E_1)=  \begin{bmatrix}
-1 & -1 & -1 \\
0 & \frac{A_4 -A_1A_3}{A_1} & 0 \\
0 & 0 & 0
\end{bmatrix}
$$ 

Now, $P=(p_1,p_2,p_3)^t=(-1,0,1)^t$ , and $Q=(q_1,q_2,q_3)^t=(0,0,1)^t$ are two eigenvectors corresponding to the zero eigenvalue of the matrices $J(E_1)$and $J(E_1)^T$ respectively. After some calculation,we get \\

$K_{A_5}(E_1;A_5^{tb})=\begin{bmatrix}
0 \\
0 \\
0 
\end{bmatrix}
, D(K_{A_5}(E_1;A_5^{tb}))P=\begin{bmatrix}
0 & 0 & 0\\
0 & 0 &0\\
0 & 0 & 1
\end{bmatrix}
\begin{bmatrix}
-1 \\
0\\
1
\end{bmatrix}=\begin{bmatrix}
0 \\
0 \\
1 
\end{bmatrix}, $ and $ D^2(K_{A_5}(E_1;A_5^{tb}))(P,P)=\begin{bmatrix}
0\\
0\\
-A_5 
\end{bmatrix}
$ .\\
Therefore,
$Q^T(K_{A_5}(E_1;A_5^{tb}))=0, Q^T(D(K_{A_5}(E_1;A_5^{tb}))P)=1 \ne 0$ , and
$Q^T(D^2(K_{A_5}(E_1;A_5^{tb}))(P,P))=-A_5 \ne 0$.\\

Hence, at  $A_5=A_5^{tb}$, a transcritical bifurcation occurs around $E_1$, although  other parameters can also be taken as bifurcating parameters.\\
The proofs for (\romannum{2}) and (\romannum{3}) can be derived in a similar manner to that of (\romannum{1}).
\end{proof}








\subsection{\textit{Hopf bifurcation}} The Hopf bifurcation is a fundamental concept within the wider domain of bifurcation theory. It offers significant insights into the shift from stable behaviour to oscillatory behaviour. It serves as a crucial foundation for investigating more intricate bifurcation scenarios. Below are the theorems pertaining to Hopf bifurcation.
\begin{theorem} \label{t8}
    The system  (\ref{c}) does not experience Hopf bifurcation for any parameter at the
equilibrium points $E_2$ and $E_3$.
\end{theorem}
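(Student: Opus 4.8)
The plan is to show that at neither $E_2$ nor $E_3$ can the variational matrix ever carry a pair of purely imaginary eigenvalues; since the existence of a simple pair $\pm i\omega$ with $\omega\neq 0$ (together with the third eigenvalue being nonzero) is a necessary prerequisite for a codimension-one Hopf bifurcation, this already settles the theorem for \emph{every} choice of bifurcation parameter. So I would not try to verify transversality or Lyapunov-coefficient conditions at all; the obstruction appears one level earlier, in the spectrum itself.

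First I would exploit the block structure of the two Jacobians recorded in the proof of Theorem~\ref{t3}. At $E_2=(B_1,C_1,0)$ one has $a_{31}=a_{32}=0$, so $J(E_2)$ is block lower-triangular, its spectrum is $\{a_{33}\}\cup\sigma(M_2)$ with $M_2=\begin{pmatrix} a_{11} & a_{12}\\ a_{21} & a_{22}\end{pmatrix}$, and moreover $a_{22}=0$. Similarly, at $E_3=(B_2,0,D_1)$ one has $b_{21}=b_{23}=0$, so the middle coordinate decouples, $b_{22}$ is an eigenvalue, and the remaining pair is the spectrum of $M_3=\begin{pmatrix} b_{11} & b_{13}\\ b_{31} & b_{33}\end{pmatrix}$ with $b_{33}=0$. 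In both cases the split-off eigenvalue ($a_{33}$, respectively $b_{22}$) is a \emph{real} number, so a purely imaginary pair could only come from the $2\times2$ block. The decisive step is then the trace of that block: using the prey nullcline at the equilibrium, $C_1=1-B_1$ forces $a_{11}=1-2B_1-C_1=-B_1=-\tfrac{A_1A_3}{A_4}$, and $D_1=1-B_2$ forces $b_{11}=1-2B_2-D_1=-B_2=-\tfrac{A_7}{A_5}$. Hence $\operatorname{tr}M_2=a_{11}+a_{22}=-B_1<0$ and $\operatorname{tr}M_3=b_{11}+b_{33}=-B_2<0$, both strictly negative because every $A_i>0$. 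A real $2\times2$ matrix with nonzero trace cannot have eigenvalues $\pm i\omega$ (their sum equals the trace), so neither block ever contributes a purely imaginary pair, for any admissible parameter values; consequently the characteristic cubics at $E_2$ and $E_3$ never have roots on the imaginary axis apart from the possible value $0$, and no Hopf bifurcation can occur at $E_2$ or $E_3$.

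There is essentially no analytical difficulty here; the only care needed is the bookkeeping that reduces each characteristic cubic to a real linear factor times a $2\times2$ characteristic polynomial, and the elementary sign observation that $B_1=\tfrac{A_1A_3}{A_4}>0$ and $B_2=\tfrac{A_7}{A_5}>0$. If one prefers the Routh--Hurwitz language instead, writing the cubic at $E_2$ as $\kappa^3+F_1\kappa^2+F_2\kappa+F_3$, a Hopf point would demand $F_1F_2=F_3$ with $F_1>0$ and the imaginary pair $\pm i\sqrt{F_2}$; but the reduction above shows that pair would have to be the spectrum of the trace-$(-B_1)$ block, which is impossible since that would require $B_1=0$, and the identical argument with $B_2$ disposes of $E_3$.
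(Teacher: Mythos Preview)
Your argument is correct and is considerably cleaner than the paper's own proof. The paper invokes Liu's criterion, writes the cubic $\kappa^3+F_1\kappa^2+F_2\kappa+F_3=0$, solves the Hopf surface $F_1F_2-F_3=0$ explicitly for $A_7$, and then argues by direct computation that on that surface the sign conditions $F_1>0$ and $F_2>0$ are mutually exclusive; the $E_3$ case is dismissed as analogous. By contrast, you bypass the Routh--Hurwitz algebra entirely: you observe that the block-triangular structure (noting a minor slip: $J(E_2)$ is block \emph{upper}-triangular, since $a_{31}=a_{32}=0$, but this is immaterial) splits off a real eigenvalue and reduces the question to whether the $2\times 2$ block $M_2$ (resp.\ $M_3$) can have a purely imaginary pair. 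Since $a_{22}=0$ and $a_{11}=-B_1<0$ (resp.\ $b_{33}=0$ and $b_{11}=-B_2<0$), the trace of that block is strictly negative for \emph{all} positive parameter values, so a purely imaginary pair is impossible. This structural observation rules out Hopf for every bifurcation parameter simultaneously, whereas the paper's computation is tied to one parameter at a time and requires nontrivial algebraic manipulation of the $F_i$; your route is both shorter and more robust.
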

\begin{proof}
     According to Liu \cite{liu}, Hopf bifurcation around $E_2$ for a parameter value say $\mu$=$\mu_0$ exists iff conditions :\\

(\romannum{1})$F_i(\mu_0)>0$ for i=1,2,3 and     $(V)(\mu_0)=0$ ,where $V=F_1F_2-F_3$.\\

(\romannum{2})$\frac{dV}{d\mu}(\mu_0)\neq 0$ both hold.\\

 Solving V=0 , we get- \\
 
$A_7=\frac{-2A_1^2A_3^2A_8+A_1A_3A_4(2A_5+2A_6+2A_8-1)-Z_1}{2A_4^2}$ and $A_7=\frac{-2A_1^2A_3^2A_8+A_1A_3A_4(2A_5+2A_6+2A_8-1)+Z_1}{2A_4^2}$\\

where, $Z_1=A_4(\sqrt{A_3(A_1^2A_3+4A_1A_3A_4-4A_4^2)}+2A_4A_6)$\\

putting first value of $A_7$ in $F_1$ ,it is found that\\

$F_1>0	\Leftrightarrow \frac{4 \text{$A_4$}^2}{\text{$A_1$}^2+4 \text{$A_1$} \text{$A_4$}}\leq \text{$A_3$}<\frac{\text{$A_4$}}{\text{$A_1$}}$ and
$F_2< 0 \Leftrightarrow A_3 \ge \frac{4A_4^2}{A_1^2+4A_1A_4}$
Hence, $F_1$, $F_2$ cannot be positive simultaneously and
hence the result.\\


Similarly, we can show that conditions (\Romannum{1}) or (\Romannum{2}) or both do not hold for the latter value of $A_7$, which implies the fact that the system (\ref{c}) does not experience Hopf bifurcation  around the equilibrium point $E_2$.\\

In a similar way, it can be shown that  the system (\ref{c}) does not experience Hopf bifurcation  around the equilibrium point $E_3$ as well.

\end{proof}

\begin{theorem} \label{t9} The system (\ref{c}) undergoes Hopf bifurcation  around the equilibrium point $E^{*}$ when $A_4$ passes through the critical value $A_4^{H*}$ where $A_4^{H*}$ satisfies G($A_4^{H*}$)=$G_1(A_4^{H*})$ $G_2(A_4^{H*})$-$G_3(A_4^{H*})$=0,$G_i(A_4^{H*})>0$ for i=1,2,3 and $G_1(A_4^{H*})$ $G_2^{'}$($A_4^{H*}$) +$G_2(A_4^{H*})$ $G_1^{'}$($A_4^{H*}$) -$G_3^{'}$($A_4^{H*}$)   $ \neq 0$. Where $G_1, G_2$ and $G_3$ have their usual meanings.
\end{theorem}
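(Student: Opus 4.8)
The plan is to verify the standard Liu criterion for a Hopf bifurcation from a cubic characteristic polynomial, exactly as invoked (via Liu \cite{liu}) in the proof of Theorem \ref{t8}, but now applied to $E^{*}$ with $A_4$ as the bifurcation parameter. Recall the characteristic equation at $E^{*}$ is $\lambda^3 + G_1 \lambda^2 + G_2\lambda + G_3 = 0$ with $G_1, G_2, G_3$ as displayed after equation (\ref{f}). Liu's theorem states that if at $A_4 = A_4^{H*}$ one has $G_i(A_4^{H*}) > 0$ for $i = 1, 2, 3$, the resultant-type quantity $G(A_4) := G_1(A_4)G_2(A_4) - G_3(A_4)$ vanishes at $A_4^{H*}$, and the transversality condition $\frac{dG}{dA_4}\big|_{A_4^{H*}} \neq 0$ holds, then the system undergoes a Hopf bifurcation at $E^{*}$ when $A_4$ crosses $A_4^{H*}$. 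So the proof reduces to three checkable items, and the theorem statement has essentially built the first and third into its hypotheses.

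First I would establish the algebraic fact that $G(A_4^{H*}) = 0$ forces the characteristic cubic to have a conjugate pair of purely imaginary roots together with one real root. Writing $G(A_4^{H*}) = 0$ as $G_1 G_2 = G_3$, one factors $\lambda^3 + G_1\lambda^2 + G_2\lambda + G_3 = (\lambda^2 + G_2)(\lambda + G_1)$ at the critical value; hence the roots are $\lambda = -G_1 < 0$ (real, negative since $G_1 > 0$) and $\lambda = \pm i\sqrt{G_2}$ (purely imaginary, since $G_2 > 0$). This identifies $\omega_0 = \sqrt{G_2(A_4^{H*})}$ as the Hopf frequency and uses the positivity hypotheses $G_1 > 0$, $G_2 > 0$ to guarantee the pair is genuinely imaginary and the third eigenvalue is off the imaginary axis. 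I would state that a coexisting equilibrium $E^{*}$ exists for $A_4$ near $A_4^{H*}$ (this is where Remark 3.2.1 / the quartic (\ref{e}) is implicitly used — the root $S^{*}$ and hence $E^{*}$ vary smoothly with $A_4$ by the implicit function theorem away from double roots).

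Next comes the transversality step: one must show that as $A_4$ passes through $A_4^{H*}$ the real part of the complex pair crosses zero with nonzero speed. The clean way is the standard lemma that, for $\lambda(A_4) = \alpha(A_4) \pm i\beta(A_4)$ the branch of eigenvalues with $\alpha(A_4^{H*}) = 0$, one has
\begin{equation*}
\left.\frac{d\alpha}{dA_4}\right|_{A_4^{H*}} = -\,\frac{1}{2}\,\frac{G_1 G_2' + G_2 G_1' - G_3'}{G_1^2 + G_2}\bigg|_{A_4^{H*}} = -\,\frac{1}{2}\,\frac{G'(A_4^{H*})}{G_1(A_4^{H*})^2 + G_2(A_4^{H*})},
\end{equation*}
obtained by differentiating the characteristic equation implicitly in $A_4$, substituting $\lambda = i\omega_0$, and separating real and imaginary parts. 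Since $G_1^2 + G_2 > 0$ automatically, the sign and nonvanishing of $\frac{d\alpha}{dA_4}$ are controlled precisely by $G_1(A_4^{H*})G_2'(A_4^{H*}) + G_2(A_4^{H*})G_1'(A_4^{H*}) - G_3'(A_4^{H*})$, which is the quantity assumed nonzero in the hypothesis. This is what is meant by ``$G_1G_2' + G_2G_1' - G_3' \neq 0$'' in the statement, so the transversality condition holds by assumption.

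The main obstacle is not conceptual but bookkeeping: the entries $c_{ij}$ of $J(E^{*})$, and therefore $G_1, G_2, G_3$, are rational functions of $S^{*}, P_1^{*}, P_2^{*}$, each of which depends on $A_4$ through the quartic (\ref{e}); so the derivatives $G_i'(A_4^{H*})$ and the very existence of a real $A_4^{H*}$ solving $G(A_4) = 0$ with all positivity constraints met are implicit and messy to exhibit in closed form. I would handle this by keeping everything symbolic — invoking the implicit function theorem for the smooth dependence $A_4 \mapsto (S^{*}, P_1^{*}, P_2^{*})$ near a simple root of the quartic, and simply recording that the stated conditions $G(A_4^{H*}) = 0$, $G_i(A_4^{H*}) > 0$, and the transversality inequality are exactly Liu's hypotheses — and defer the concrete realization of such an $A_4^{H*}$ (and the determination of sub- versus supercriticality alluded to in the abstract) to the numerical Section 5. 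Thus the proof is: cite Liu's criterion, verify the imaginary-pair factorization from $G_1G_2 = G_3$ and positivity, derive the transversality formula above, and observe its numerator is the hypothesized nonzero quantity.
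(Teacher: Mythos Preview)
Your proposal is correct and follows essentially the same route as the paper: factor the cubic at $A_4^{H*}$ as $(\lambda^2+G_2)(\lambda+G_1)$ to exhibit the purely imaginary pair, then differentiate the characteristic equation implicitly and separate real and imaginary parts to obtain the transversality formula with numerator $G_1G_2'+G_2G_1'-G_3'$. The paper carries out the same computation directly (introducing auxiliary functions $P,Q,R,S$) rather than citing Liu's criterion by name, and arrives at the same nonvanishing condition; your added remark about smooth dependence of $E^{*}$ on $A_4$ via the implicit function theorem is a point the paper leaves implicit.
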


\begin{proof}
    By the condition G($A_4$)=0 at $A_4$=$A_4^{H*}$, The Characteristic equation (\ref{e}) can be written as\\
\begin{equation}\label{k}
    (\lambda^2+G_2)(\lambda+G_1)=0 
\end{equation}

Thus we have , $\lambda$= -$G_1$ , $\pm$ $\sqrt{G_2}$ i; provided $G_1$,$G_2>0$. For $A_4$ $\in$ ($A_4^{H*}$-$\epsilon$ ,$A_4^{H*}$+$\epsilon$), the general form of the roots are 
$\lambda_{1}=\zeta_1(A_4) + \zeta_2(A_4)$,
$\lambda_{2}=\zeta_1(A_4) - \zeta_2(A_4)$ and
$\lambda_3=-G_1(A_4)$
Now, substituting $\lambda_{1}=\zeta_1(A_4^{H*}) + \zeta_2(A_4^{H*})$ in equation $(\ref{f})$ and then differentiating and separating the real and imaginary part ,we get \\
\begin{equation} \label{g}
  \begin{split}
& P(A_4) {\zeta_1}^{'}(A_4)-Q(A_4)\zeta_2^{'}(A_4)+R(A_4)=0  \\
& Q(A_4)\zeta_1^{'}(A_4)+P(A_4)\zeta_2^{'}(A_4)+S(A_4)=0
 \end{split}
\end{equation}

where, $P(A_4)=3\zeta_1^{2}-3\zeta_2^{2}+ G_2+2G_1\zeta_1$ ,
 $Q(A_4)=6\zeta_1\zeta_2+2G_1\zeta_2$ ,
and $R(A_4)=G_1^{'}\zeta_1^{2}-G_1^{'}\zeta_2^{2}+G_2^{'}\zeta_1+G_3^{'}$ ,
$S(A_4)=G_2^{'}\zeta_2+2\zeta_1\zeta_2$.
From System (\ref{g}) , we can show that-
$${\zeta_1}^{'}(A_4)=- \frac{P(A_4) R(A_4)+Q(A_4)S(A_4)}{P^2(A_4)+Q^2(A_4)}$$
Using $\zeta_1^{'}(A_4^{H*})=0$,$\zeta_2^{'}(A_4^{H*})=i\sqrt{G_2(A_4^{H*})}$, we get-
$P(A_4^{H*})= - 2G_2$ , 
$Q(A_4^{H*})=  2G_1\sqrt{G_2}$ , 
and $R(A_4^{H*})=G_3^{'} - G_1^{'}G_2$ , 
$S(A_4^{H*})=\sqrt{G_2}G_2^{'}$
 Now, for Transversality condition,we need to show that-
$$\frac{d}{dA_4}(Re(\lambda_i(A_4)))_{A_4=A_4^{H*}} \neq 0; i=1,2,3$$ 

Hence, the following equation is obtained\\
\begin{equation}
\begin{split}
\frac{d}{dA_4}(Re(\lambda_i(A_4)))_{A_4=A_4^{H*}} &=- \frac{P(A_4) R(A_4)+Q(A_4)S(A_4)}{P^2(A_4)+Q^2(A_4)}\\
& = \frac{-2(G_3^{'}-G_1^{'}G_2)+2G_1 G_2 G_2^{'}}{4(G_2^2+G_1^2G_2)}\\
& = \frac{G_1^{'}G_2-G_3^{'}+G_1G_2{'}}{2(G_2+G_1^2)}\\
\end{split}
\end{equation}
Now, if $G_1^{'}G_2-G_3^{'}+G_1G_2{'}\neq 0$, then the transversality condition is satisfied and hence the abovementioned system (\ref{c}) experiences Hopf-bifurcation around the equilibrium point $E^*$ when $A_4$ passes through the critical value $A_4=A_4^{H*}$. The presence of Hopf bifurcation for other parameters can be demonstrated in a similar manner. Discussion pertaining to the direction and stability of Hopf bifurcations for different parameters are done numerically in section 5.2.1.
 \\
\end{proof}

\subsubsection{\textit{Direction  and stability of Hopf bifurcations around the interior equilibrium point $E^*$}}
\begin{theorem}\label{t10} \cite{has} The sign of $ \mu_2\alpha^{'}(0)$ clarifies the direction of Hopf
bifurcation. The biosystem (\ref{c}) encounters supercritical Hopf bifurcation if $\mu_2\alpha^{'}(0)>0$ and subcritical Hopf bifurcation when
$\mu_2\alpha^{'}(0)<0$. $\beta_2$ indicates the stability of the bifurcating periodic solution; it is stable for negative $\beta_2$ and unstable for
a positive $\beta_2$. Here, $\alpha^{'}(0)$, $\mu_2$, and $\beta_2$ have their usual meanings.
\end{theorem}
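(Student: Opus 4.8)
The plan is to obtain this statement as a direct application of the normal-form machinery for Hopf bifurcation developed by Hassard, Kazarinoff and Wan \cite{has}, the hypotheses of which are furnished by Theorem \ref{t9}. First I would shift the interior equilibrium to the origin via $x = S - S^{*}$, $y = P_1 - P_1^{*}$, $z = P_2 - P_2^{*}$ and Taylor-expand the right-hand side of system (\ref{c}) through cubic order, writing it as $\dot{u} = J(E^{*})\,u + F(u)$ with $F$ collecting the quadratic and cubic terms. At the critical value $A_4 = A_4^{H*}$ the matrix $J(E^{*})$ has a purely imaginary pair $\pm i\omega_0$ with $\omega_0 = \sqrt{G_2}$ and a negative real eigenvalue $-G_1$, so the Hopf normal-form reduction applies on the two-dimensional center eigenspace.

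Next I would compute a complex eigenvector $q$ with $J(E^{*})\,q = i\omega_0 q$ and the adjoint eigenvector $q^{*}$ with $J(E^{*})^{T} q^{*} = -i\omega_0 q^{*}$, normalized so that $\langle q^{*}, q\rangle = 1$, and write $u = w q + \bar w\,\bar q + v$ with $v$ in the real eigenspace of $-G_1$. Solving the center-manifold equation for $v$ as a quadratic form in $(w,\bar w)$ — its coefficients $v_{20},v_{11}$ come from the linear systems $(2i\omega_0 I - J(E^{*}))v_{20} = (\text{known})$ and $-J(E^{*})\,v_{11} = (\text{known})$ — and substituting back yields the reduced equation $\dot{w} = i\omega_0 w + \sum g_{jk}\,w^{j}\bar w^{k}/(j!k!)$, from which one reads off $g_{20},g_{11},g_{02},g_{21}$. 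Forming
$$c_1(0) = \frac{i}{2\omega_0}\Bigl(g_{20}g_{11} - 2|g_{11}|^{2} - \tfrac{1}{3}|g_{02}|^{2}\Bigr) + \frac{g_{21}}{2},$$
and using $\alpha'(0) = \frac{d}{dA_4}\mathrm{Re}\,\lambda\big|_{A_4^{H*}} \neq 0$ from Theorem \ref{t9}, the Hassard–Kazarinoff–Wan formulas give $\mu_2 = -\,\mathrm{Re}\,c_1(0)/\alpha'(0)$, $\beta_2 = 2\,\mathrm{Re}\,c_1(0)$, and $\tau_2 = -(\mathrm{Im}\,c_1(0) + \mu_2\,\omega'(0))/\omega_0$. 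The two claimed dichotomies then follow: since the bifurcating branch of periodic orbits exists on the side $\mu_2\,\alpha'(0) > 0$ of $A_4^{H*}$, the bifurcation is supercritical exactly when $\mu_2\alpha'(0) > 0$ and subcritical otherwise; and the nontrivial Floquet exponent of the periodic solution has the sign of $\beta_2$, so the orbit is orbitally stable iff $\beta_2 < 0$.

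The hard part is purely computational rather than conceptual: the entries of $J(E^{*})$, the eigenvectors $q$ and $q^{*}$, the center-manifold coefficients $v_{20},v_{11}$, and especially the cubic coefficient $g_{21}$ are cumbersome rational expressions in the eight parameters $A_1,\dots,A_8$ together with the equilibrium coordinates $S^{*},P_1^{*},P_2^{*}$ (themselves roots of the quartic (\ref{e})), so a usable closed form for $\mu_2$ and $\beta_2$ is not available. This is precisely why the explicit evaluation of the direction and stability is carried out numerically in Section 5.2.1; Theorem \ref{t10} records only the criterion, whose validity is inherited verbatim from \cite{has}.
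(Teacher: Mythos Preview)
Your outline is correct and in fact goes well beyond what the paper does: the paper offers no proof of Theorem~\ref{t10} at all, treating it as a quoted result from \cite{has} and deferring the computation of $c_1(0)$, $\mu_2$, $\beta_2$, $\alpha'(0)$ entirely to the numerical Section~5.2.1. What you have written is essentially the standard Hassard--Kazarinoff--Wan derivation that underlies the cited criterion, so it is a genuine (if routine) proof where the paper simply invokes the reference; your closing remark about why closed forms are unavailable and the evaluation is done numerically matches the paper's practice exactly.
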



\section{\textit{Numerical Simulations}}
In the previous sections,we have established three important equilibrium points $E_2$ (second predator free) and $E_3$ (first predator free) as well as coexistence equilibrium point $E^*$. In this section, we perform numerical simulations of system (\ref{c}) to illustrate the results obtained from our theoretical analysis using MATHEMATICA , MATLAB , as well as MATCONT \cite{Dhoo}. Some ecologically possible parametric values have been thought of, as shown in the table below:\\

\begin{table}[H]
\centering
\begin{tabular}{|c c|} 
  \hline
 Parameter & value  \\ [0.5ex] 
 \hline\hline
 $A_1$ & 0.02 \\ 
 \hline
 $A_2$ & 0.05  \\
 \hline
 $A_3$ & 2 \\
 \hline
 $A_4$ & 0.4 \\
 \hline
 $A_5$ & 4.536 \\
 \hline
 $A_6$ & 0.052 \\
 \hline
 $A_7$ & 4.546 \\
 \hline
 $A_8$ & 114.98 \\ [1ex] 
 \hline
\end{tabular}
\caption{\emph{Parameter values for the system (\ref{c})}}
\label{table:1}
\end{table}

\begin{figure}[H]
     \centering
     \begin{subfigure}{0.45\textwidth}
         \centering
         \includegraphics[width=\textwidth]{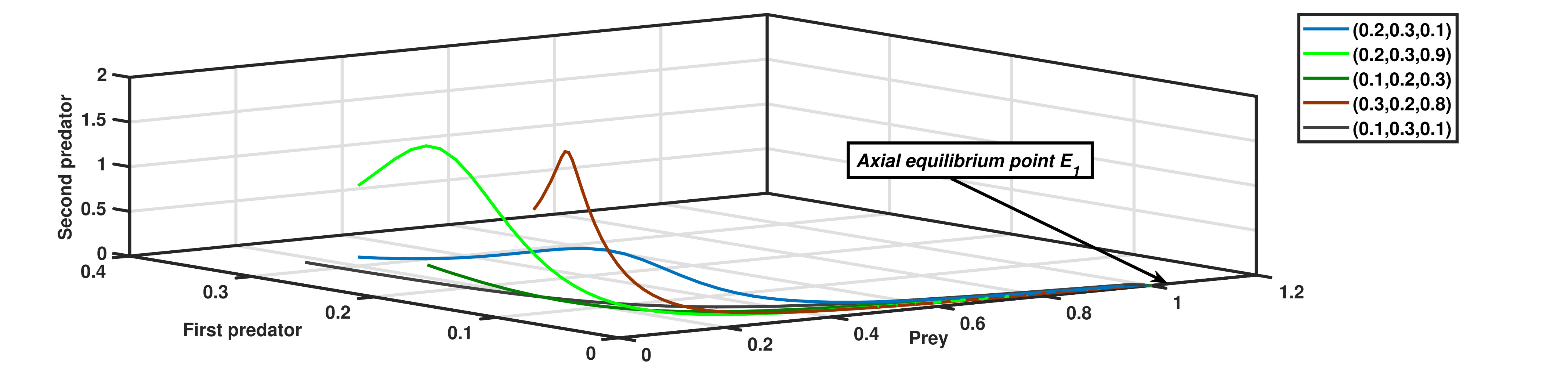}
         \caption{\emph{Axial equilibrium point $E_1$=(1,0,0) is locally asymptotically stable}}
         \label{1a}
     \end{subfigure}
     \hfill
     \begin{subfigure}{0.45\textwidth}
         \centering
         \includegraphics[width=\textwidth]{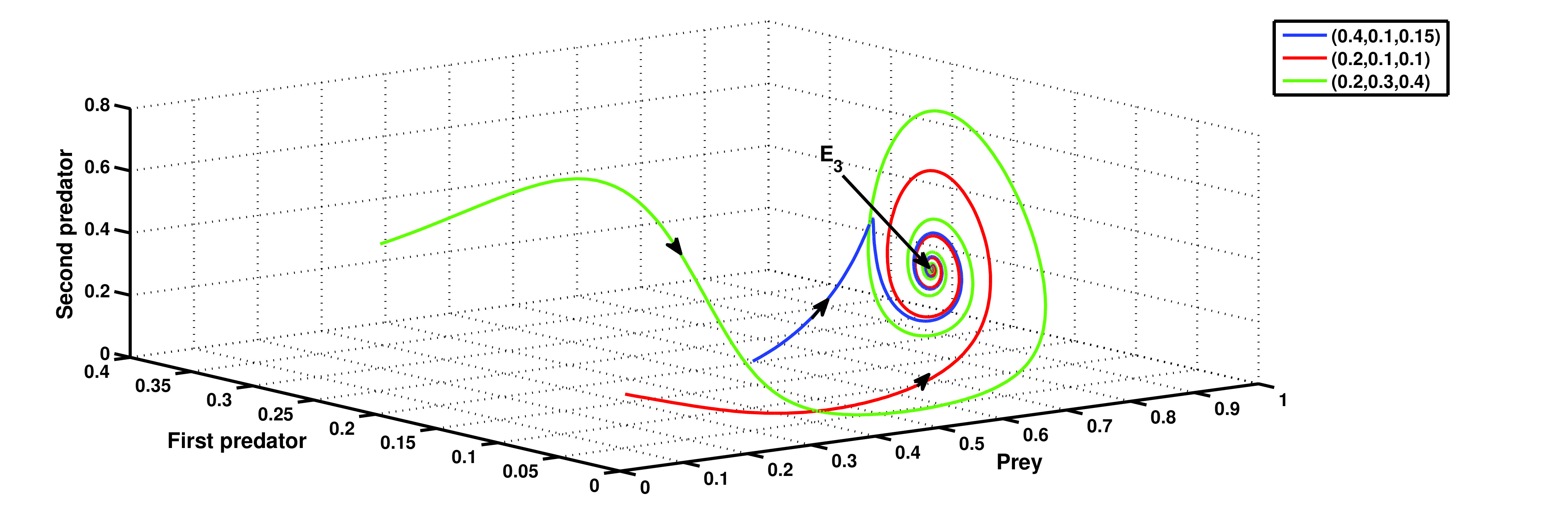}
         \caption{\emph{First predator free equilibrium point $E_3$ is locally asymptotically stable }}
         \label{1b}
     \end{subfigure}
      \hfill
     \begin{subfigure}{0.45\textwidth}
         \centering
         \includegraphics[width=\textwidth]{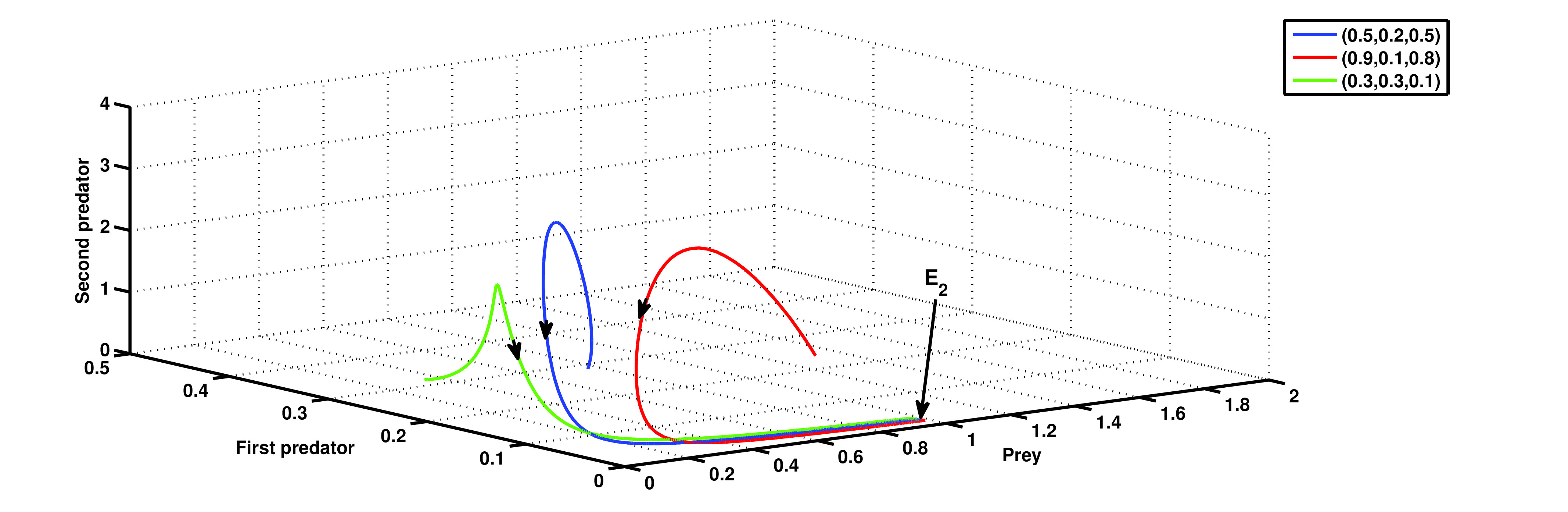}
         \caption{\emph{Second predator free equilibrium point $E_2$ is locally asymptotically stable}}
         \label{1c}
     \end{subfigure}
        \caption{\emph{Local Stability of axial and boundary equilibrium points}}
        \label{lsaxbd}
\end{figure}

\begin{figure}[H]
     \centering
     \begin{subfigure}[F]{0.45\textwidth}
       \centering
         \includegraphics[width=90mm]{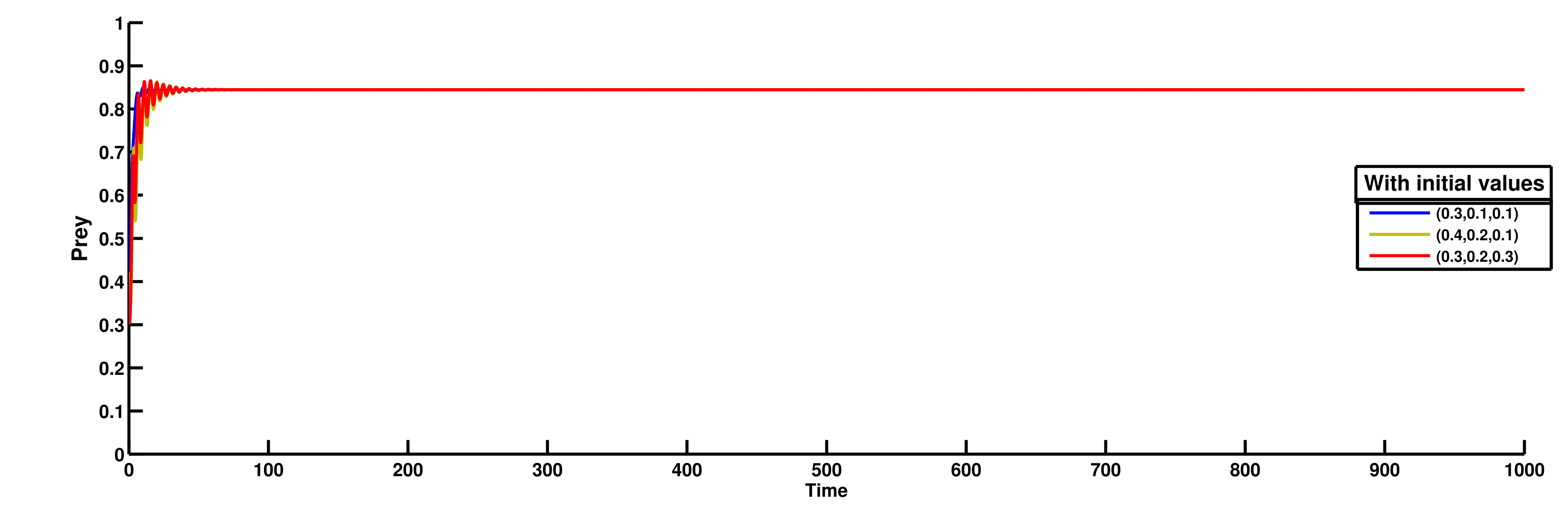}
         \caption{\emph{ Time series of prey  }}
         \end{subfigure}
          \hfill
     \begin{subfigure}[F]{0.45\textwidth}
        \centering
         \includegraphics[width=90mm]{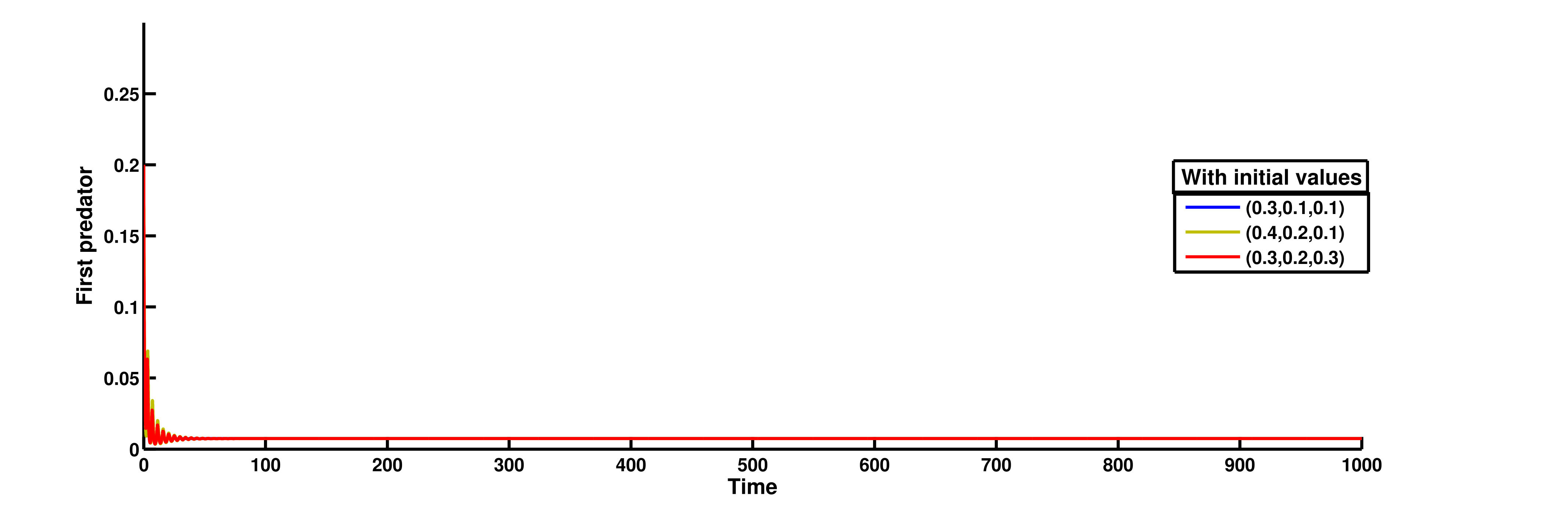}
         \caption{\emph{Time series of first predator}}
         \end{subfigure}
      \hfill
     \begin{subfigure}[F]{0.45\textwidth}
        \includegraphics[width=90mm]{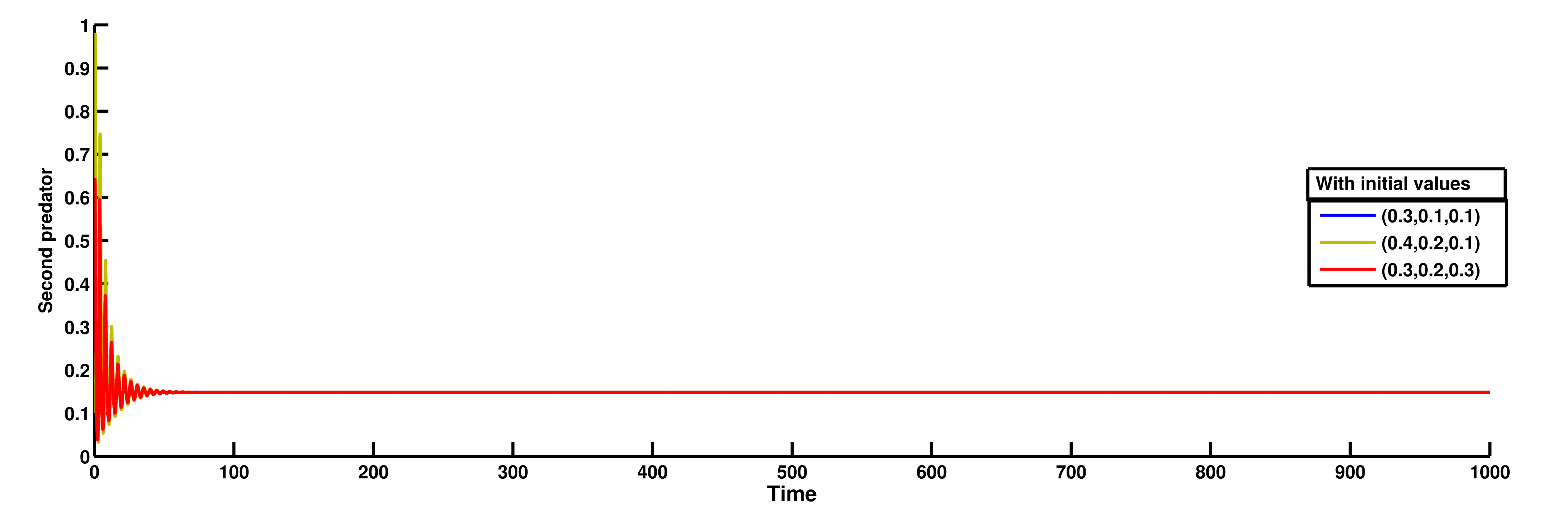}
         \caption{\emph{Time series of second predator} }
    \end{subfigure}
     
     \caption{\emph{Local Stability of coexisting equilibrium $E^*(x^*,y^*, z^*$)} 
}
        \label{lsestar}
\end{figure}

    \subsection{\textit{ Stability of the equilibrium points}}
    
 In this section, at first, we compute and numerically verify the parametric conditions for the existence and stability of the equilibrium points: 
    \par Taking parameter values as $A_3=0.875$, $A_4=0.0125$, $A_5=4.536$ and the rest values from table (\ref{table:1}), the stabilty conditions for local stabilty of the axial equilibrium point $E_1$ are satisfied as $A_1 A_3=0.0175>A_4$ , and $A_7=4.546>A_5$ which can be seen from the figure (\ref{1a}). Various coextant initial populations lead to a predator-free equilibrium point through different trajectories, as shown in figure (\ref{1a}) interpreting the fact that when two predator species go extinct, then prey grows without any disturbances and eventually settles at the carrying capacity of the system, which is found to be true in nature.

To verify the stability conditions of the first predator free equilibrium point $E_3$, we consider $A_3=2.723732$, $A_4=0.516658$ , $A_5=9.3241$ and the rest parameter values from table (\ref{table:1}), all the stability conditions $A_5=9.3241>A_7$ and $\frac{A_2 A_5^2-2A_2A_5A_7+A_2A_7^2+A_3A_5^2-A_3A_5A_7}{A_5 A_7}=2.88978>A_4$ are satisfied for these parameter values. It can be seen in figure (\ref{1b}) confirming the local stabilty of the first predator free equilibrium point $E_3$.

 We consider  $A_3=0.51558$, $A_4=0.010528$, $A_5=1.135458$, and the rest parameters from table (\ref{table:1}) to validate the stability conditions of the second predator free equilibrium point $E_2$. These parameter values satisfy all the stability conditions as  $A_1 A_3=0.0103116<A_4$ , $\frac{A_4 A_6-A_1 A_3 A_6}{A_1 A_3}=0.00109531 < A_5$ , $\frac{A_1 A_3 A_4 A_5+A_1A_3 A_4 A_6-A_4^2 A_6-A_4^2 A_7}{A_1^2 A_3^2 -A_1 A_3 A_4}=170.022 > A_8$, and $  \frac{(A_1 A_3 A_5 + A_1 A_3 A_6 - A_4 A_6)}{A_4} = 1.11096 < A_7 $. Figure (\ref{1c}) is the numerical simulation verifying the local stability criteria of the second predator free equilibrium point $E_2$.
     
In an ecosystem, the most desirable outcome of a biosystem is the stablity of its interior equilibrium. For the same reason, we consider parameter values as $A_3=2$, $A_4=0.4$, 
 $A_5=4.536$ and the rest parameter values from table (\ref{table:1}) which satisfies the local stabilty conditions of the coexistant equilibrium point $E^*$ given by Routh-Hurwitz criterion as $G_1=0.844371>0$, $G_2=1.96046>0$, $G_3=1.22722>0$, and $G_1 G_2-G_3= 0.428135>0$. Numerical simulation satisfying the abovementioned parameter values is shown in figure(2) and it confirms the existence as well as local stability of coexisting equilibrium point $E^*$ and hence, verifies the theoretical results.\\
Figure(\ref{global}) indicates that the axial equilibrium point as well as the cohabitation equilibrium point are globally asymptotically stable. As shown in figure(\ref{3b}), starting from five different initial values, the solution trajectories of system (\ref{c}) ultimately converge to the coexistence equilibrium point $E^*$. All parameter values from the table (\ref{table:1}) are considered here. It confirms the global stability of the cohabitation equilibrium point $E^*$. Morever, {Figure(\ref{3a}) validates the global stability of the axial equilibrium point $E_1$. In this context, except for the values $A_3=0.875$, $A_4=0.0125$, $A_5=4.536$ , all other parameter values shown in the table  (\ref{table:1}) are taken into consideration here.

\begin{figure}[H]
     \centering
     \begin{subfigure}[F]{0.45\textwidth}
       \centering
         \includegraphics[width=90mm]{Figure1a.jpg}
         \caption{\emph{Axial Equilibrium point $E_1$=(1,0,0) is globally asymptotically stable}}
          \label{3a}
         \end{subfigure}
          \hfill
     \begin{subfigure}[F]{0.45\textwidth}
        \centering
         \includegraphics[width=90mm]{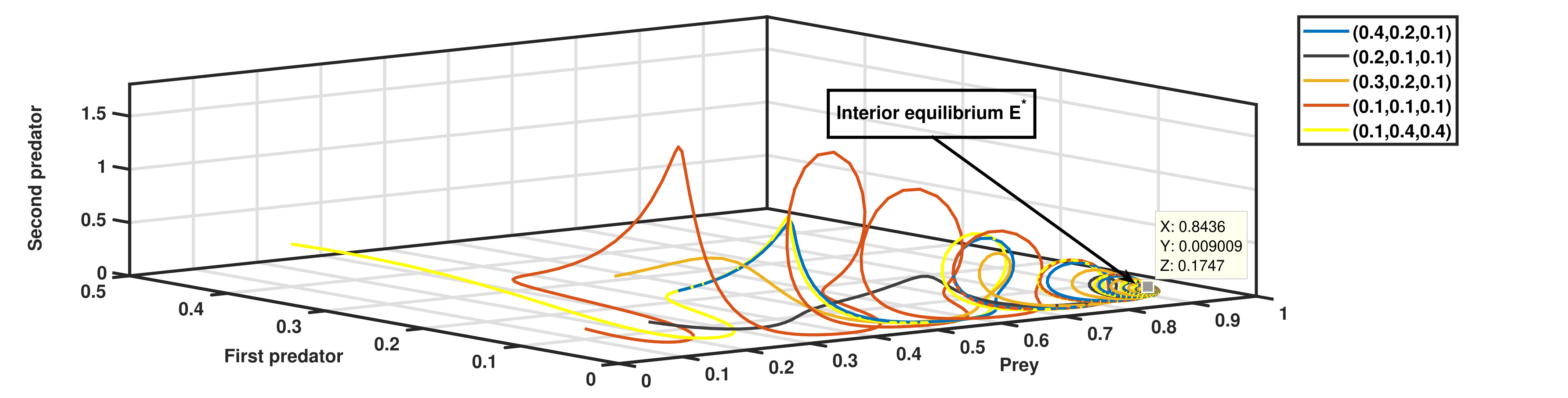}
         \caption{\emph{Interior Equilibrium point $E^*$ is globally asymptotically stable}} \label{3b}
         \end{subfigure}
     \caption{\emph{Phase portraits showing global stability of different equilibrium points} 
}
        \label{global}
\end{figure}

\subsection{\textit{ Parameters affecting stability of equilibrium points}} In this section, the impacts of different parameters on the stability of equilibrium points are analysed.
 \subsubsection{\emph{Role of growth and death factors of both the predators}}
$A_3$ is the parameter related to the death rate of the first predator. The mortality rate of a species is a critical determinant of its ability to persist. The figure (\ref{A3}) is drawn varying $A_3$ up to the value $0.5$ and the rest of the parameter values from table  (\ref{table:1}). It shows that the dynamical system experiences a saddle-node bifurcation at the coexisting equilibrium point when the parameter $A_3 = A_3^S=0.04267$. When $A_3<A_3^S$ is employed, none of the coexisting equilibrium points are  existent. However, when $A_3>A_3^S$, two simultaneous equilibrium points come to exist, one is stable and another is unstable i.e., the coexisting equilibrium point with a relatively lower density of the first predator than the other equilibrium point, becomes stable. It is noteworthy that reducing $A_3$ to $A_3=0.4$ with rest of the parameter values from the table (\ref{table:1}) induces extinction of second predator as $A_1 A_3=0.008<A_4$ , $\frac{A_4 A_6-A_1 A_3 A_6}{A_1 A_3}=2.548 < A_5$,  $\frac{A_1 A_3 A_4 A_5+A_1A_3 A_4 A_6-A_4^2 A_6-A_4^2 A_7}{A_1^2 A_3^2 -A_1 A_3 A_4}=229.91 > A_8$, and $\frac{(A_1 A_3 A_5 + A_1 A_3 A_6 - A_4 A_6)}{A_4} = 0.03976 < A_7 $. So, bistability of both the coexisting equilibrium as well as the second predator free equilibrium comes to exist. Even considering $A_3=1$ and rest parameter values from table  (\ref{table:1}), shows that all the three species survive as $G_1=0.720216>0$ , $G_2=2.43926>0$ , $G_3=1.15537>0$, and $G_1 G_2-G_3= 0.601422>0$. But, taking $A_3=21$ alongwith the same other parameter values as before then not only first predator vanishes but also second predator faces extinction as $A_1 A_3=0.42>A_4$ and $A_7 = 4.546 > A_5$.\\
 \begin{figure}[H]
         \centering
         \includegraphics[width=\textwidth]{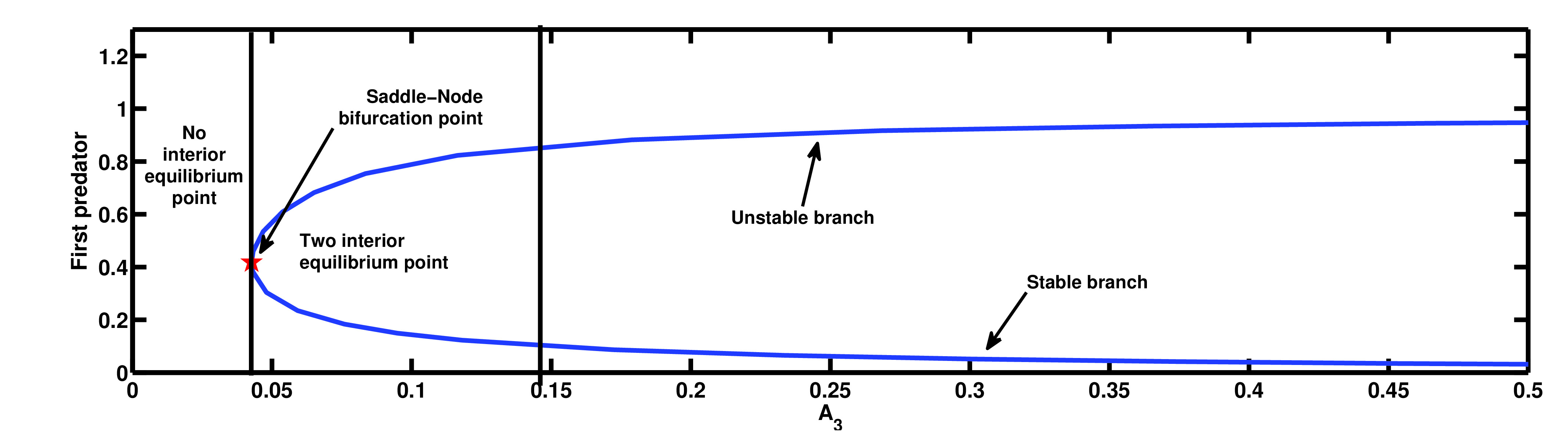}
         \caption{\emph{The emergence of saddle-node bifurcation for parameter $A_3$} }
         \label{A3}
     \end{figure}
     
   $A_4$ is the parameter related to the conversion efficiency or growth rate of the first predator. Figure (\ref{A4}) depicts the progression of the coexisting equilibrium as $A_4$ increases while retaining the other parameter values from table  (\ref{table:1}).
   \begin{figure}[H]
         \centering
         \includegraphics[width=\textwidth]{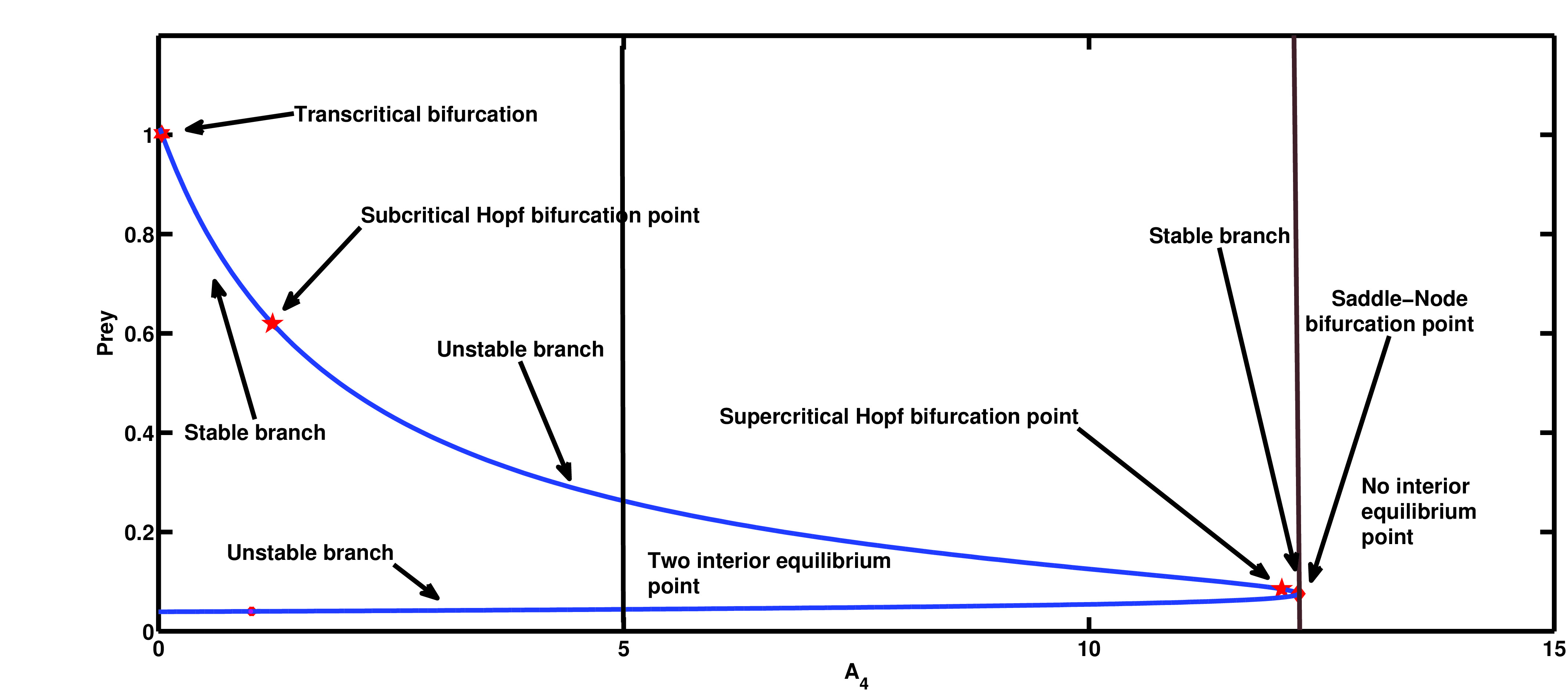}
         \caption{\emph{ The appearance of subcritical Hopf bifurcation, supercritical Hopf bifurcation, saddle-node bifurcation, and transcritical bifurcation for parameter $A_4$}}
         \label{A4}
     \end{figure}
   At $A_4=0$ alongwith other parameter values from table  (\ref{table:1}), at first, the first predator vanishes which in turn results in the extinction of the second predator as $A_1 A_3>A_4$ and $A_7 = 4.546 > A_5$. One of the coexistent equilibrium points having comparatively high prey density becomes an attractive fixed point due to transcritical bifurcation at $A_4=A_4^T=0.04$. Considering $A_4=0.4$, coexistence of all the three species is possible as $G_1=0.844371>0$, $G_2=1.96046>0$, $G_3=1.22722>0$, and $G_1 G_2-G_3= 0.428135>0$.  On the further increase of the same parameter value, the coextant equilibrium point becomes unstable and the system shows a subcritical Hopf bifurcation at the point $A_4=A_4^H=1.2252$  as $G_1(A_4^{H})=0.619461>0$,  $G_2(A_4^{H})=5.01653>0$, $G_3(A_4^{H})=3.10759>0$, $G_1(A_4^{H})$ $G_2(A_4^{H})$-$G_3(A_4^{H})$=0, and $G_1(A_4^{H}) G_2'(A_4^{H}) +G_2(A_4^{H})G_1' (A_4^{H}) -G_3'(A_4^{H}) =129.828 \neq 0$.i.e., it satisfies the NASC (as specified in Theorem (\ref{t9})) for the existence of Hopf bifurcation. 
Procedures mentioned by Hazzard et.al \cite{has} are employed to determine the nature and direction of bifurcating periodic solutions for the mentioned parameter values, and we get
$C_1(0)=0.806671 + 14.3025i$, $\mu_2=-1.23637<0$, $\beta_2=1.61334>0$, and $\alpha^{'}(0)=0.652451>0$.
Negative value of $\mu_2$ alongwith positive values of $\beta_2$, and $\alpha^{'}(0)$ indicate subcritical nature of the Hopf bifurcation (as mentioned in theorem (\ref{t10})). Also the value of first Lyapunov coefficient = $2.177019e^{-1}$. Biologically it means that initially $A_4^{H}-\epsilon<A_4<A_4^{H}$, $\epsilon$ is a small quantity, coextant population follows a spiral trajectory, to reach the coexisting equillibrium which can be seen from figure (\ref{A4}).
The initial population gets to the stable coexisting equilibrium point with the aid of an unstable limit cycle encircling it. The drawback of this unstable limit cycle is that any population that coexists having the population size outside of it will never be able to reach the equilibrium point when the original coexisting population is numerically simulated while maintaining the specified parameter values, figures (\ref{before a4}) and (\ref{after A4}) depict the precise situation.

  \begin{figure}[H]
     \centering
     \begin{subfigure}[F]{0.45\textwidth}
         \centering
         \includegraphics[width=1.2\textwidth]{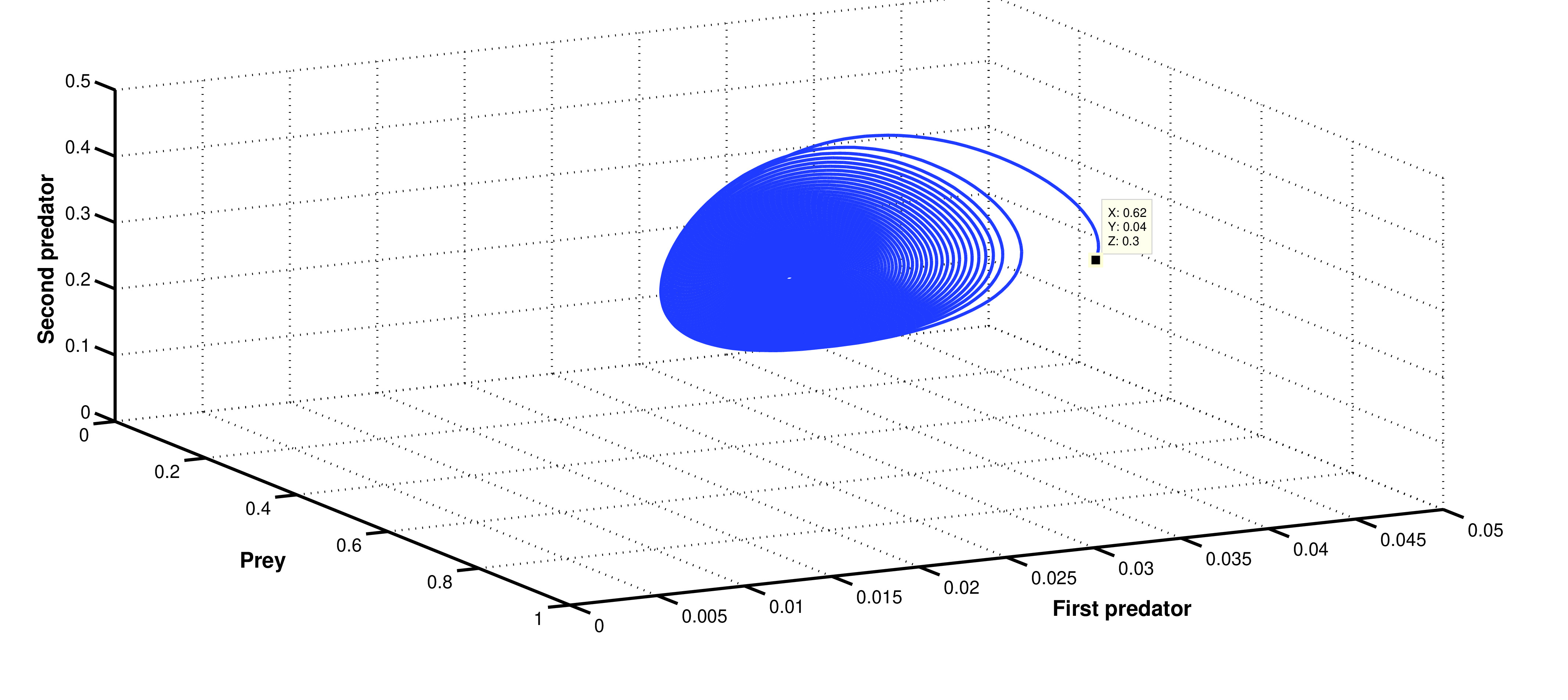}
         \caption{\emph{Phase portrait} }
         \label{}
     \end{subfigure}
     \hfill
     \begin{subfigure}[F]{0.45\textwidth}
         \centering
         \includegraphics[width=\textwidth]{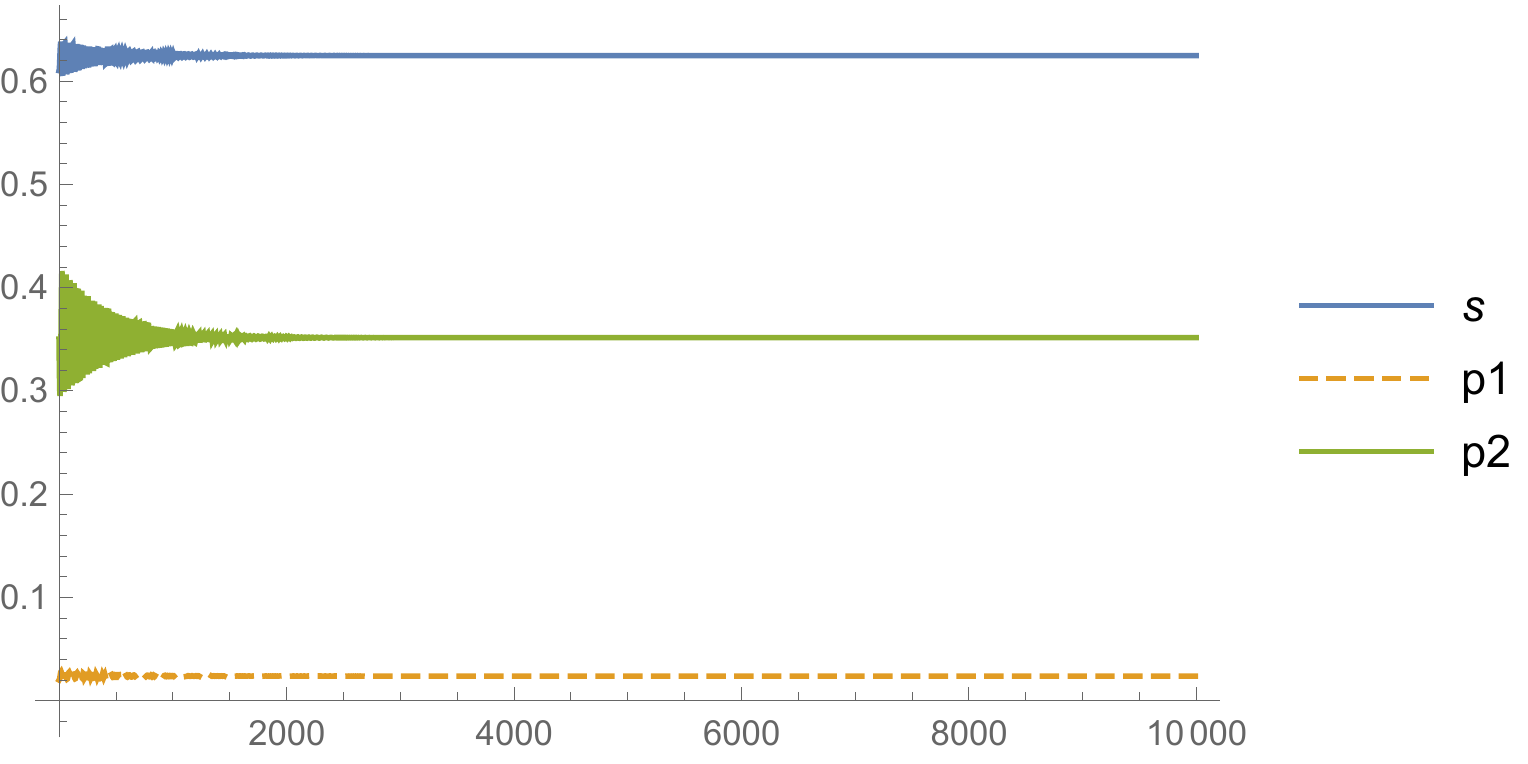}
         \caption{\emph{Time series of all three populations
}}
     \end{subfigure}
        \caption{\emph{The illustration portrays the condition at $A_4<A_4^{H}$, denoting the phase prior to subcritical Hopf bifurcation at $A_4=A_4^{H}=1.2252$.}
}
        \label{before a4}
\end{figure}
\vspace*{1cm}

\begin{figure}[H]
     \centering
     \begin{subfigure}[F]{0.45\textwidth}
         \centering
         \includegraphics[width=1.2\textwidth]{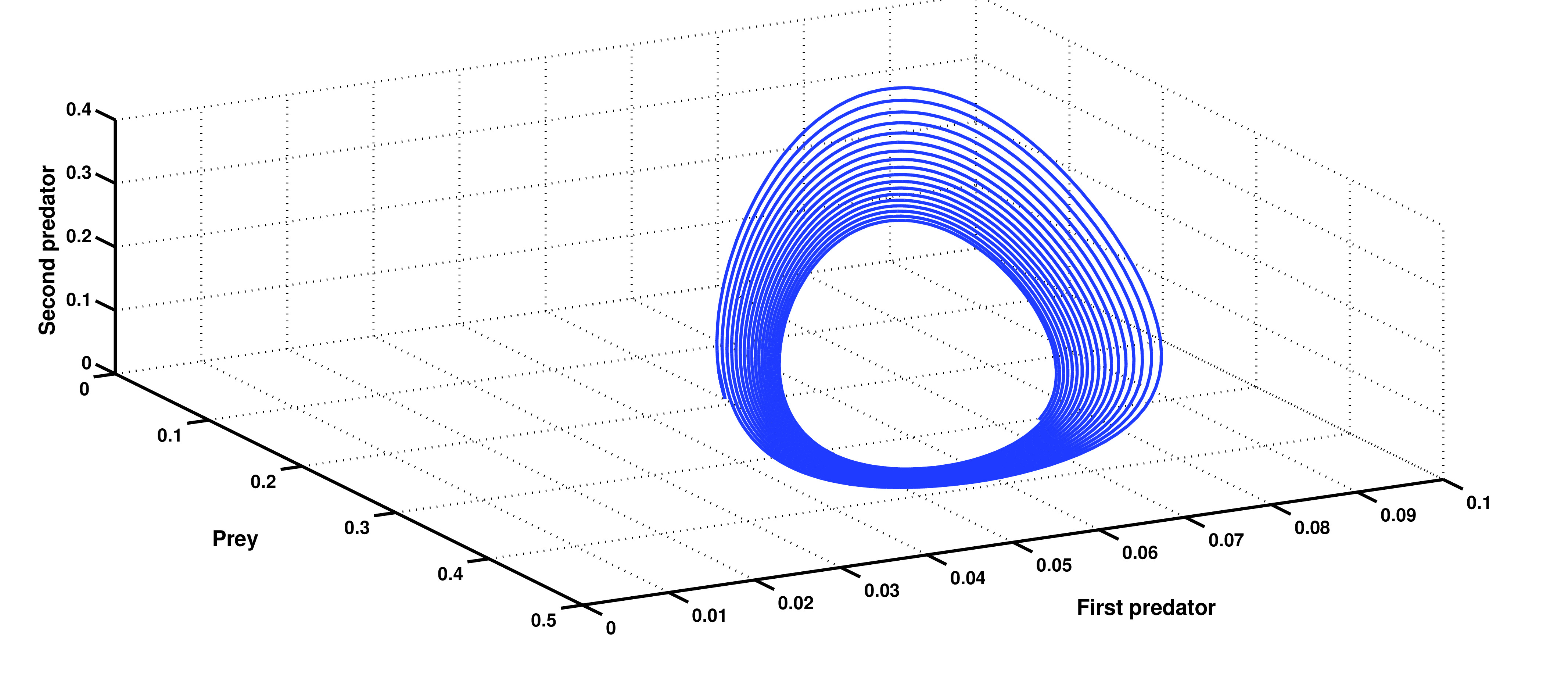}
         \caption{\emph{Phase portrait} }
         \label{ph}
     \end{subfigure}
     \hfill
     \begin{subfigure}[F]{0.45\textwidth}
         \centering
         \includegraphics[width=\textwidth]{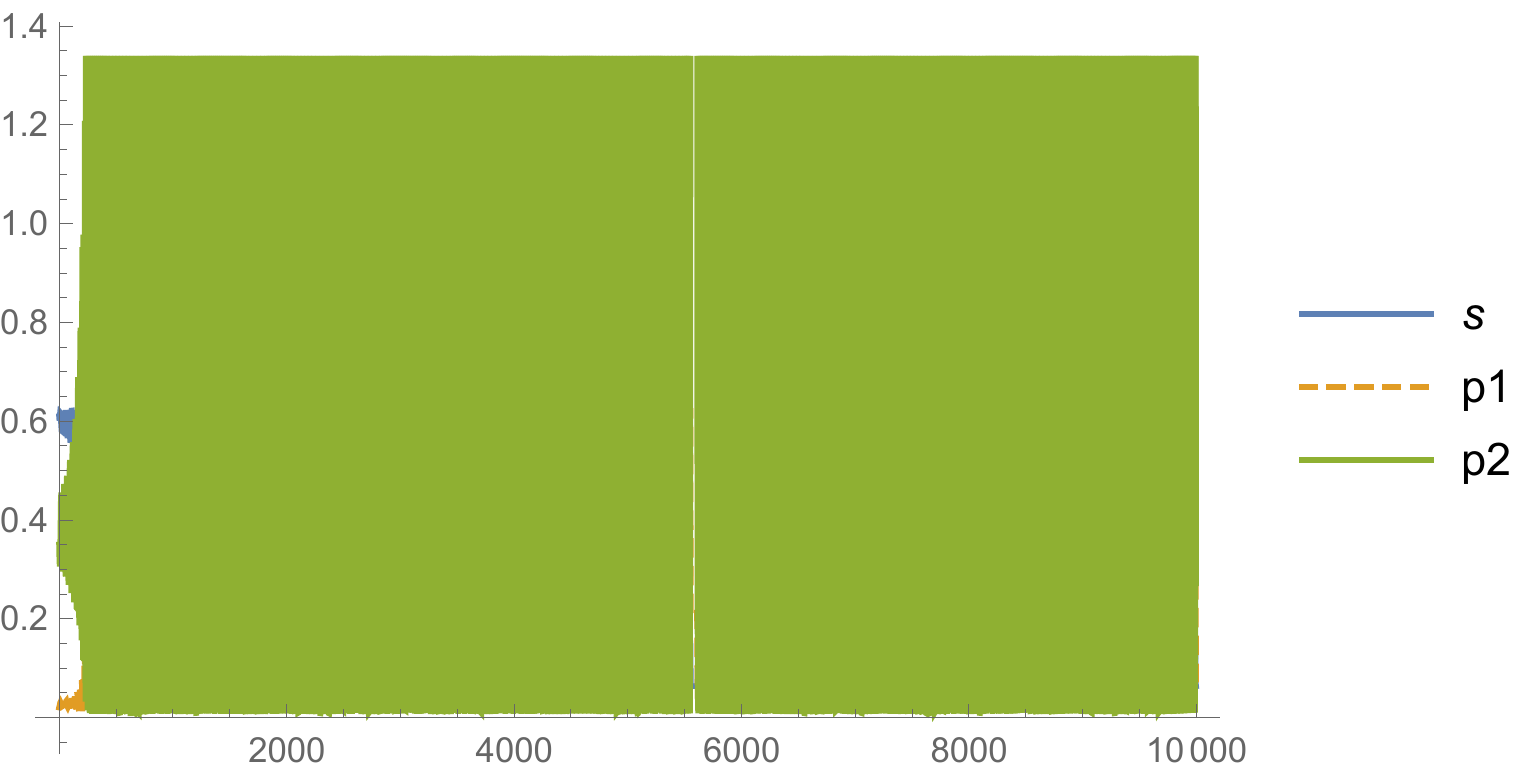}
         \caption{\emph{Time series of all three populations
 }}
     \end{subfigure}
        \caption{\emph{Depiction of the situation at $A_4>A_4^{H}$, which represents the state following a subcritical Hopf bifurcation at $A_4=A_4^{H}=1.2252$.}
}
        \label{after A4}
\end{figure}
\vspace*{1cm}
 On the further increase of the parameter $A_4$(see\emph{ figure (\ref{A4})}), at $A_4^{H*}$=12.071019, the system (\ref{c}) undergoes another Hopf bifurcation at the equilibrium point , which is supercritical, as it satisfies the NASC , $G_1(A_4^{H*})=0.0851308>0 $, $G_2(A_4^{H*})=11.2832>0 $, $G_3(A_4^{H*})=0.960551>0$ , $G_1(A_4^{H*})$ $G_2(A_4^{H*})$-$G_3(A_4^{H*})$=0, and 
$G_1(A_4^{H*})$ $G_2'$($A_4^{H*}$) +$G_2(A_4^{H*})$ $G_1'$($A_4^{H*}$) -$G_3'$($A_4^{H*}$) = 4.78344 $ \neq 0$. Also we get, $\alpha^{'}(0)<0$, $\mu_2<0$ and $\beta_2<0$. Therefore, according to theorem (\ref{t10}) , it indicates the supercritical nature of the Hopf bifurcation. Also, the value of the First Lyapunov coefficient = -7.746701. Figures (\ref{before a4s}) and \ref{after a4s}) describe the exact situation. The coexisting equilibrium point ceases to exist after an increment of the parameter value $A_4=A_4^S=12.248$, which is the saddle-node bifurcation point. When $A_4=3$ the second predator species faces extinction whereas other species survive as $A_1 A_3=0.04<A_4$ , $\frac{A_4 A_6-A_1 A_3 A_6}{A_1 A_3}=3.848 < A_5$ , $\frac{A_1 A_3 A_4 A_5+A_1A_3 A_4 A_6-A_4^2 A_6-A_4^2 A_7}{A_1^2 A_3^2 -A_1 A_3 A_4}=344.86 > A_8$, and $  \frac{(A_1 A_3 A_5 + A_1 A_3 A_6 - A_4 A_6)}{A_4} = 0.00917333 < A_7$.\\
    
   \begin{figure}[H]
     \centering
     \begin{subfigure}[F]{0.45\textwidth}
         \centering
         \includegraphics[width=1.2\textwidth]{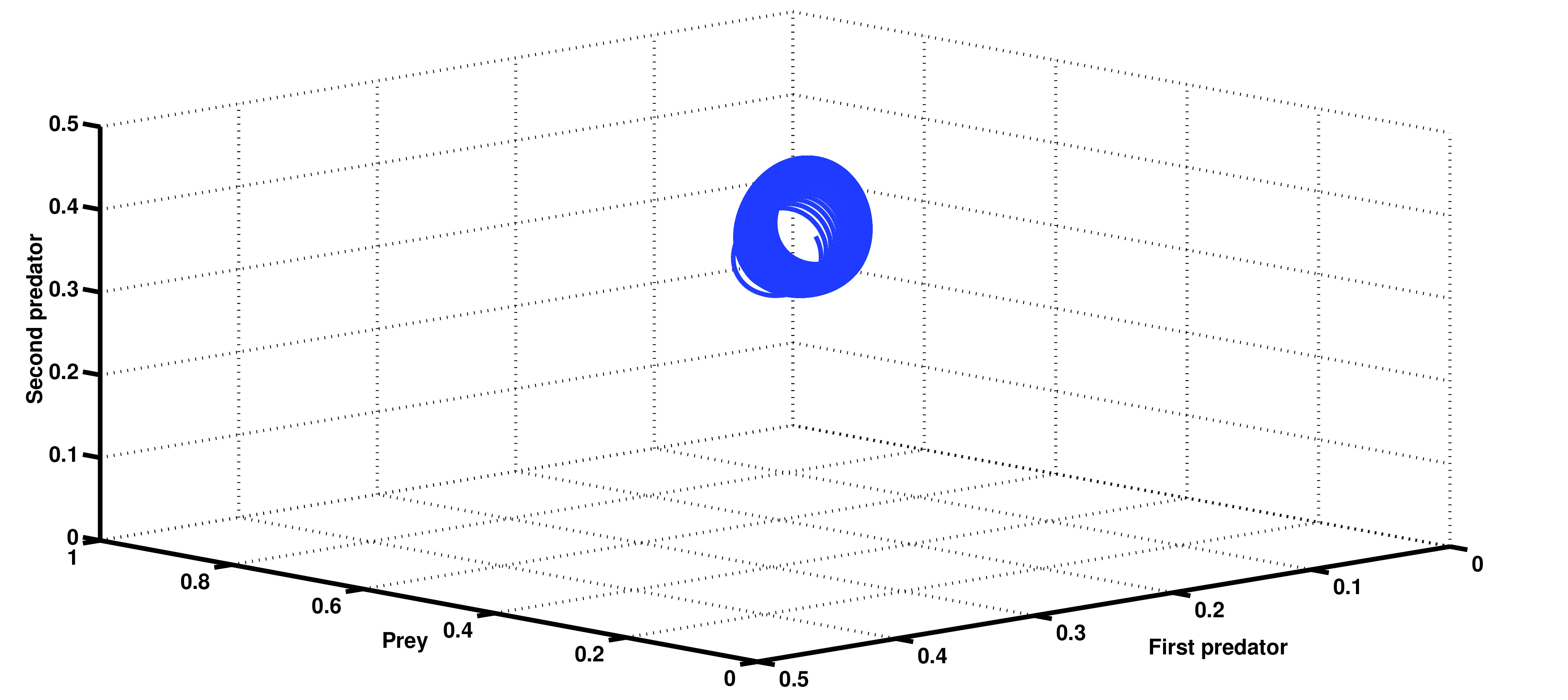}
         \caption{\emph{Phase portrait }}
         \label{ph5}
     \end{subfigure}
     \hfill
     \begin{subfigure}[F]{0.45\textwidth}
         \centering
         \includegraphics[width=\textwidth]{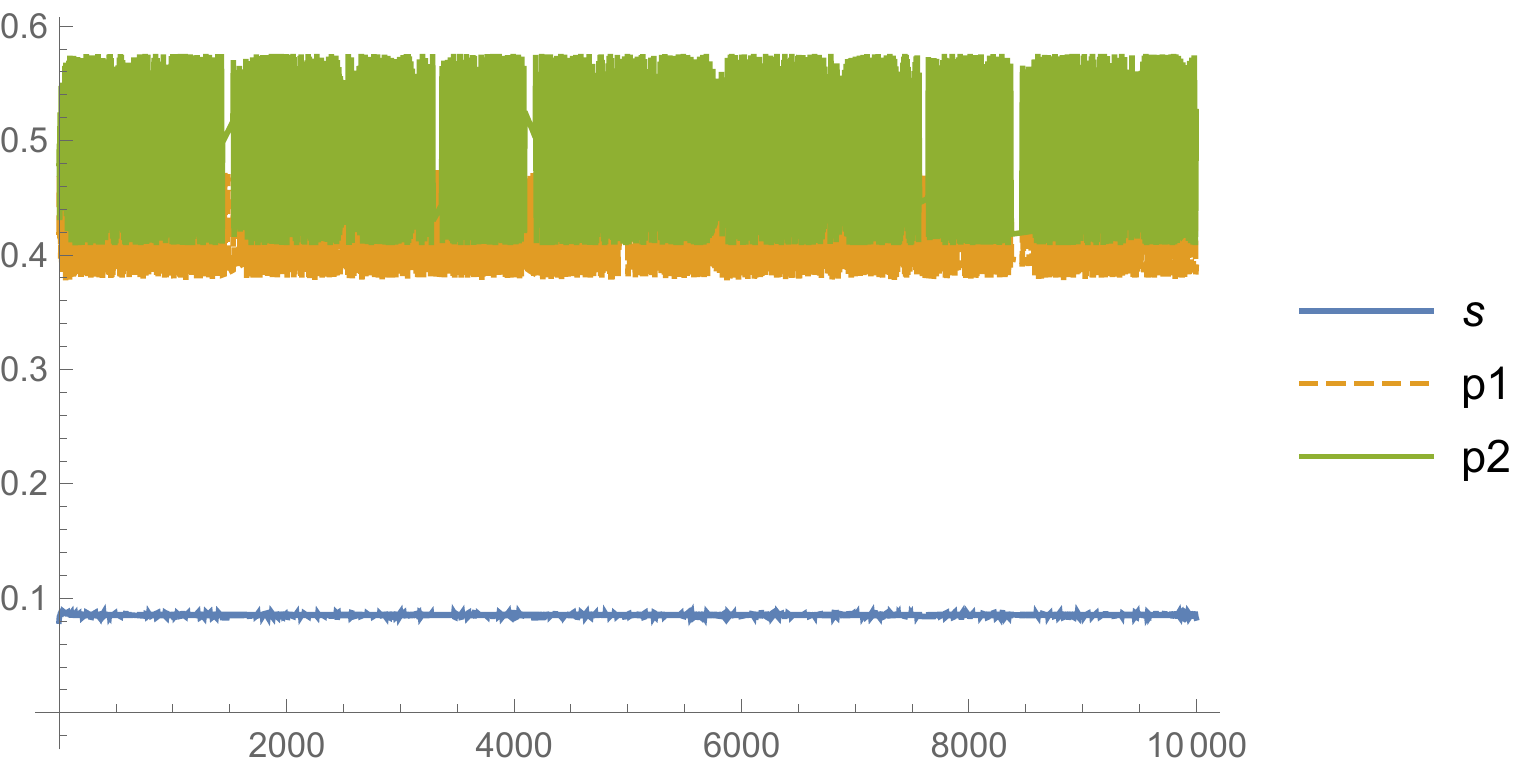}
         \caption{\emph{Time series of all three populations}}
     \end{subfigure}
      \hfill
     \begin{subfigure}[F]{0.45\textwidth}
         \centering
         \includegraphics[width=\textwidth]{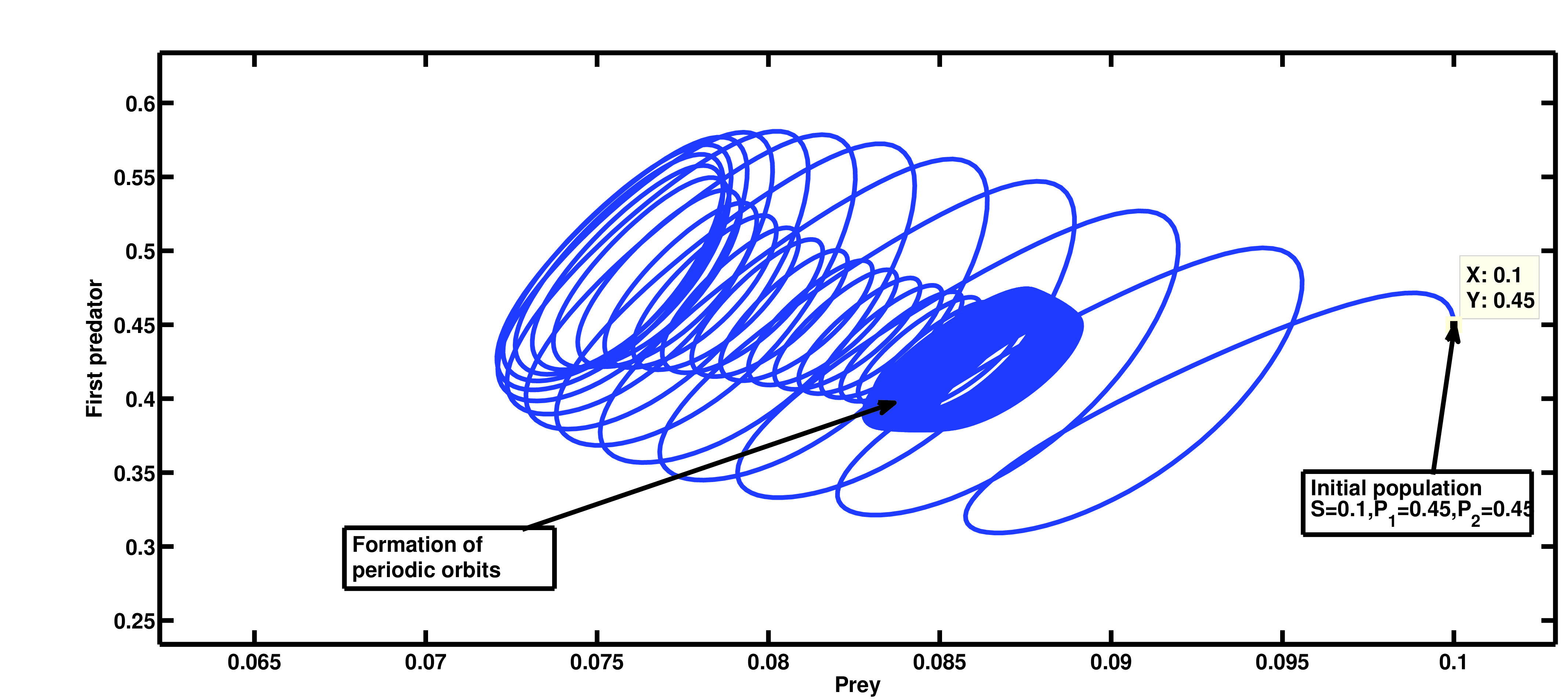}
         \caption{\emph{Prey vs first predator}}
     \end{subfigure}
      \hfill
     \begin{subfigure}[F]{0.45\textwidth}
         \centering
         \includegraphics[width=\textwidth]{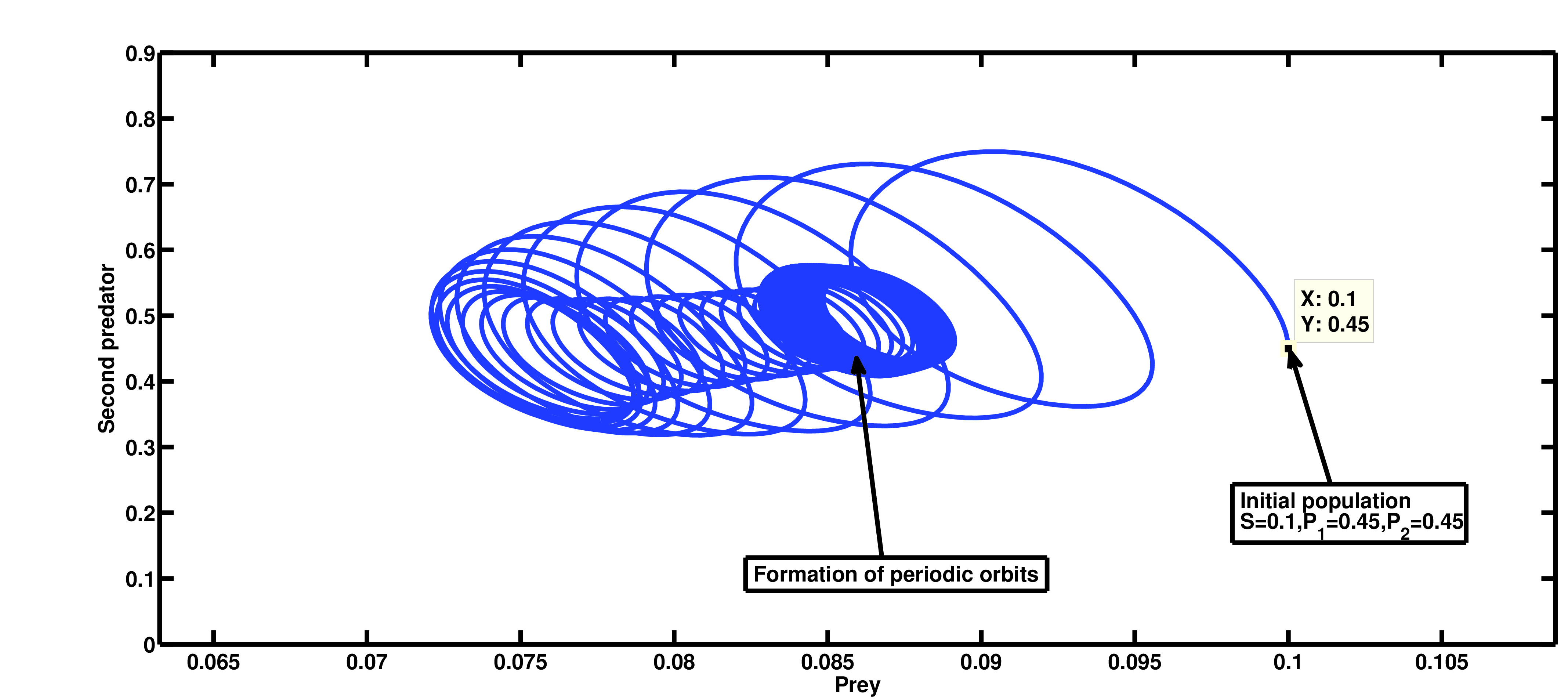}
         \caption{\emph{Prey vs second predator}}
     \end{subfigure}
      \hfill
     \begin{subfigure}[F]{0.45\textwidth}
         \centering
         \includegraphics[width=\textwidth]{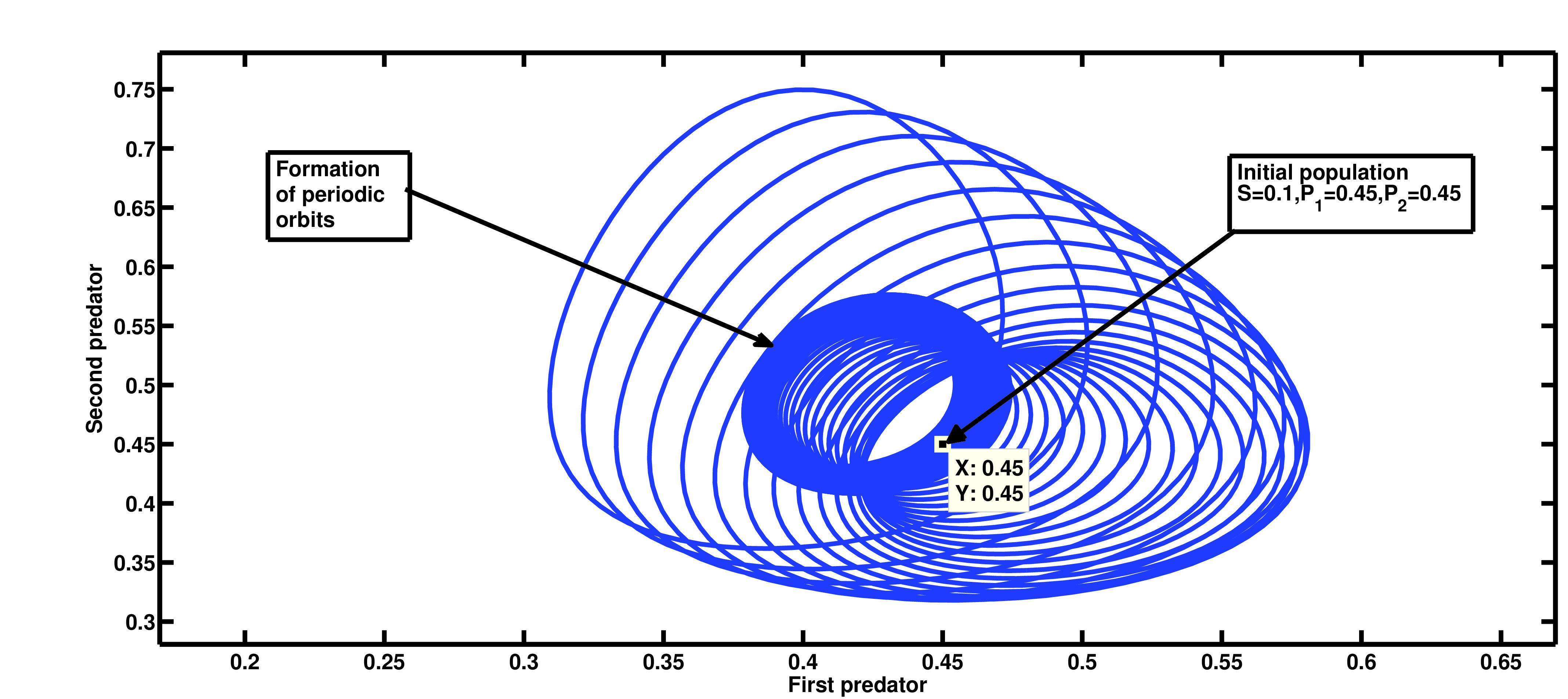}
         \caption{\emph{First predator vs second predator}}
     \end{subfigure}
        \caption{\emph{Illustration of the scenario at $A_4<A_4^{H*}$ (i.e., the state before super critical Hopf bifurcation at $A_4=A_4^{H*}=12.071019$.)}
}
        \label{before a4s}
\end{figure}
\vspace*{1cm}

\begin{figure}[H]
     \centering
     \begin{subfigure}[F]{0.45\textwidth}
         \centering
         \includegraphics[width=1.2\textwidth]{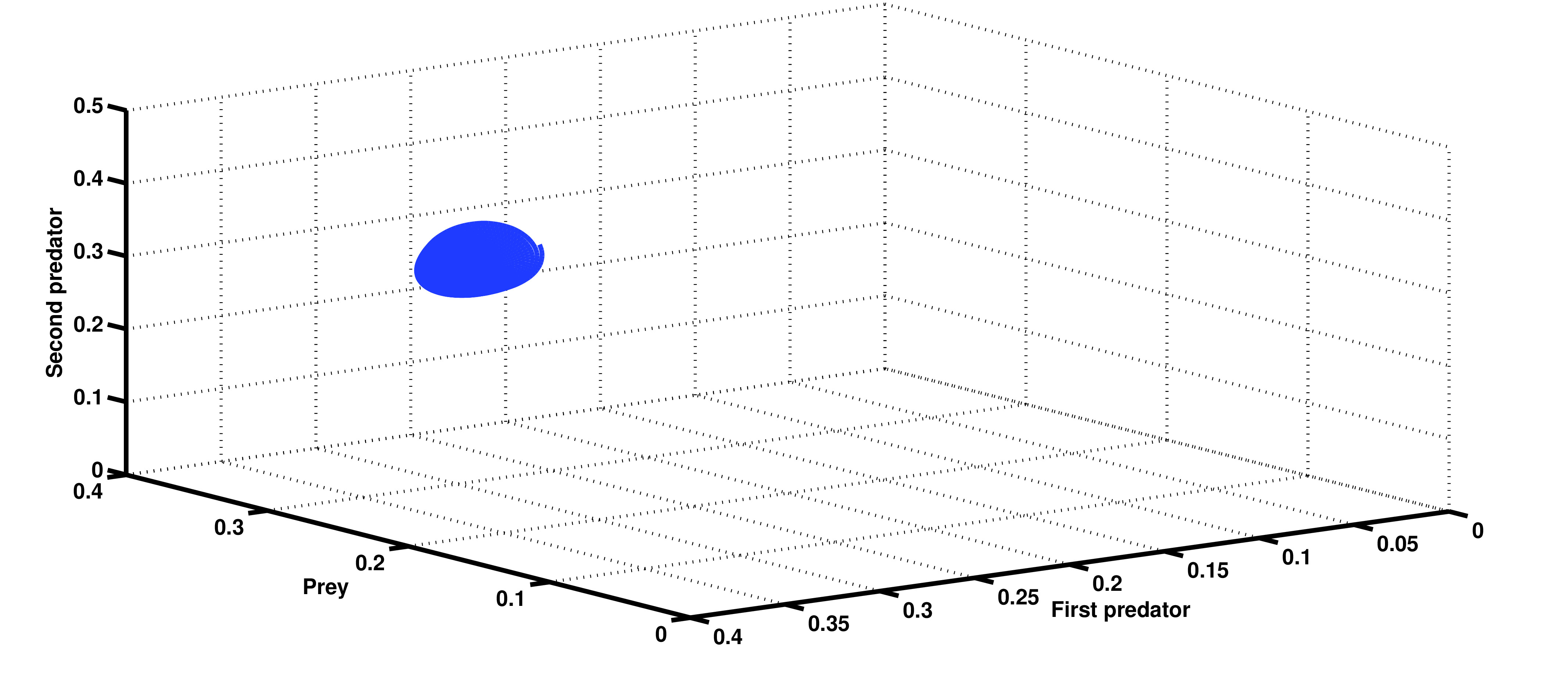} 
         \caption{\emph{Phase portrait} }
         \label{ph4}
     \end{subfigure}
     \hfill
     \begin{subfigure}[F]{0.45\textwidth}
         \centering
         \includegraphics[width=\textwidth]{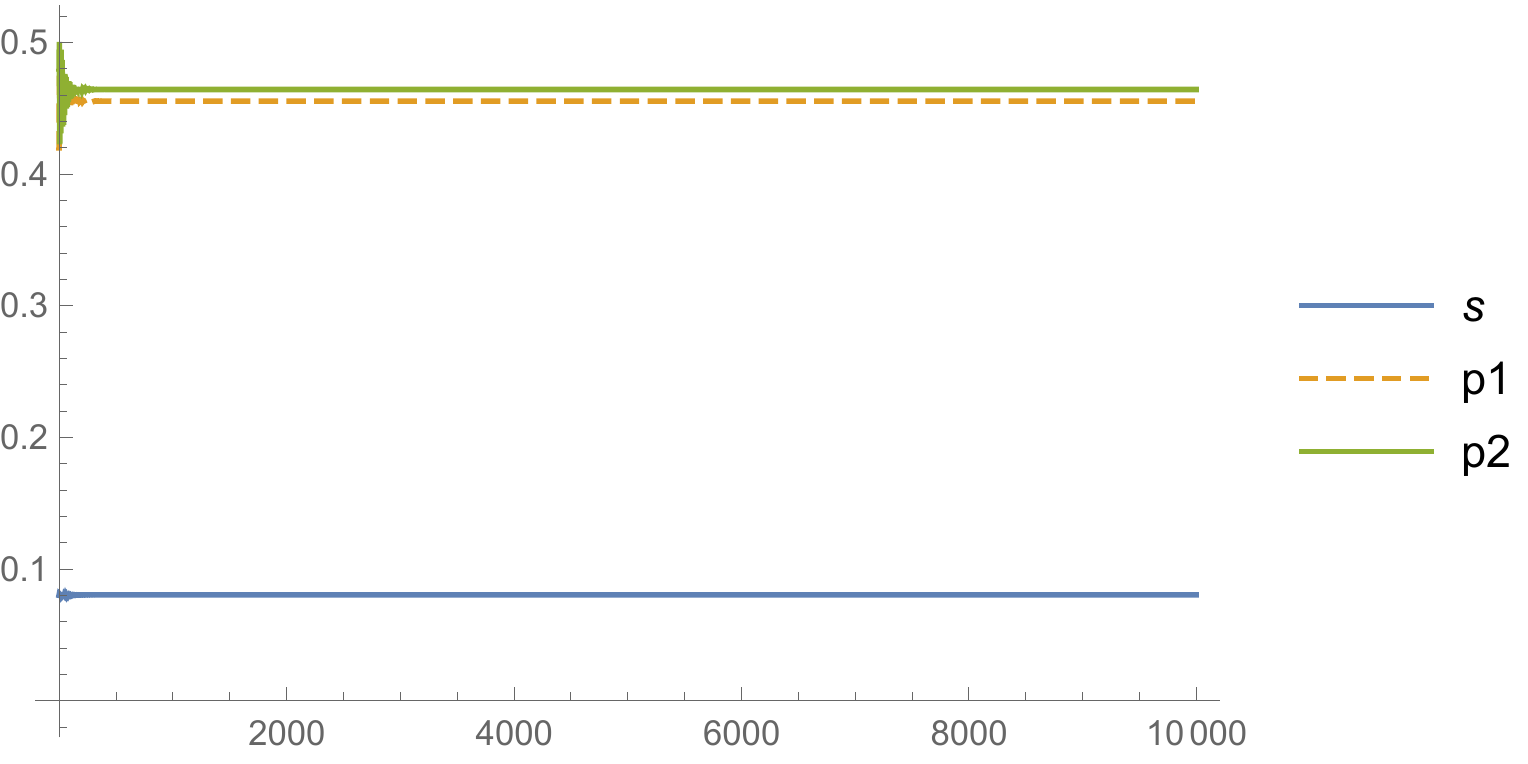}
         \caption{\emph{Time series of all three populations}}
     \end{subfigure}
      \hfill
     \begin{subfigure}[F]{0.45\textwidth}
         \centering
         \includegraphics[width=\textwidth]{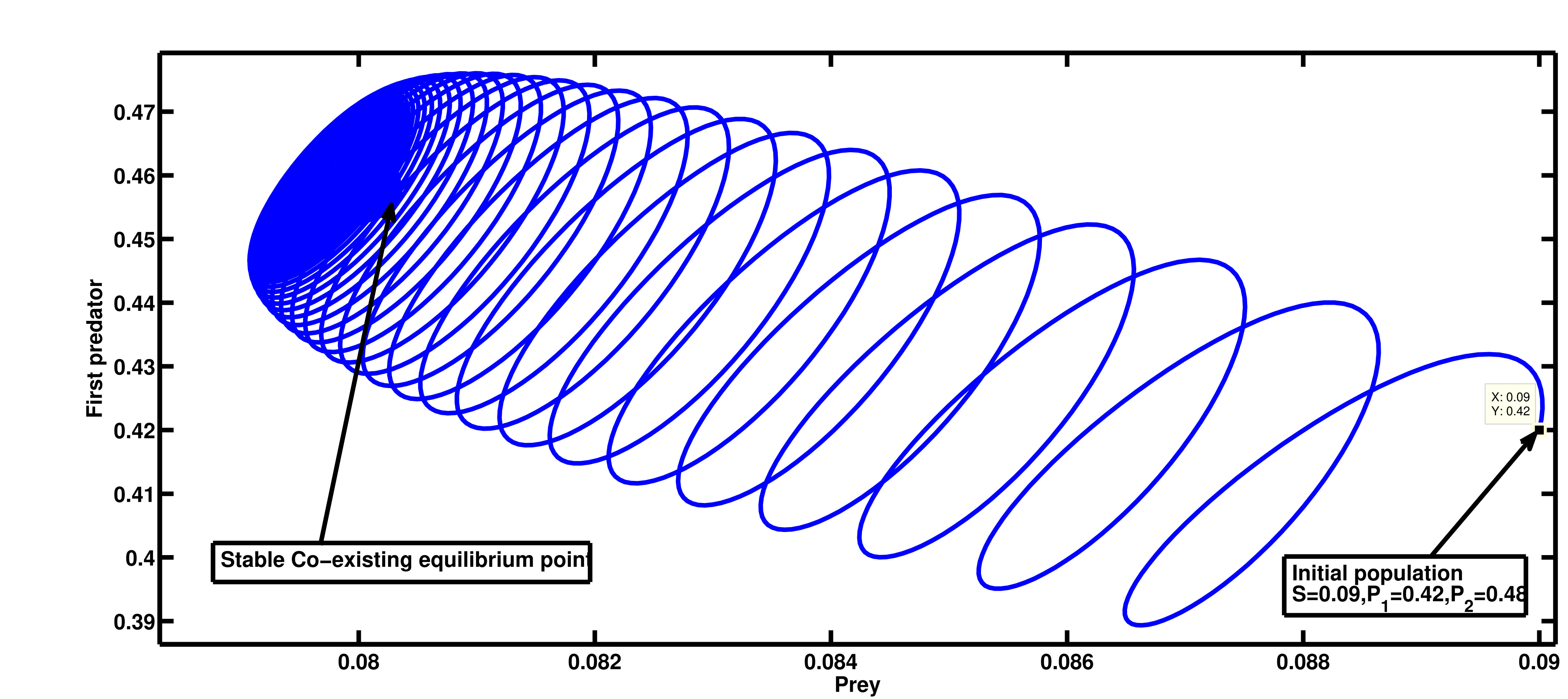}
         \caption{\emph{Prey vs first predator}}
     \end{subfigure}
      \hfill
     \begin{subfigure}[F]{0.45\textwidth}
         \centering
         \includegraphics[width=\textwidth]{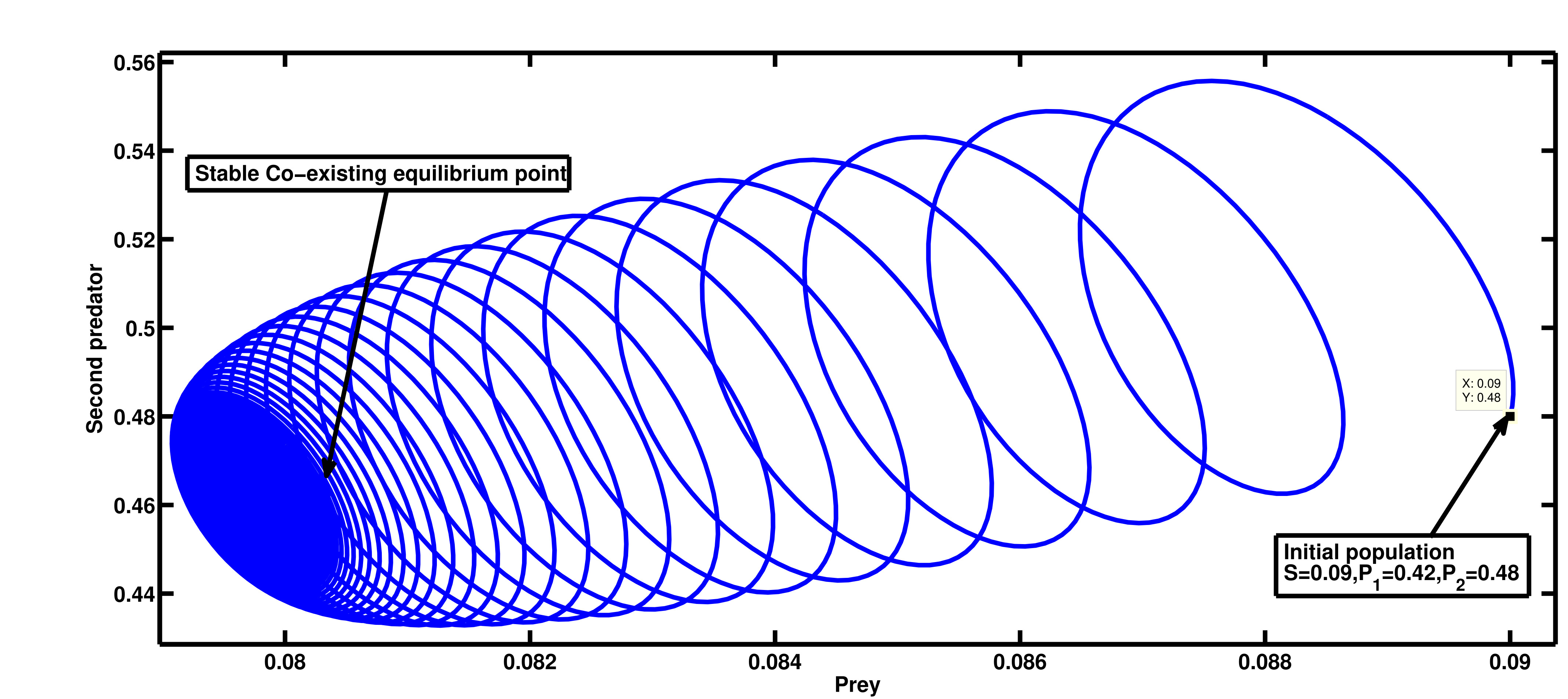}
         \caption{\emph{Prey vs second predator}}
     \end{subfigure}
      \hfill
     \begin{subfigure}[F]{0.45\textwidth}
         \centering
         \includegraphics[width=\textwidth]{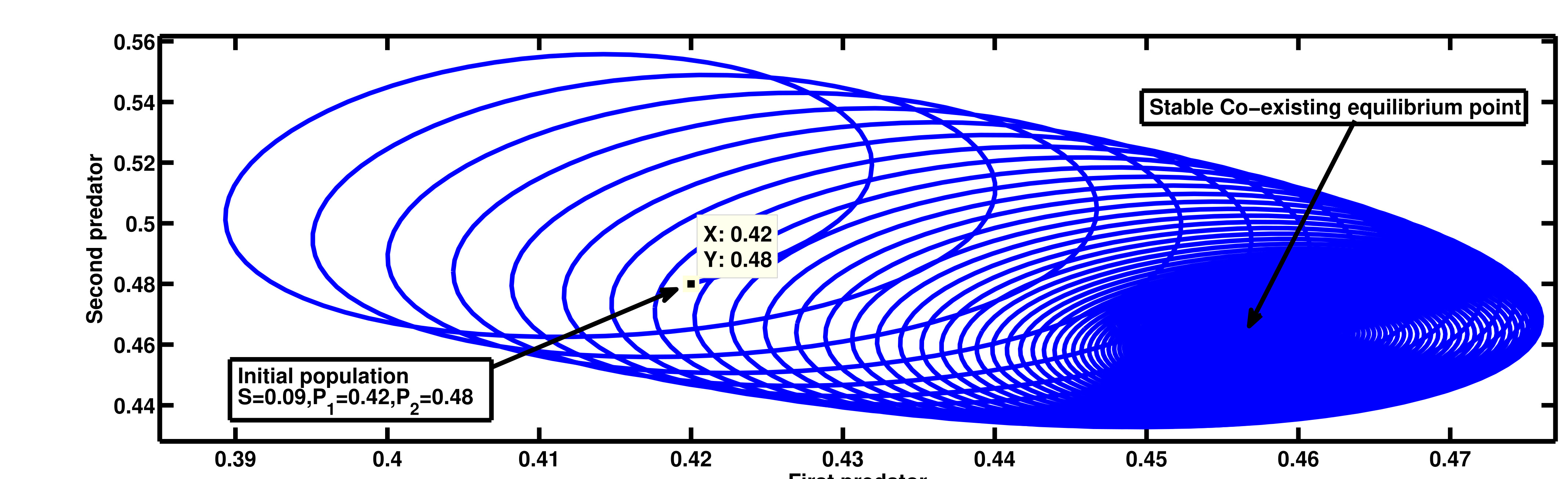}
         \caption{\emph{First predator vs second predator}}
     \end{subfigure}
        \caption{\emph{An illustration of the scenario at $A_4>A_4^{H*}$, which represents the state following a supercritical Hopf bifurcation at $A_4=A_4^{H*}=12.071019$.}
}
        \label{after a4s}
\end{figure}
 
Figure (\ref{A5}) depicts the effect of $A_5$ ( related to the conversion efficiency or growth rate of the second predator) on the coexisting equilibrium point as the said parameter is varied keeping rest of the parameter values as same as in table  (\ref{table:1}). As $A_5$ is increased the unstable branch of the coexisting equilibrium point becomes stable through supercritical Hopf bifurcation occurring at $A_5=A_5^{H*}=2.007197$, and the equilibrium point gains stability as all the NASC of existence of Hopf bifurcation (as discussed in Theorem \ref{t9}) are satisfied as $G_1(A_5^{H*})=0.825096>0 $, $G_2(A_5^{H*})=5.83388>0 $, $G_3(A_5^{H*})=4.81351>0$ , $G_1(A_5^{H*})$ $G_2(A_5^{H*})$-$G_3(A_5^{H*})$=0, and $G_1(A_5^{H*})$ $G_2'$($A_5^{H*}$) +$G_2(A_5^{H*})$ $G_1'$($A_5^{H*}$) -$G_3'$($A_5^{H*}$) = -5.83715 $ \neq 0.$ and we get,  $\alpha^{'}(0)<0$, $\mu_2<0$ and $\beta_2<0$. Also, first Lyapunov coefficient = $-4.606137e^{-1}$. For example, at $A_5=4.536$, $G_1=0.844371>0$, $G_2=1.96046>0$, $G_3=1.22722>0$, and $G_1 G_2-G_3= 0.428135>0$ i.e., coexisting equilibrium is stable. The situation is accurately depicted by Figures (\ref{before a5s}) and (\ref{after a5s}). However, through transcritical bifurcation which occurs at $A_5=A_5^{T}=5.3449$, the stable branch becomes unstable again, but at the same time another branch of the first predator free equilibrium point occurs with comparatively low prey density and gains stability. For instance, at $A_5=5.5$, the extinction of the first predator but the survival of the other two species is observed as $A_5=5.5>A_7$ and $\frac{A_2 A_5^2-2A_2A_5A_7+A_2A_7^2+A_3A_5^2-A_3A_5A_7}{A_5 A_7}=0.42153>A_4$.
   \begin{figure}[H]
         \centering
         \includegraphics[width=\textwidth]{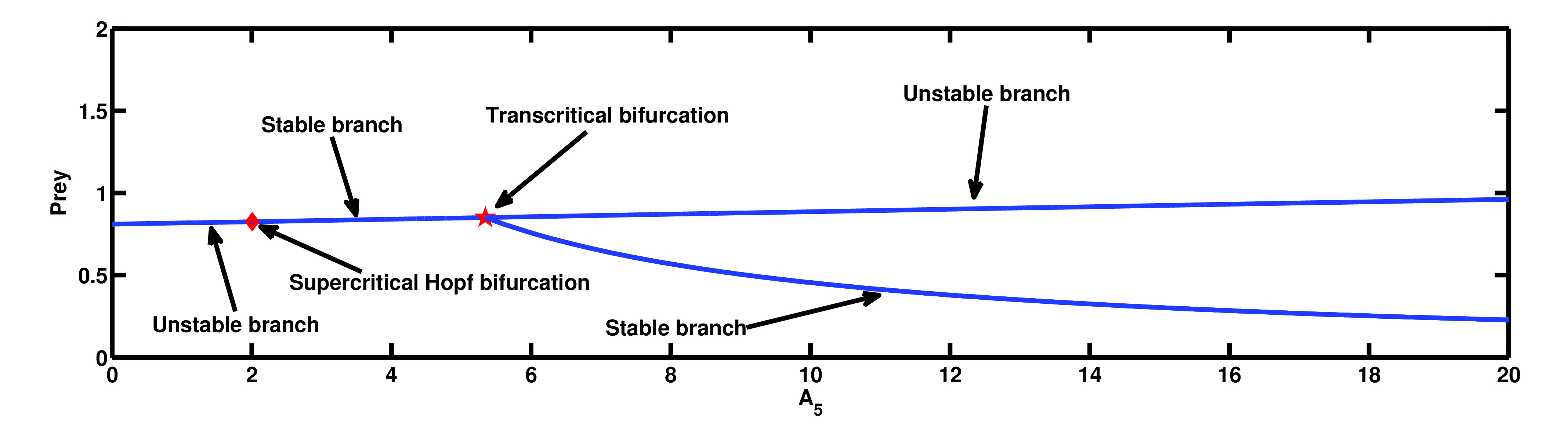}
         \caption{\emph{The occurrence of supercritical Hopf bifurcation and transcritical bifurcation for parameter $A_5$}}
         \label{A5}
     \end{figure} 
     
  
  The parameter related to the mortality rate of the second predator is denoted by $A_7$. Figure (\ref{A7}) illustrates the dynamical scenario of the system as $A_7$ is varied keeping the rest of the parameter values as in table  (\ref{table:1}). At $A_7=30$, the second predator becomes extinct as  $A_1 A_3=0.04<A_4$ , $\frac{A_4 A_6-A_1 A_3 A_6}{A_1 A_3}=0.468 < A_5$ , $\frac{A_1 A_3 A_4 A_5+A_1A_3 A_4 A_6-A_4^2 A_6-A_4^2 A_7}{A_1^2 A_3^2 -A_1 A_3 A_4}=328.813 > A_8$ , and $ \frac{(A_1 A_3 A_5 + A_1 A_3 A_6 - A_4 A_6)}{A_4} = 0.4068 < A_7$. 
  However, at $A_7=26.869$, two interior equilibrium points appear through saddle-node bifurcation, of which the one with higher prey density gains stability, but as the said parameter value is decreased further, there is a transcritical bifurcation at $A_7=A_7^T=3.857$, which exchanges the stability of the said interior equilibrium point to first predator free equilibrium point having comparatively low prey density.,For example, reducing the death rate of the second predator to $A_7=1$,  causes the first predator to become extinct as $A_7=1<A_5$ and $\frac{A_2 A_5^2-2A_2A_5A_7+A_2A_7^2+A_3A_5^2-A_3A_5A_7}{A_5 A_7}=7.20982>A_4$ showing stability of the first predator-free equilibrium .

   \begin{figure}[H]
         \centering
         \includegraphics[width=\textwidth]{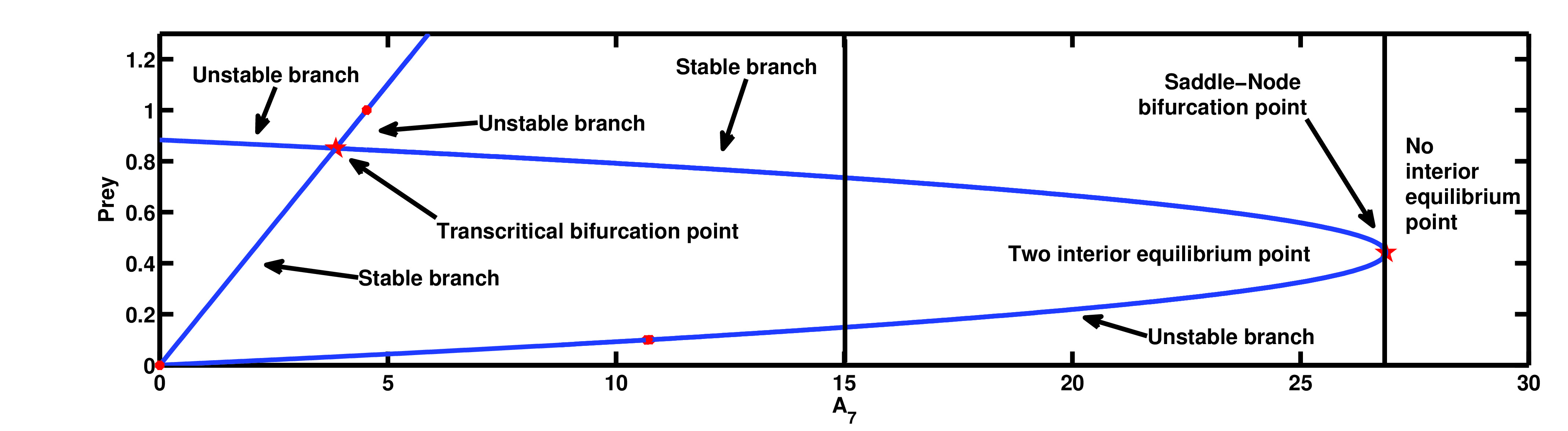}
         \caption{\emph{The manifestation of transcritical bifurcation and saddle-node bifurcation can be observed for parameter $A_7$}}
         \label{A7}
     \end{figure} 
\subsubsection{\emph{Role of kleptoparasitism in changing the qualitative scenario}}
In the biosystem (\ref{b}), $a$ is the parameter related to the rate of kleptoparasitism. When $a \rightarrow \infty $ then $A_1 \rightarrow 0$ , $A_4 \rightarrow 0$, and $\lim_{a\to\infty}\frac{A_4}{A_1+P_2}=\lim_{a\to\infty}\frac{r_2 r_4 k}{(1+a)r_1}=0$ i.e., no growth of the first predator. In other words, when the rate of kleptoparasitism by the second predator is exceptionally high, the growth rate of the first predator declines to zero, and the species eventually disappears from the system. When $a\rightarrow 0$, then $A_1 \rightarrow \infty$ , $A_4 \rightarrow \infty$ and $\lim_{a\to\infty}\frac{A_4}{A_1+P_2}=\lim_{a\to\infty}\frac{r_2 r_4 k}{(1+a)r_1}=\frac{r_2 r_4 k}{r_1}$ i.e., when the kleptoparasitism rate is negligible, the two predators can coexist.

 \begin{figure}[H]
         \centering
         \includegraphics[width=\textwidth]{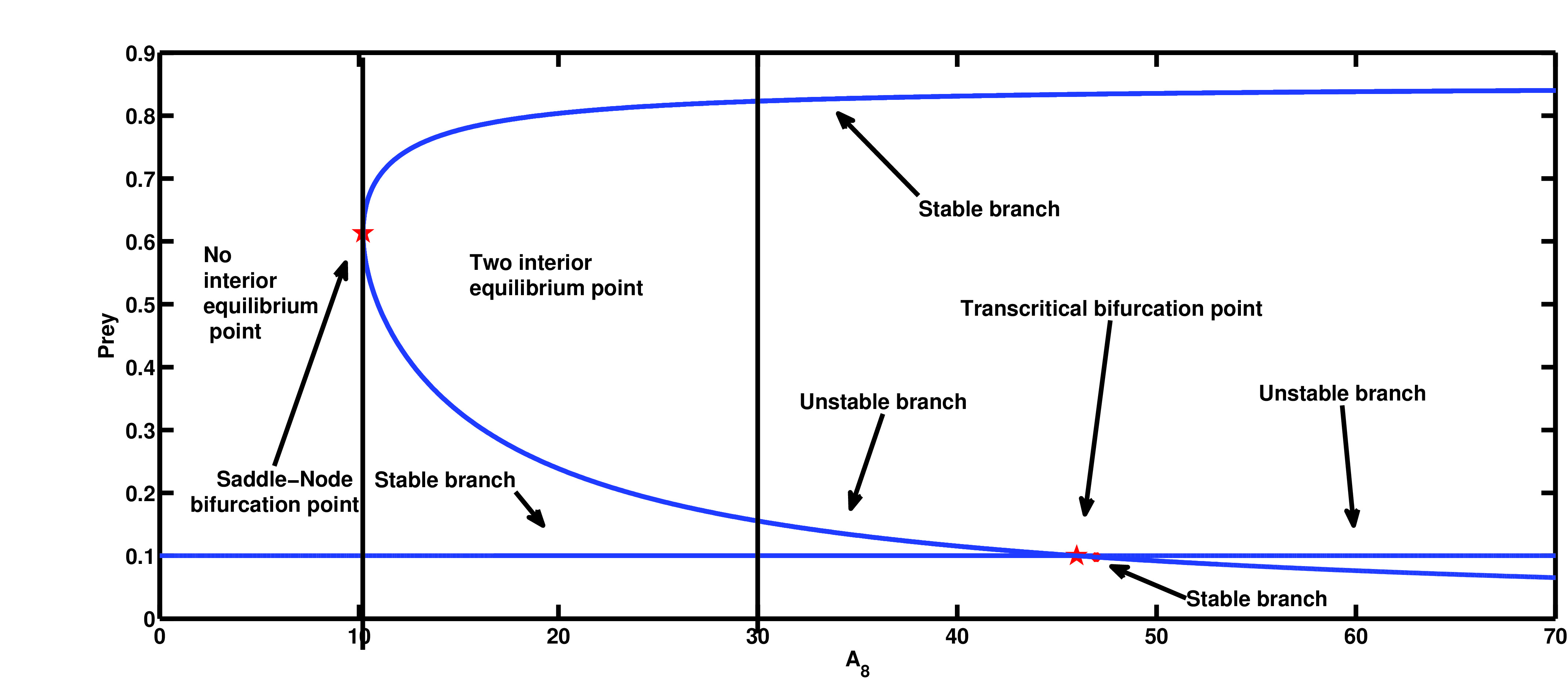}
         \caption{\emph{The equilibrium curves of two interior equilibrium alongwith the second predator-free equilibrium. It also depicts the emergence of saddle-node bifurcation and transcritical bifurcation with respect to parameter $A_8$}}
         \label{A8}
     \end{figure}




$A_8$ is the parameter relating to the conversion rate of the second predator of the food received through kleptoparasitism. When $A_8=0$ and the rest parameter values from table  (\ref{table:1}), then the second predator population continues to decline and disappears from the system after a certain period of time, whereas the first predator and prey species persist having a population density $E_2(0.1,0.9,0)$ through time. This is because $A_1 A_3=0.04<A_4$ , $\frac{A_4 A_6-A_1 A_3 A_6}{A_1 A_3}=0.468 < A_5$ , $\frac{A_1 A_3 A_4 A_5+A_1A_3 A_4 A_6-A_4^2 A_6-A_4^2 A_7}{A_1^2 A_3^2 -A_1 A_3 A_4}=45.9911 > A_8$ , and $  \frac{(A_1 A_3 A_5 + A_1 A_3 A_6 - A_4 A_6)}{A_4} = 0.4068 < A_7$. As $A_8$ is increased, the birth of interior equilibrium occurs through saddle-node bifurcation at the bifurcation parameter value $A_8=A_8^S=10.2$. One of the equilibrium points becomes stable, while the other one is unstable. The equilibrium point $E_2$ still maintains stability making the system bistable between second predator-free equilibrium and interior equilibrium. At $A_8=45.9726=A^T$, a transcritical bifuraction takes place  changing the stability of the coexisting equilibrium point and the second predator-free equilibrium point, and the system becomes bistable due to the local stability behaviour of the two interior equilibrium points.  


\par

\begin{figure}[H]
     \centering
     \begin{subfigure}[F]{0.4\textwidth}
         \centering
         \includegraphics[width=1.2\textwidth]{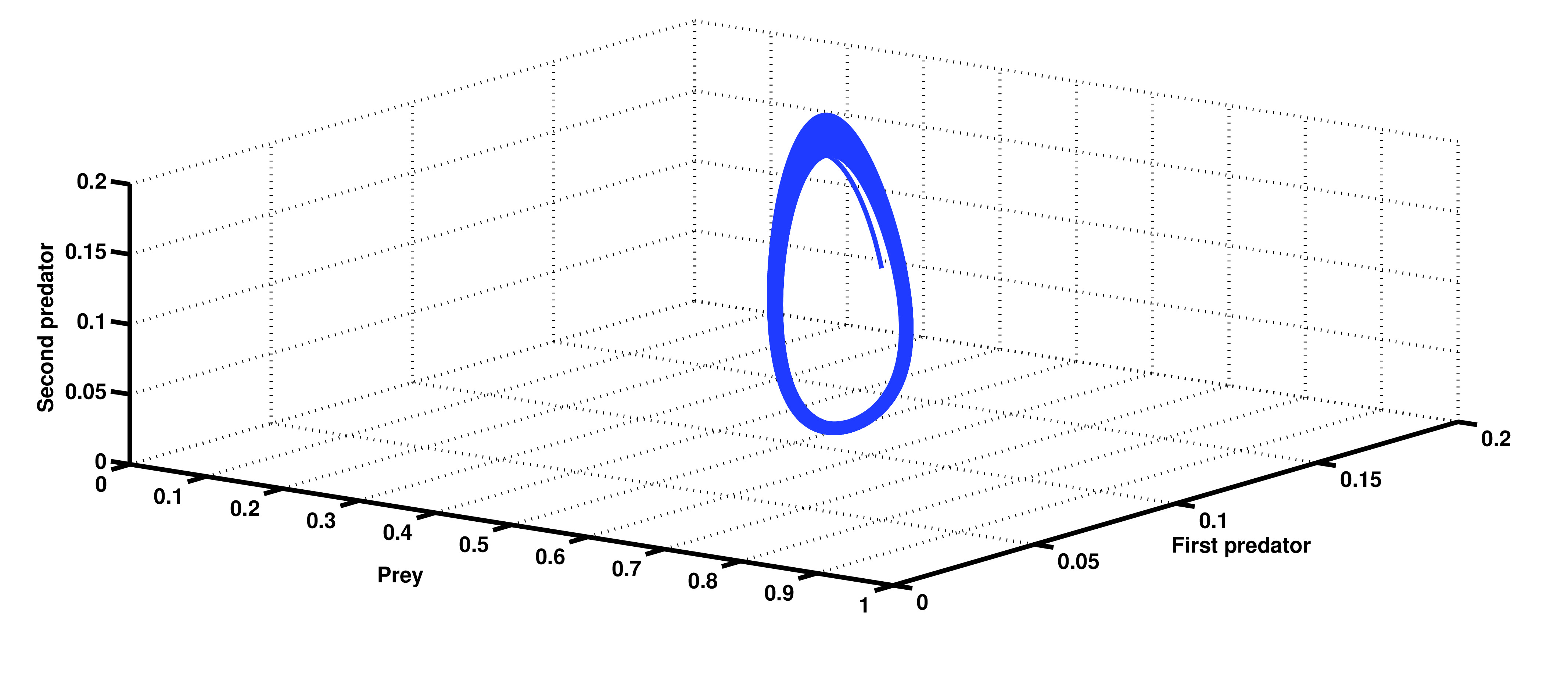}
         \caption{\emph{Phase portrait }}
         \label{ph3}
     \end{subfigure}
     \hfill
     \begin{subfigure}[F]{0.5\textwidth}
         \centering
         \includegraphics[width=\textwidth]{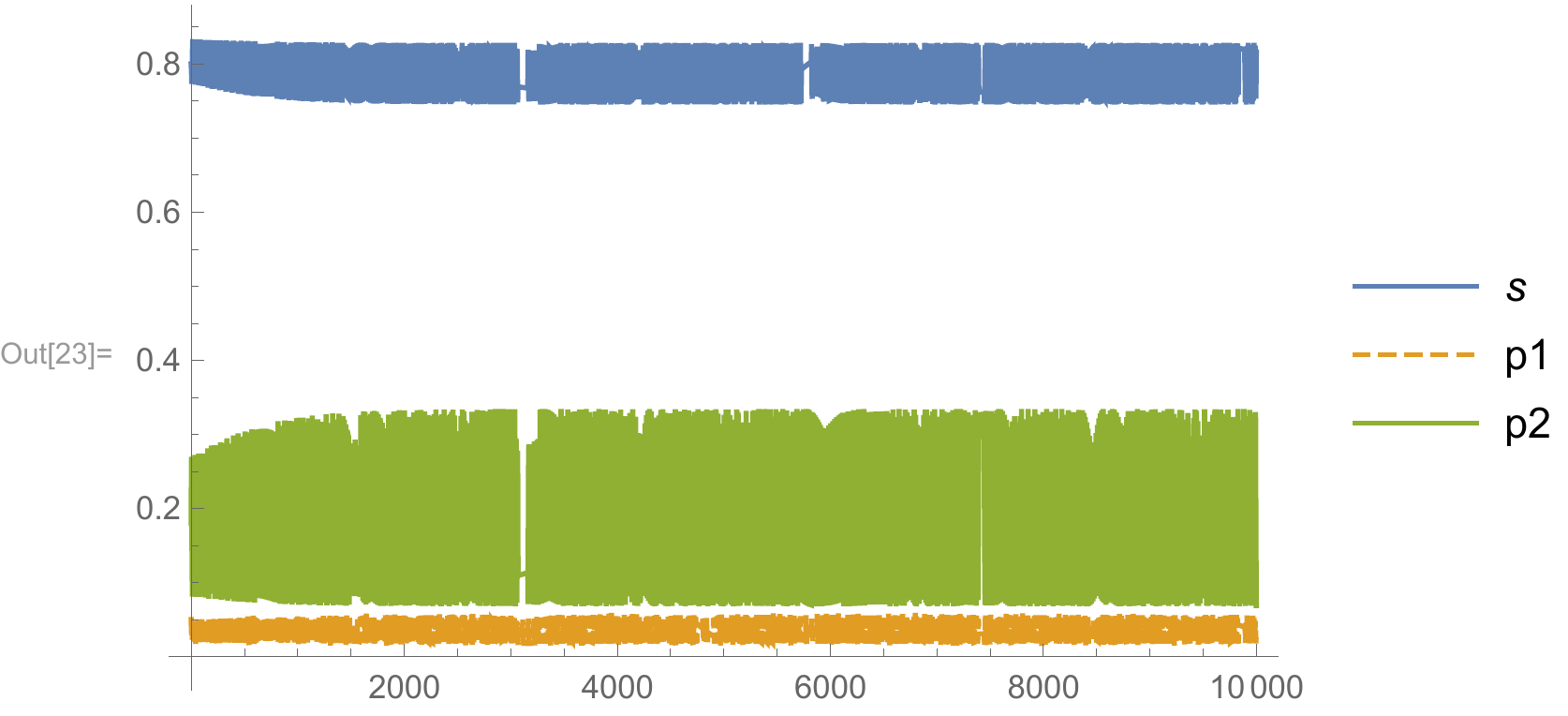}
         \caption{\emph{Time series of all the three populations}}
     \end{subfigure}
        \caption{\emph{An illustration is provided for the scenario occurring at $A_5<A_5^{H*}$, which corresponds to the state prior to the occurrence of a supercritical Hopf bifurcation  at $A_5=A_5^{H*}=2.007197$.
}
}
        \label{before a5s}
\end{figure}
\vspace*{1cm}

\begin{figure}[H]
     \centering
     \begin{subfigure}[F]{0.4\textwidth}
         \centering
         \includegraphics[width=1.2\textwidth]{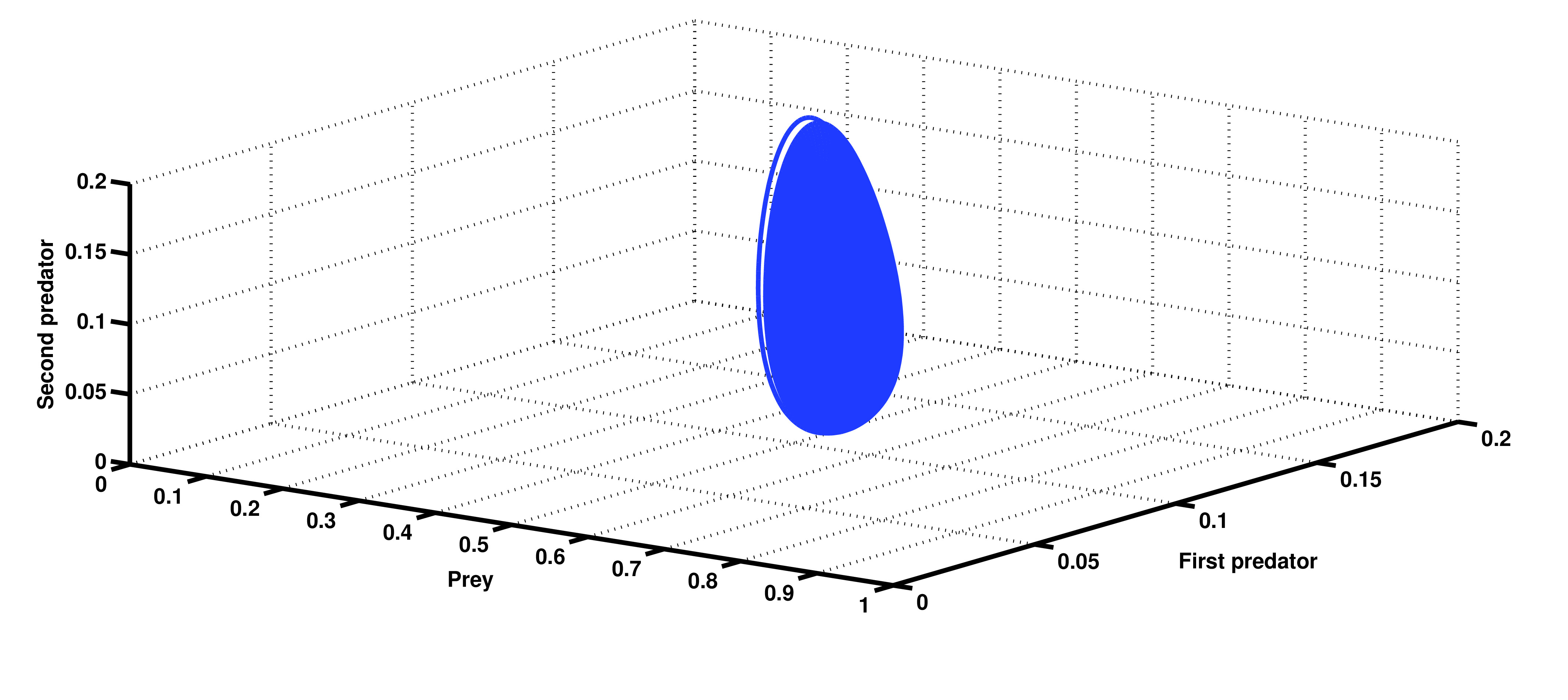}
         \caption{\emph{Phase portrait}}
         \label{ph2}
     \end{subfigure}
     \hfill
     \begin{subfigure}[F]{0.5\textwidth}
         \centering
         \includegraphics[width=\textwidth]{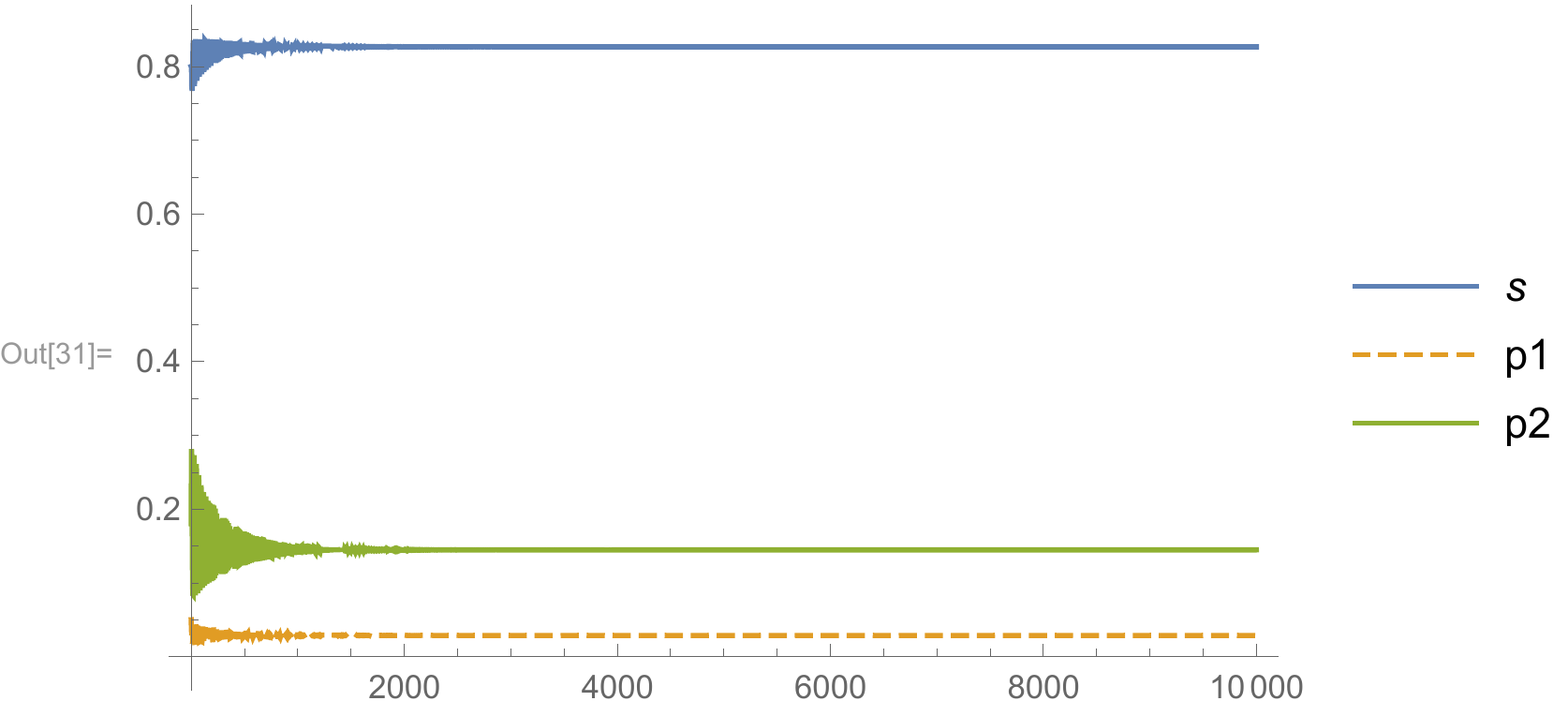}
         \caption{\emph{Time series of all the three populations}}
     \end{subfigure}
        \caption{\emph{A depiction of the scenario occurring at $A_5>A_5^{H*}$, which denotes the state after a supercritical Hopf bifurcation at $A_5=A_5^{H*}=2.007197$
}
}
        \label{after a5s}
\end{figure}
\vspace*{1cm}

\subsection{Numerical simulation of bifurcation scenario of codimension  two}
Figure (\ref{A5A1}) shows the trajectory of the Hopf bifurcation curve when both parameters $A_1$ and $A_5$ are varied simultaneously. As $A_5$ is increased, there occurs one of the codimension two bifurcation, generalized Hopf bifurcation at the parameter value $(A_5,A_1)=(1.0216,0.1367)$, which changes the direction of Hopf bifurcation from super critical to subcritical one. Again, there is another generalized Hopf bifurcation at the parameter value $(A_5,A_1)=(1.4996,0.1045)$ showing a change in the direction of Hopf bifurcation from subcritical to supercritical one. Next, considering different parameters $A_4$ and $A_2$, a Hopf bifurcation curve is drawn in figure(\ref{A4A2}), this curve is strictly increasing as $A_4$ is increased and an existence of one generalized Hopf bifurcation point is seen i.e., there is a change in the direction of Hopf bifurcation from supercritical to subcritical one as $A_4$ is increased. In figure(\ref{A4A5}) , there are two generalized Hopf bifurcation points at the parameter values $(A_5,A_4)=(1.1230,0.2896)$ and $(A_5,A_4)=(3.8733,0.9586)$ along an increasing Hopf bifurcation curve, when both $A_4,A_5$ are varied, and $A_5$ is in the abscissa. Correspondingly, the direction of Hopf bifurcation changes from subritical to supercritical and then again to subcritical one. As $A_3,A_4$ are varied in figure(\ref{A3A4}), the Hopf bifurcation curve again shows two generalized Hopf bifurcation points with the changes in the direction of Hopf bifurcation as explained in figure(\ref{A5A1}).

\begin{figure}[H]
     \centering
     \begin{subfigure}[F]{0.45\textwidth}
         \centering
         \includegraphics[width=\textwidth]{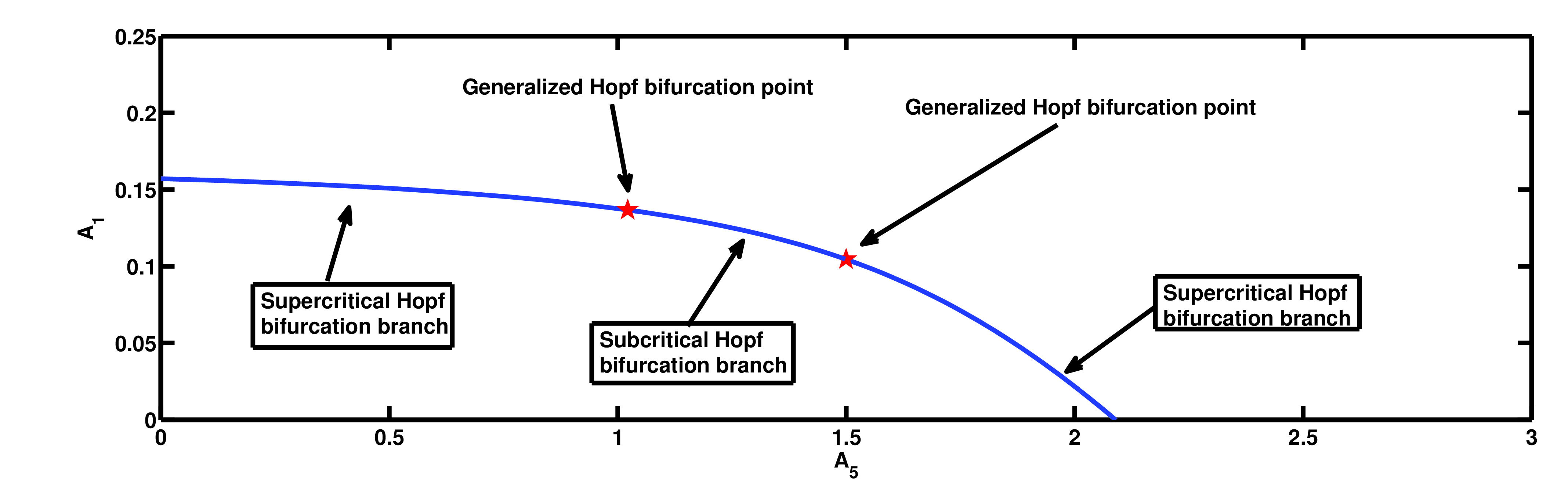}
         \caption{\emph{Hopf bifurcation curve for $A_1$ vs $A_5$}}
          \label{A5A1}
     \end{subfigure}
     \begin{subfigure}[F]{0.45\textwidth}
         \centering
         \includegraphics[width=\textwidth]{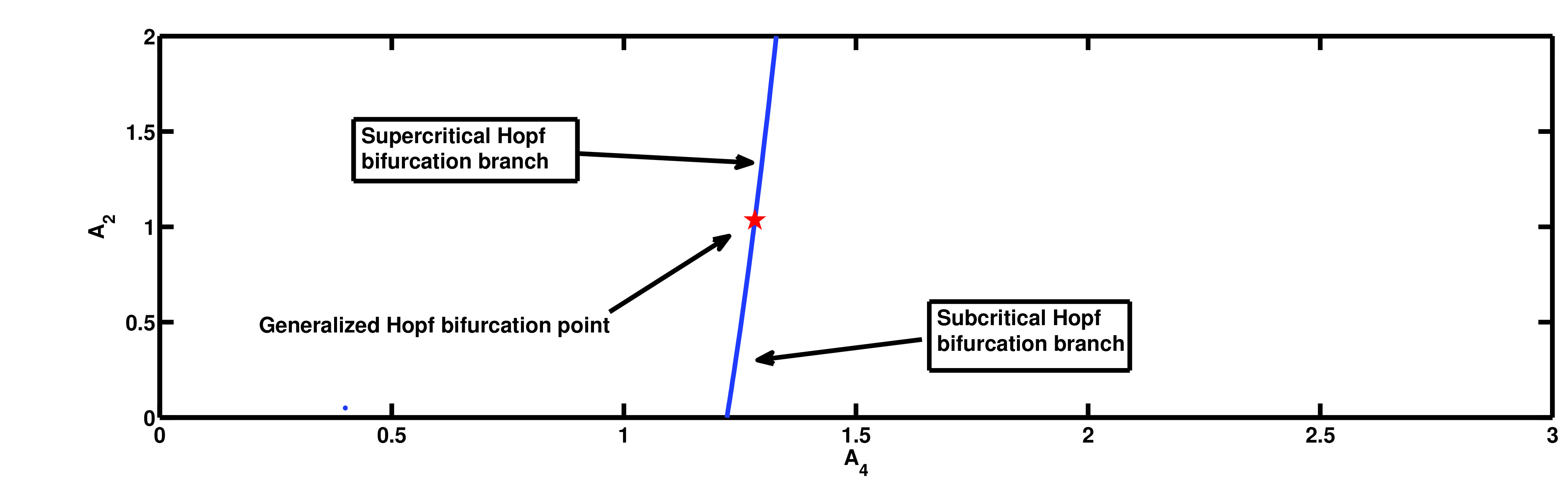}
         \caption{\emph{Hopf bifurcation curve for $A_4$ vs $A_2$}}
          \label{A4A2}
     \end{subfigure}
     \begin{subfigure}[F]{0.45\textwidth}
         \centering
         \includegraphics[width=\textwidth]{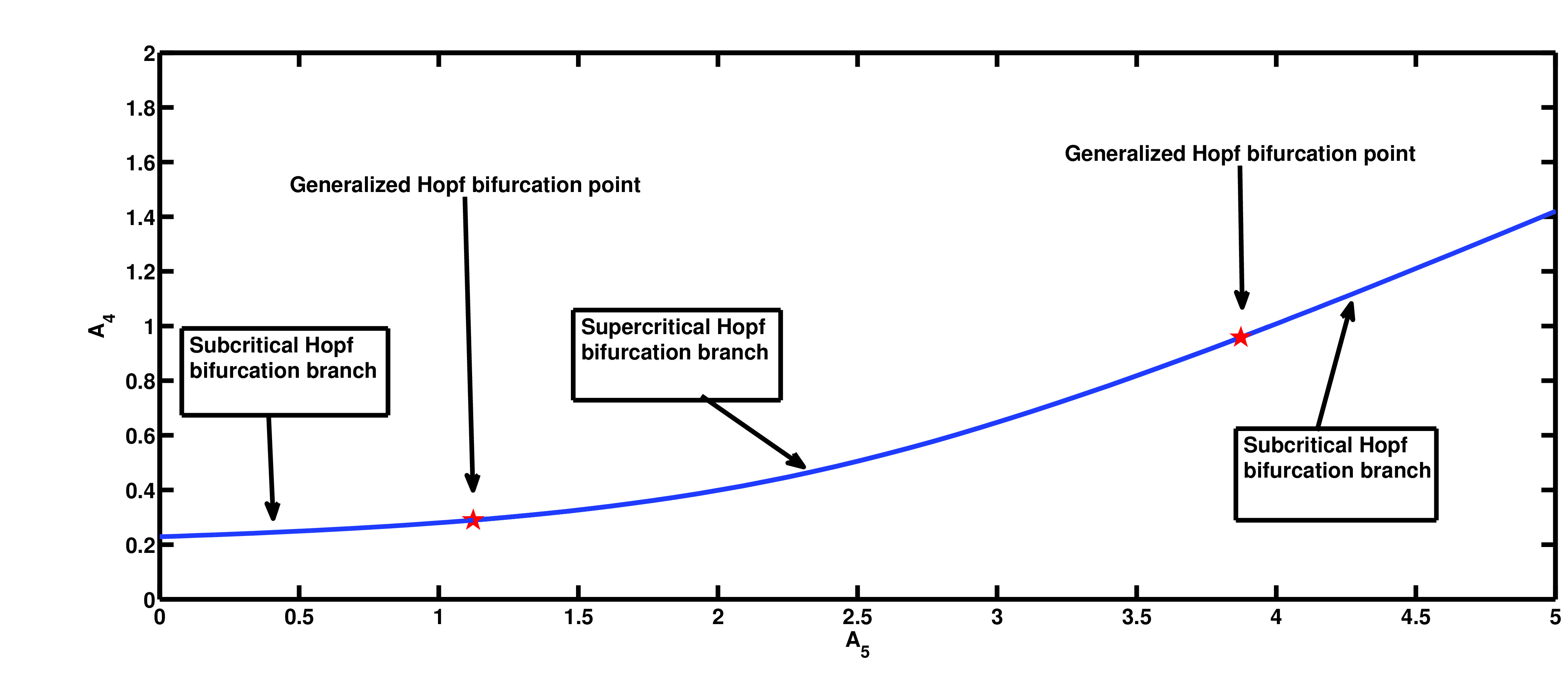}
         \caption{\emph{Hopf bifurcation curve for $A_4$ vs $A_5$}}
          \label{A4A5}
     \end{subfigure}
     \begin{subfigure}[F]{0.45\textwidth}
         \centering
         \includegraphics[width=\textwidth]{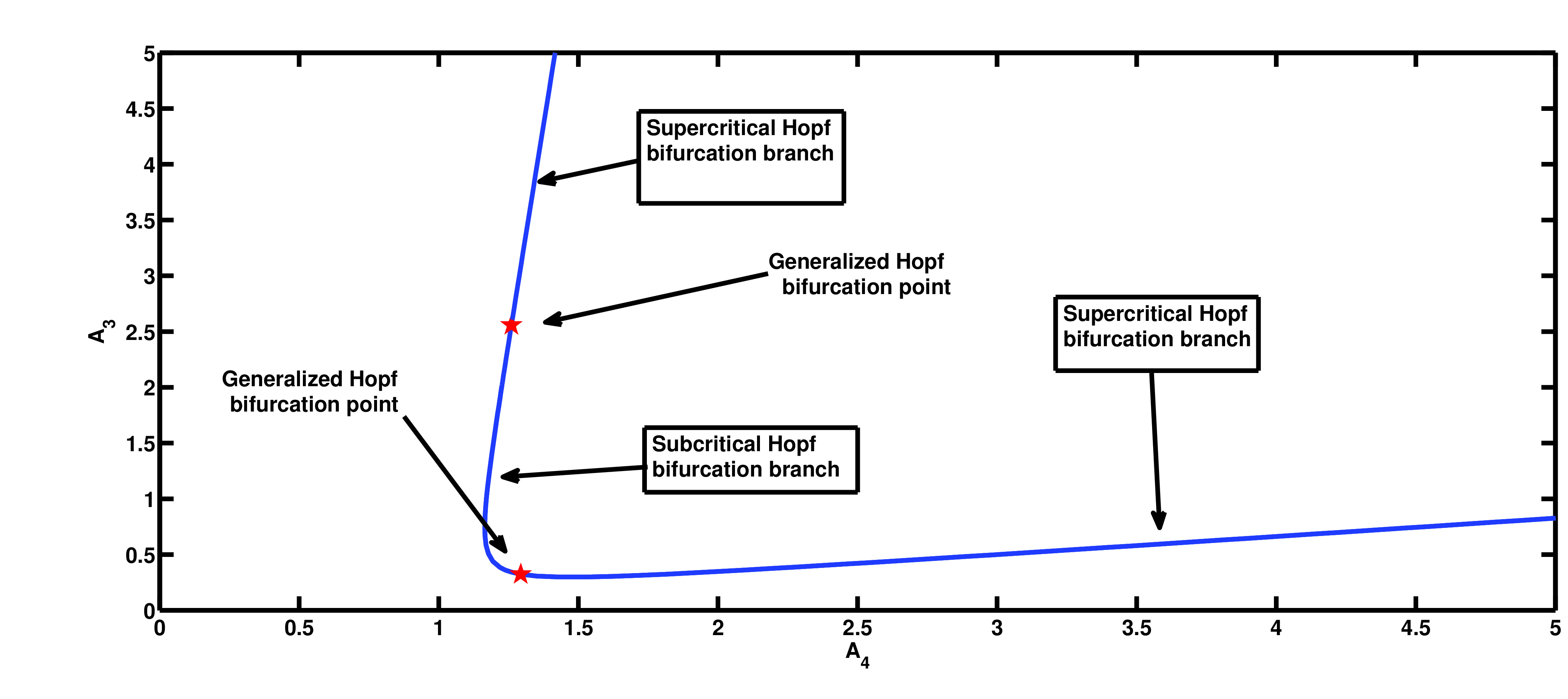}
         \caption{\emph{Hopf bifurcation curve for $A_3$ vs $A_4$}}
          \label{A3A4}
     \end{subfigure}
     
        \caption{\emph{Emergence of generalized Hopf bifurcations}
}
        \label{GH}
\end{figure}
\vspace*{1cm}

\section{Conclusion}
We provided a model of two predators and one prey in the present paper, with Holling type-I functional responses for both predators. 
The two predator species engage in normal interspecific competition. 
Through kleptoparasitism, the first predator furnished the second predator with a supplementary food supply. 
Understanding the role of kleptoparasitism in influencing the population dynamics of the second predator as well as the first predator is one of the most intriguing aspects of this work. Obviously, the second predator is reliant on the first predator for its secondary food source. This paper extracts the dependency scenario of the second predator on the first predator. In this study, the secondary food amount is the result of a post-kelptoparasitism scenario among the predators, which has received little attention in the literature.

\par The biosystem (\ref{c}) demonstrates the possibility of five different types of ecologically feasible equilibrium points. 
A trivial equilibrium point that represents an absence of all species in the system is not the ideal situation from an ecological perspective. 
As both predators are specialist predators, the probable realisation of trivial equilibrium is the overeating of the prey by either or both predators, resulting in the extinction of the prey first, followed by the extinction of both predators owing to a lack of food source. 
Because the trivial equilibrium point is unstable, the long-term result most likely involves the continuous presence of prey species.
Further, if both predators become extinct, the biosystem allows prey species to flourish up to the environmental carrying capacity. This is due to the stability of the axial equilibrium under certain parametric conditions.
Stability conditions for the axial equilibrium point ecologically interpret that the likelihood of reaching the axial equilibrium point increases as the ratio of the first predator's growth rate to its death rate approaches a smaller value and as the second predator's death rate exceeds its growth rate. This is equally true from an ecological standpoint. Surprisingly, interspecific competition between the two predator species plays no role in driving a population trajectory to axial equilibrium.

\par The first boundary equilibrium point depicts a second predator-free system in which the first predator and prey survive but the second predator disappears. One of the necessary requirements for the stability of this equilibrium point is that the first predator's growth rate must be greater than its death rate. It can be seen from the stability conditions of this equilibrium point that the parameter relating to the second predator's loss in growth rate due to competition has a direct impact on this equilibrium's stability. However, it is interesting to note that the loss of the first predator due to competition has little influence on the second predator free equilibrium's stability.


\par The second boundary equilibrium point depicts the first predator-free scenario, in which the second predator and prey species survive but the first predator becomes extinct. The growth rate of the second predator owing to the primary food resource should be greater than its own mortality rate, as the secondary growth rate due to kleptoparasitism will be ineffective due to the extinction of the first predator. Consequently, the second predator relies on a food source that is independent of the first predator in order to ensure its own existence.

Existence and stability of a coexisting equilibrium point represents the optimal outcome for a biosystem. 
Ecologically, a coexisting equilibrium point depicts a scenario in which all of the species in a biosystem may coexist indefinitely. In our model, we have determined analytically and numerically the existence and stability of the coexisting equilibrium point. It establishes the fact that in our system, all three species can coexist under certain conditions.
If the first predator's growth rate is lower than its death rate, the first predator cannot survive because it lacks an alternate food source. Coexistence is therefore impossible.  

\par The mortality rate of a species is critical to sustaining coexistence in a biosystem. When the per capita death rate of one species continues to rise, the equilibrium density of that species declines while the equilibrium density of its competitor increases\cite{Abra}. Disease-related death, human-induced death, and other factors can all contribute to an increase in a predator's death rate. In this model, it can be seen that when the death rate of the second predator is less than its growth rate and the growth rate of the first predator is very low, the survival of the second predator depends entirely on its ability to hunt its prey, and kleptoparasitism doesn't help the second predator very much. However, if the second predator's death rate is higher than its growth rate, it will not be able to maintain its biomass through predation alone and will instead need to rely on kleptoparasitism, making it highly dependent on the first predator population. When the latter declines, the second predator's biomass will decline as well. Also, it is evident that at very high death rates for second predators, they cannot survive and hence vanish, as observed in this system.
The occurrence of saddle-node bifurcation for $A_7$ which can be seen in figure (\ref{A7}) provides the highest threshold value of $A_7$ under the parametric conditions set in table  (\ref{table:1}) beyond which coexistence is impossible. However, below this threshold, kleptoparasitism contributes to the maintenance of coexisting equilibrium. When the second predator's death rate is very low and growth rate is greater than death rate, then both predators' survival is dependent on their functional responses, and thus, due to an increase in interspecific competition between the predators, the first predator vanishes from this system, triggering a transcritical bifurcation for the parameter $A_7$ as shown in figure (\ref{A7}), changing the stability of the coexisting equilibrium point $E^*$, and as a result, $E^*$ becomes unstable and $E_3$ becomes stable.
When the death rate of the first predator i.e., $A_3$ is sufficiently high, the system undergoes a transcritical bifurcation, which changes the stability of the equilibrium points $E^*$ and it becomes unstable, and the axial equilibrium point $E_1$ becomes stable, which means ecologically that first predator will vanish, but alongwith it second predator will also extinct. However, when the first predator's death rate ($A_3$) is exceedingly low, a saddle-node bifurcation occurs in the system  as shown in figure (\ref{A3}), indicating the minimal threshold value of $A_3$ below which coexistence of all three species becomes unfeasible. This represents the scenario in which first predator overeats prey and as a consequence, the ecosystem either collapses or reaches axial equilibrium.

\par Hopf bifurcation does not occur around the equilibrium points $E_1,E_2,E_3$, implying that periodic oscillations in population dynamics do not occur when one or both predators become extinct.

\par Many occurrences of generalised Hopf bifurcation are found in this system which can be seen in  as shown in figure (\ref{GH}). Using the generalised Hopf bifurcation, the second Lyapunov coefficient can be used to calculate the relative position of the stable and unstable limit cycles.

\par A species' growth rate is also a significant aspect in determining its long-term survival. In our biosystem, $A_4$ is related to the first predator's growth rate. The  figure (\ref{A4}) shows that for a very low value of $A_4$, if the mortality rate of the second predator is greater than its growth rate, the system undergoes a transcritical bifurcation in which coexisting equilibrium loses its stability and axial equilibrium becomes stable. At $A_4=1.2252$, the system goes through a subcritical Hopf bifurcation, in which a coexisting equilibrium changes its stability and goes from stable to unstable  which can be seen in figures (\ref{A4}),(\ref{before a4}),(\ref{after A4}).  Subcritical Hopf bifurcation can be identified by the limit cycle that results from the bifurcation that is unstable and overlaps the steady state in parameter space. In this scenario, the three species populations reach the stable coexisting equilibrium point with a damped periodic oscillation, which is conceivable due to the attraction of the attractive equilibrium point and repelling limit cycle. From figures (\ref{A4}),(\ref{before a4s}),(\ref{after a4s}), it can be seen that at $A_4=12.071019$, a supercritical Hopf bifurcation occurs, changing the stability of the coexisting equilibrium point. As a result, the coexisting equilibrium point becomes stable until a saddle-node bifurcation occurs, giving the threshold maximum value of the parameter $A_4$, after which no coexisting equilibrium can be found. A reduction in the value of $A_4$ below the supercritical Hopf bifurcation point destabilises the steady coexisting equilibrium point, and the initial coexisting population begins to fluctuate periodically due to the birth of a stable limit cycle, and as the parameter decreases further, the diameter of the stable limit cycle rises, and due to the instability of the coexisting equilibrium point, any coextant population nearby to the equilibrium point begins to fluctuate with a variable periodicity and, in the long run, attains a stable periodicity and continue to fluctuate in that fixed periodicity as the trajectory reaches the stable limit cycle. If any population starts its journey close to the stable limit cycle but outside of it, it will still reach the same level and reach a fluctuation with a definite period in the long run. Also, as the value of the parameter $A_4$ increases, the prey population continues to decline, the first predator gradually increases, and the second predator gradually diminishes until it vanishes from the system due to a transcritical bifurcation. The parameter $A_5$ is related to the second predator's growth rate from food gained via hunting prey.
As the value of $A_5$ increases, the population of prey and the second predator continues to slowly increase, but the population of the first predator continues to decrease, and after a period of time, it disappears from the system as a result of a transcritical bifurcation as shown in figures (\ref{A5}). A supercritical Hopf bifurcation occurs at $A_5=2.007197$, and the value of the first Lyapunov coefficient becomes negative. It shifts the stability of the coexisting equilibrium point to produce a stable periodic oscillation in the population of all three species in the long term which can be seen from figures (\ref{A5}),(\ref{before a5s}),(\ref{after a5s}) .

\par One of the important feeding strategies of a predator is kleptoparasitism. Kleptoparasitism plays a crucial role in sustaining coexistence in this system. Kleptoparasitism is critical to maintaining coexistence in this system. When the rate of kleptoparasitism(\emph{a}) by the second predator is high, it can be seen that the growth rate of the first predator goes to zero, which is what happens in nature. Due to the high kleptoparasitism rate of the second predator, practically every prey killed by the first predator is stolen by the second predator, leaving nothing for the first predator to devour, and since predation is the only way for the first predator to survive, the growth rate becomes zero. Consequently, the first predator species becomes extinct. Now, if the death rate exceeds the growth rate of the second predator, it perishes alongside the first predator, and the axial equilibrium point becomes stable. However, if the growth rate exceeds the death rate of the second predator, it can coexist with the prey species and the equilibrium point $E_3$ becomes stable. When the rate of kelptoparasitism is very low, i.e. when nearly no food of the first predator is stolen by the second predator, then $\lim_{a \to 0} A_8=0$, indicating that kleptoparasitism does not contribute much to the second predator's growth rate, and so the survival of both predator species is dependent on their functional responses. Thus, depending on the rates of interspecific competition, cohabitation between all three species is conceivable. These scenarios are highly typical from an ecological standpoint, proving that our model is very accurate in describing these specific ecological scenarios. Here, the parameter $A_8$ is related to the growth rate of the second predator due to the food obtained by kleptoparasitism. The appearance of saddle-node bifurcation at $A_8= 10.20$ yields the lower-limit of this parameter below which coexistence is impossible as shown in figures (\ref{A8}). In a nutshell, it is clear from the dynamical scenario of the model (\ref{c}) discussed above that kleptoparasitism growth-related parameter $A_8$ is crucial in determining the saddle-node and transcritical bifurcations of the coexisting equilibrium point, which in turn affects the stability of the interior equilibrium.

\end{document}